\date{}
\tikzset{
  big blue arrow/.style={
    decoration={markings,mark=at position 1 with {\arrow[scale=2,#1]{>}}},
    postaction={decorate},
    shorten >=0.4pt},
  big blue arrow/.default=blue}
  \tikzset{
  big red arrow/.style={
    decoration={markings,mark=at position 1 with {\arrow[scale=2,#1]{>}}},
    postaction={decorate},
    shorten >=0.4pt},
  big red arrow/.default=red}
    \tikzset{
  big arrow/.style={
    decoration={markings,mark=at position 1 with {\arrow[scale=2,#1]{>}}},
    postaction={decorate},
    shorten >=0.4pt},
  big arrow/.default=black}
\crefname{figure}{Figure}{Figures}
\numberwithin{equation}{section}
\renewenvironment{abstract}{%
    \if@twocolumn
      \section*{\abstractname}%
    \else %% <- here I've removed \small
      \begin{center}%
        {\bfseries \normalsize\abstractname\vspace{\z@}}%  %% <- here I've added \Large
      \end{center}%
      \quotation
    \fi}
    {\if@twocolumn\else\endquotation\fi}
\g@addto@macro\bfseries{\boldmath}
\let\originalleft\left
\let\originalright\right
\renewcommand*{\left}{\mathopen{}\mathclose\bgroup\originalleft}
\renewcommand*{\right}{\aftergroup\egroup\originalright}
\renewcommand*{\to}{\mathchoice{\longrightarrow}{\rightarrow}{\rightarrow}{\rightarrow}}
\newtheorem{theorem}{Theorem}[section]
\newtheorem{lemma}[theorem]{Lemma}
\newtheorem{corollary}[theorem]{Corollary}
\theoremstyle{definition}
\newtheorem{definition}[theorem]{Definition}
\newtheorem{remark}{Remark}[theorem]
\newcommand\G{\mathbb{G}}
\newcommand{\ra}{\rightarrow}
\newcommand\A{\mathbb{A}}
\definecolor{Vero}{RGB}{255,215,0}
\definecolor{SO}{RGB}{0,40,255}
\newcommand*{\diff}{\@ifnextchar^{\DIFF}{\DIFF^{}}}
\def\DIFF^#1{\mathop{\mathrm{\mathstrut d}}\nolimits^{#1}\gobblespace}
\def\gobblespace{\futurelet\diffarg\opspace}
\def\opspace{%
    \let\DiffSpace\!%
    \ifx\diffarg(%
        \let\DiffSpace\relax
    \else
        \ifx\diffarg[%
            \let\DiffSpace\relax
        \else
            \ifx\diffarg\{%
                \let\DiffSpace\relax
            \fi
        \fi
    \fi
    \DiffSpace
}
\title{
\Huge Intersection Theory on Weighted Blowups of F-theory Vacua}
\author{\Large Veronica Arena\thanks{veronica\_arena@brown.edu}}
\affil{\normalsize \emph{Department of Mathematics, Brown University, 151 Thayer Street, Box 1917, Providence, RI 02912}}
\author{\Large Patrick Jefferson\thanks{pjeffers@mit.edu}}
\affil{\normalsize \emph{Center for Theoretical Physics, Department of Physics, Massachusetts Institute of Technology, 77 Massachusetts Avenue, Cambridge, MA 02139, USA}}
\author{\Large Stephen Obinna\thanks{stephen\_obinna@brown.edu}}
\affil{\normalsize \emph{Department of Mathematics, Brown University, 151 Thayer Street, Box 1917, Providence, RI 02912}}
\begin{document}

\maketitle
\begin{tikzpicture}[remember picture,overlay]
   \node[anchor=north east,inner sep=0pt] at (current page.north east)
              {$\begin{array}{ccc}&&\\ \\ \text{MIT-CTP-5551}&&\end{array}$};
\end{tikzpicture}
\thispagestyle{empty}
\begin{abstract}
\noindent Generalizing the results of \texttt{1211.6077} and \texttt{1703.00905}, we prove a formula for the pushforward of an arbitrary analytic function of the exceptional divisor class of a weighted blowup of an algebraic variety centered at a smooth complete intersection with normal crossing. We check this formula extensively by computing the generating function of intersection numbers of a weighted blowup of the generic SU(5) Tate model over arbitrary smooth base, and comparing the answer to known results. Motivated by applications to four-dimensional F-theory flux compactifications, we use our formula to compute the intersection pairing on the vertical part of the middle cohomology of elliptic Calabi-Yau 4-folds resolving the generic F$_4$ and Sp(6) Tate models with non-minimal singularities. These resolutions lead to non-flat fibrations in which certain fibers contain 3-fold (divisor) components, whose physical interpretation in M/F-theory remains to be fully explored. 
\end{abstract}
\flushbottom
\newpage
\tableofcontents
\addtocontents{toc}{\protect\thispagestyle{empty}}
\setcounter{page}{1}

\section{Introduction and Summary of Results}
\label{sec:intro}

String compactifications---vacuum solutions of string theory in which the extra spatial dimensions are curled into compact geometries---are central to efforts to connect string theory to the observable universe, as they in principle provide a means to explain why the extra dimensions of string theory have thus far remained experimentally undetected. A particularly tractable and well-controlled class of vacuum solutions is given by compactifications on Calabi-Yau (CY) $n$-folds, which satisfy the vacuum Einstein equations and preserve fractional supersymmetry in approximate low energy descriptions of string theory \cite{JoyceCompact}.\footnote{In principle, we define a CY $n$-fold to be a Ricci-flat compact K\"ahler manifold with complex dimension $n$ and global holonomy contained in SU$(n)$. In practice, however, we adopt a less restrictive definition and take a CY $n$-fold to be a projective variety with vanishing first Chern class. The vanishing of the first Chern class implies the existence of a finite cover with global holonomy contained in SU($n$) and is moreover a straightforward condition to implement mathematically. Note that various related, but distinct definitions of CY manifolds appear throughout the literature \cite{TianYau1,TianYau2,GHJ}.} 

Although smooth CY $n$-folds are challenging to construct explicitly, one of the insights to emerge from early work on string compactifications on toroidal orbifolds \cite{Dixon:1985jw} was that CY $n$-folds can often be constructed easily in singular limits---for example, a K3 surface (which is a CY 2-fold) can be constructed simply as an orbifold $T^4/\mathbb Z_2$ that is singular at 16 fixed points (see, e.g., \cite{Vafa:1997pm}). Thus, singular CY $n$-folds provide a convenient starting point for studying interesting (and what turns out to be rather large) classes of string vacua. However, there is a price to be paid for studying singular compactifications rather than smooth ones---for many applications, one must also identify a suitable means of ``resolving'' the singularities in order to reach a limit of the theory that admits a conventional low energy description in supergravity \cite{Aspinwall:1994ev}. This is because the standard procedure for determining the relevant kinematic properties of the low energy approximate description of a string compactification is Kaluza-Klein reduction, which is most easily performed by integrating fields over the compact internal space using classic techniques from complex geometry \cite{becker_becker_schwarz_2006}.

A useful mathematical setting in which to study singular CY $n$-folds is algebraic geometry, where one typically constructs a CY $n$-fold as a complete intersection of hypersurfaces in a projective variety. An advantage of constructing CY $n$-folds in this manner is that singularities of algebraic varieties can be systematically resolved by means of a sequence of transformations called `blowups'.\footnote{Roughly speaking, a blowup is a type of birational transformation that replaces a choice of subspace of an algebraic variety with the directions pointing out of that subspace, i.e., a projective space. A blowup of an algebraic variety $Y$ that replaces a subvariety $Z \subset Y$ with a projective space is said to be `centered' at $Z$. A more detailed discussion of blowups appears in \cref{sec:toricweighted}.} However, in order to ensure that the resulting geometry retains its physical interpretation as a string vacuum solution, any sequence of blowups must be chosen to ensure that the resulting space is also a CY $n$-fold; a blowup or sequence or blowups satisfying this condition is called `crepant' \cite{reid1983minimal}. This requirement places severe restrictions on the possible blowups that can be performed on singularities of CY in string compactitications.\footnote{For certain singularities, in fact, it has been argued that no finite sequence of blowups preserving the CY condition exists \cite{Arras:2016evy}.}

Although ordinary crepant blowups can in many cases be quite difficult to obtain, there fortunately exists a well-known and useful generalization called `weighted' blowups, in which each of point of a subspace of an algebraic variety is replaced, more generally, with a weighted projective space \cite{pub.1089196667,10.2307/2152704}. Weighted blowups have been used to identify crepant resolutions of numerous examples of CY singularities that would be difficult or perhaps impossible to resolve by means of ordinary blowups (see, e.g., \cite{Candelas:2000nc,Esole:2014hya}). For singular CY $n$-folds defined as complete intersections of hypersurfaces in projective toric varieties, weighted blowups can be implemented quite naturally and have been used extensively to compute topological properties of their resolutions, relevant to the physics of string compactifications on such spaces \cite{Braun:2014kla}. By comparison, available techniques for studying topological properties of weighted blowups of non-toric algebraic varieties have not been explored as extensively, and this has somewhat restricted the scope of singular string compactifications that can be usefully explored.

Let us elaborate on some of the topological properties of resolved CY that one may wish to study. For concreteness, let $X' \rightarrow X$ be the resolution of a singular CY $n$-fold $X$. One property of smooth CY $n$-folds relevant for understanding the physics of string compactifications are topological intersection numbers of codimension-one algebraic cycles, called `divisors'. Given a basis of divisors $\hat D_I \subset H_{n-1,n-1}(X')$, the topological intersection numbers of $X'$ can be expressed as integrals
	\begin{align}
		\int_{X'} \text{PD}(\hat D_{I_1}) \wedge \cdots \wedge \text{PD}( \hat D_{I_n}),
	\end{align}
where `PD' indicates the Poincar\'e dual and hence $\text{PD}(\hat D_I) \in H^{1,1}(X')$. Integrals of the above form determine, for example, the coupling constants of topological terms appearing in the supergravity action describing M-theory compactifications on CY $3$-folds and $4$-folds; see the introduction of \cite{Jefferson:2022xft} and references therein for a more detailed discussion. When $X' \subset Y'$ is a complete intersection $X' = X'_1 \cap \cdots \cap X'_k$ in the blowup of an algebraic variety, $Y' \rightarrow Y$, and moreover a suitable basis of divisors of $Y'$ is known (in terms of which the divisors of $X'$ can also be realized as complete intersections $\hat D_I =  D_I \cap  X'$), then the above intersection numbers can be expressed as 
	\begin{align}
	\label{eqn:intY}
			\int_{Y'} \text{PD}( D_{I_1}) \wedge \cdots \wedge \text{PD}( D_{I_d}) \wedge \text{PD}( X') = \boldsymbol D_{I_1} \cdots \boldsymbol D_{I_n}  \boldsymbol X',
	\end{align}
where on the right-hand side of the above equation, the intersection numbers are evaluated as products of the divisor classes $\boldsymbol D_I, \boldsymbol X'$ in the Chow ring of $Y'$, which is a ring that encodes the intersection properties of various subvarieties in $Y'$. Thus, the problem of computing the topological intersection numbers of a resolution $X' \rightarrow X$ of a CY complete intersection $X \subset Y$ can be solved by finding a means to evaluate the right-hand side of \cref{eqn:intY} after performing a sequence of crepant blowups $Y' \rightarrow Y$ sufficient to resolve $X'$. However, the details of the computation necessary to evaluate the right-hand side of \cref{eqn:intY} depend sensitively on the nature of the blowups comprised by the resolution. For an ordinary blowup of an algebraic variety $Y$ centered at a complete intersection of hypersurfaces in $Y$, there exist powerful techniques for computing the pushforwards of intersection products of divisors $Y'$ to the Chow ring of $Y$ \cite{fulton,Fullwood:2011zb,Fullwood:2012kj,Esole:2017kyr}, thus making it possible to express topological intersection numbers of $Y'$ in terms of the topological intersection numbers of $Y$.\footnote{See also \cite{Esole:2018tuz,Esole:2018bmf}.} By contrast, robust techniques for computing the intersection numbers of weighted blowups of algebraic varieties $Y$ do not seem to be available outside of the special case that $Y$ is a toric variety.

In this paper, we present a generalization of available techniques for computing topological intersection numbers of weighted blowups of algebraic varieties, which does not rely on the assumption that the underlying algebraic variety is toric. The central result used throughout this paper can be stated as follows: Let $f$ be a weighted blowup of an algebraic variety $Y$,
	\begin{align}
		f: Y' \rightarrow Y,
	\end{align}
 centered at a complete intersection of smooth hypersurfaces $Z = Z_1 \cap \dots \cap Z_d \subset Y $ with weights $w_1, \dots, w_d$, and let $E$ be the exceptional divisor of the blowup. We prove (see \cref{main} in \cref{sec:push}) the following formula for the pushforward of the intersection numbers of the Chow ring class $\boldsymbol E$ to the Chow ring of $Y$:
 	\begin{align}
	\label{eqn:mainformula}
\boxed{f_* \boldsymbol E^n = (-1)^{d+1} h_{n-d} \left( \frac{\boldsymbol Z_1}{w_1}, \dots, \frac{\boldsymbol Z_d}{w_d}\right)  \frac{\boldsymbol Z_1}{w_1} \cdots \frac{\boldsymbol Z_d}{w_d}}.
	\end{align}
In the above equation, $h_m(x_1,\dots,x_k)$ is the complete homogeneous symmetric polynomial of degree $m$ in the variables $x_{i=1,\dots,k}$ (with the convention that $h_m$ is identically zero for $m<0$ and $h_0 =1$), and $\boldsymbol Z_i$ are the Chow ring classes of the hypersurfaces $Z_i \subset Y$. Essentially, \cref{eqn:mainformula} converts an unknown intersection product in the Chow ring of $Y'$ to a known intersection product in the Chow ring of $Y$. We also provide a straightforward adaptation of \cref{eqn:mainformula} to arbitrary analytic functions of the class $\boldsymbol E$, which by definition can be represented as power series. These formulae can be used to determine the intersection numbers of the proper transform $X'$ of a complete intersection $X \subset Y$, and have a number of immediate applications in string theory.

In addition to proving \cref{eqn:mainformula} and performing a number of detailed cross checks, we focus on applications of \cref{eqn:mainformula} to a particular branch of string theory called `F-theory'\footnote{See \cite{Weigand:2018rez} for an excellent overview of the state of the art in F-theory compactifications.}, which formulates vacuum solutions of type IIB string theory with seven-branes and orientifolds as 12-dimensional geometric compactifications on elliptically-fibered CY varieties. Specifically, we study the topological intersection numbers of resolved elliptically-fibered CY 4-folds, $X' \rightarrow X$, in which the resolutions comprise weighted blowups.  The broad physics motivation for studying these models is to better understand the intricate interplay between the so-called `vertical' middle cohomology $H^{2,2}_{\text{vert}}(X')$ \cite{Greene:1993vm} generated by the (2,2)-forms $\text{PD}(\hat D_I) \wedge \text{PD}(\hat D_J)$ on smooth elliptically-fibered CY 4-folds and the physics of F-theory compactifications in the presence of classical magnetic gauge flux profiles, which are essential for constructing vacuum solutions whose low energy matter spectra contain chiral fermions, potentially relevant to the phenomenology of the Standard Model of particle physics. In particular, since the set of background magnetic flux profiles responsible for determining the four-dimensional chiral matter spectrum can be associated a particular sublattice of $H^{2,2}_{\text{vert}}(X')$, one can use the quadruple intersection numbers of $X'$ to determine the restriction of the intersection pairing $H^4(X') \times H^4(X') \rightarrow \mathbb Z$ to this sublatttice as a means to determine various mathematical properties of the sublattice of flux backgrounds relevant for understanding the chiral matter spectra of various F-theory vacua \cite{Jefferson:2021bid}. To this end, we use \cref{eqn:mainformula} to carry out the first known computation of the quadruple intersection numbers of weighted blowups of two specific families of singular elliptically-fibered CY 4-folds, namely the F$_4$ and Sp($6$) models\footnote{The F$_4$ and Sp($6$) models are singular Weierstrass models characterized by the presence of (resp.) $\text{IV}^{*\text{ns}}$ and $\text{I}_6^{\text{ns}}$ Kodaira singularities over a smooth divisor in the base of the elliptic fibration \cite{Katz:2011qp}. According to the physical interpretation of F-theory, these singularities indicate the presence of seven-brane stacks carrying F$_4$ and Sp($6$) gauge fields. The F$_4$ model was studied in detail in \cite{Esole:2017rgz}.}, and we comment on the significance of these results to both recent and forthcoming work on four-dimensional F-theory flux vacua \cite{JKT}.

The remainder of this paper is organized as follows: \cref{sec:notation} summarizes some of the notation conventions used throughout this paper. In \cref{sec:review}, we review salient properties of weighted blowups, beginning with a familiar definition of weighted blowups in the context of toric geometry in \cref{sec:toricweighted}, and then proceeding to a more technical discussion in \cref{sec:formal} in preparation for the proof of \cref{eqn:mainformula}. \cref{sec:push} contains the main result of this paper, \cref{main} (which proves \cref{eqn:mainformula}) and \cref{maincor}, where the latter adapts \cref{eqn:mainformula} to arbitrary analytic functions of the exceptional divisor of a weighted blowup. In \cref{sec:P2}, we carry out an easy cross check of the results of \cref{sec:push} by comparing the intersection numbers of a weighted blowup of the toric variety $\mathbb P^2$ computed with \cref{eqn:main} to an independent calculation using well-established methods in toric geometry. In \cref{sec:SU5}, we perform a ``stress test'' of the pushforward formulae in \cref{sec:push}, using \cref{eqn:mainformula} to compute the double, triple, and quadruple intersection numbers of a weighted blowup of the SU(5) model and comparing the results to various independent computations of the intersection numbers using methods from five- and three-dimensional supergravity. \cref{sec:4DFlux} discusses applications of the pushforward formulae in \cref{sec:push} to four-dimensional F-theory flux vacua, and in particular presents the first computation of the matrices of quadruple intersection numbers for resolutions of the F$_4$ and Sp($6$) models with so-called `$(4,6)$' singular elliptic fibers over codimension-two and three loci in an arbitrary base. We conclude and discuss future directions in \cref{sec:discussion}. The precise condition under which a (weighted) blowup centered at a complete intersection is given in \cref{app:crepant}.

\section{Notation}
\label{sec:notation}

In this section, we establish some of our notation conventions used throughout the paper:

\begin{itemize}
\item Given an algebraic variety $Y$, we denote its Chow ring by $A^*(Y)$. We generally denote the Chow ring class of a divisor $D \subset Y$ with bold font, i.e. $\boldsymbol D \in A^*(Y)$. We write products of Chow ring classes as $\boldsymbol D_1 \boldsymbol D_2 \cdots$. 
\item A subvariety of an algebraic variety specified by the vanishing of polynomials $p_1, \dots , p_n$ will be denoted $V(p_1,\dots, p_n)$ (i.e., `$V$' for `vanishing'.) For example, the point $[0:1]$ in $\mathbb P^1$ with homogeneous coordinates $[x:y]$ is denoted by $V(x)$.
\item We use the standard symbol $\mathbb P^d$ to denote $d$-dimensional complex projective space over an algebraically-closed field $k$ of characteristic zero, usually the complex numbers $\mathbb C$. More generally, $\mathbb P^d_{\vec w}$ denotes weighted $d$-dimensional projective space with weights $w_0,\dots, w_d \in \mathbb Z_{>0}$. We write, e.g., $[z_0:\cdots: z_d]$ for the homogeneous coordinates and represent the action of the multiplicative group $\mathbb{G}_{\text{m}}$ (usually $\mathbb C^*$) by $[z_0:\cdots :z_d] \cong [\lambda^{w_0 } z_0 :\cdots:\lambda^{w_d} z_d]$, given $\lambda \in \mathbb{G}_{\text{m}}$.  
\item For more general projective toric varieties, we continue to use, e.g., the $z_i$ for the homogeneous coordinates. In some cases, we use tables of the form 
	\begin{align}
		\begin{array}{cccc}
			 z_0 & z_1 & \cdots & z_m\\\hline
			 w_{0}^{(1)} & w_1^{(1)}& \cdots & w_m^{(1)} \\
			w_{0}^{(2)} & w_1^{(2)}& \cdots & w_m^{(2)}\\
			\vdots&\vdots&\cdots &\vdots
		\end{array}
	\end{align}
to represent the more general $\mathbb{G}_{\text{m}}^m$ quotient 
	\begin{align}
	(z_0, z_1 , \dots, z_m) \cong ((\lambda_1^{w_0^{(1)}} \lambda_2^{w_0^{(2)}}  \cdots )z_0,(\lambda_1^{w_1^{(1)}} \lambda_2^{w_1^{(2)}}\cdots) z_1,\dots, (\lambda_1^{w_m^{(1)}} \lambda_2^{w_m^{(2)}} \cdots) z_m). 
\end{align}
For example, the homogeneous coordinates $[x:y]$ of $\mathbb P^1$ can be represented as 
	\begin{align}
		\begin{array}{cc}
		x & y \\\hline
		1 & 1
		\end{array}
	\end{align}
and the homogeneous coordinates of the blowup of $\mathbb P^2$ centered at $[0:0:1]$ are given in \cref{eqn:BlP2coord}. Note that this description of the homogeneous coordinates alone does not indicate the Stanley-Reisner ideal. 
\item In the technical discussion in \cref{sec:formal}, we use $\mathbb A^d$ to dente $d$-dimensional affine space. Moreover, $T \subset \mathbb A^d$ is the $d$-dimensional algebraic torus (e.g., when $\mathbb A^d = \mathbb C^d$, then $T = (\mathbb C^*)^d$). Given affine coordinates $x_1, \dots, x_d$, we express the commutative ring whose spectrum corresponds to $\mathbb A^d$ as $k[x_1,\dots,x_d]$. 
\item The main examples in this paper consist of weighted blowups of elliptic fibrations, realized explicitly as hypersurfaces $X \subset Y$. We abuse notation and use $\hat D$ to denote both divisors of $X$, as well as their classes in $A^*(X)$. We use the Shioda-Tate-Wazir theorem to work with a preferred basis of divisors $\hat D_I$, where the index $I$ runs over various values---see \cref{sec:formal} and the discussion immediately below. 
\item We typically denote a blowup of $Y$ by $f: Y' \rightarrow Y$. Given a subvariety $X \subset Y$, we abuse notation and denote the proper transform of $X$ by $X' \subset Y'$. Given a blowup of $Y$ centered at $Z \subset Y$, we sometimes denote the blowup more explicitly by $\text{Bl}_{Z} Y$, or equivalently $\text{Bl}_{V(z)} Y$ when $Z= V(z)$; for weighted blowups, we use the notation $\text{Bl}^{\vec{w}}_{Z} Y$, etc. 
\item Later, when we discuss more involved examples of blowups, we adopt the notation 
	\begin{align}
	f_{i+1} =
  (g_{i+1,1},\dots,g_{i+1,n_{i+1}} |e_{i+1} )_{w_{i+1,1},\dots,w_{i+1,n_{i+1}}}, 
 \end{align}
 which is shorthand for the
  contraction map 
  	\begin{align}
	f_{i+1}:Y_{i+1} \rightarrow Y_i
\end{align}
 associated to the blowup of the ambient space $Y_i$
 centered at $V( g_{i+1,1}, \cdots,  g_{i+1,n_{i+1}})
\subset Y_i$ with weights $w_{i+1,1},\dots,w_{i+1,n_{i+1}}$ and exceptional divisor 
	\begin{align}
		E_{i+1} = V(e_{i+1}) \subset Y_{i+1}.
	\end{align}
Given a subvariety $X_i \subset Y_i$, the corresponding blowup $X_{i+1} \rightarrow X_i$ is then given
  by the restriction of $f_{i+1}$ to $X_{i+1}$, where $X_{i+1} \subset Y_{i+1}$ denotes the proper transform of $X_i$. We abuse notation
  and implement the $i+1$th blowup by making the substitution
  $g_{i+1,j} \rightarrow e_{i+1} g_{i+1,j}$ whenever $g_{i+1,j}$ is a homogeneous coordinate. 
  
  \item Given a pair of divisors of a CY $n$-fold, $\hat D_I, \hat D_J \subset X$, we denote the homology class (in $H_{2,2}(X)$) of their intersection by $[\hat D_I \cap \hat D_J]$. 
\end{itemize}

\section{Review of Weighted Blowups}

In this section we review weighted blowups. In \cref{sec:toricweighted}, we give a rudimentary definition of weighted blowups and use toric geometry to illustrate the distinction between ordinary and weighted blowups, before moving onto a somewhat more abstract discussion in \cref{sec:formal}.

\label{sec:review}
\subsection{Weighted blowups of toric varieties}
\label{sec:toricweighted}
In numerous examples of string compactifications, a CY $n$-fold $X$ is often concretely realized as a complete intersection of a toric variety satisfying $c_1(TX) =0$.\footnote{A famous and thoroughly studied example of such a construction is the quintic 3-fold, which is a degree five hypersurface in $Y = \mathbb P^4$ \cite{Candelas:1990rm}.} It is largely in the context of toric geometry that blowups, and more generally weighted blowups, first made their way into the string theory literature. See \cite{hori2003mirror} for a physics-oriented discussion of toric geometry, and see \cite{7d3c9f5cc0d546b3a3c103df4d4cb4f5} for more comprehensive background.

An ordinary blowup of an algebraic variety replaces a given subvariety with the projectivization of its normal bundle, leading to a quasiprojective variety. For concreteness, consider the blowup of an algebraic variety $Y$ centered at a complete intersection of smooth hypersurfaces $Z_1,\dots, Z_d$ meeting locally transversally, where $Z_i = V(z_i)$. The total transform $Y'$ is the subvariety
	\begin{align}
	\label{eqn:blowupmatrix}
		\cap_{i,j=1,\dots,d} V( z_i z'_j - z_j z'_i  )~\subset~ Y \times \mathbb P^{d-1},
	\end{align}
where the $\mathbb P^{d-1}$ factor has homogeneous coordinates 
	\begin{align}
		[ z_1': \cdots :  z_d'] \cong [ \lambda z_1' : \cdots: \lambda z_d' ]
	\end{align}
for $\lambda \in \mathbb C^*$. Equivalently, we may introduce a homogeneous coordinate $e$ whose vanishing locus is the exceptional divisor of the blowup $Y' \rightarrow Y$, and define
	\begin{align}
		 z_i  = e  z_i',
	\end{align}
where the $\mathbb C^*$ action on the new homogeneous coordinate is $e \rightarrow \lambda^{-1} e$. Now, consider the special case that $Y$ is a toric variety, and the hypersurfaces $Z_i \subset Y$ are hyperplanes given by the vanishing of affine or homogeneous coordinates. Blowups of toric varieties correspond to subdivisions of their corresponding toric fans. As a simple example, consider the blowup of $Y = \mathbb P^2$ centered at the point $[0:0:1]$, where we take $Y$ to have homogeneous coordinates $[x:y:z]$. The fan of $\mathbb P^2$ is spanned by the lattice vectors $\vec v_x = (1,0), \vec v_y = (0,1), \vec v_z = (-1,-1)$, and thus this blowup corresponds to subdividing the cone spanned by $\vec v_x, \vec v_y$ by adding a new lattice vector $\vec v_e = \vec v_x + \vec v_y$. The total transform $Y' = \text{Bl}_{[0:0:1]} \mathbb P^2$ is a Hirzebruch surface $\mathbb F_1$, which has homogeneous coordinates \cite{7d3c9f5cc0d546b3a3c103df4d4cb4f5}
	\begin{align}
	\label{eqn:BlP2coord}
		\begin{array}{cccc}   x' &  y' & z & e \\\hline  1 & 1& 1 & 0 \\  1 & 1 & 0 & -1 \end{array}
	\end{align}
and Stanley-Reisner ideal $(x'y',z e)$. The toric fan for $ Y'$ is displayed on the left in \cref{fig:BlP2fan}. 

Blowups of toric varieties restrict cleanly to blowups of their subvarieties consisting of complete intersections of hypersurfaces. For example, we can consider a nodal cubic curve $X \subset Y= \mathbb P^2$, where $X = V(y^2z- x^2 (x+ z) )$. Note that this curve is singular at the point $V(x,y)$. We can resolve this singularity by means of the blowup of $Y$ centered $V(x,y)$ that we described above. The proper transform $ \widetilde X \subset  Y' = \mathbb F_1$ is given by factoring out two copies of the exceptional divisor $V(e)$, i.e. $\widetilde X = V(  y'^2 z -  x'^2 (e  x'+  z))$, and we can see that the singular locus of $X$ has been replaced by a pair of points intersected by the exceptional divisor $\mathbb P^1 \subset  Y'$.
	
Although ordinary blowups are in many cases sufficient to resolve singularities of CY complete intersections, there are numerous examples for which weighted blowups provide the only clear method for resolving singularities in a manner that preserves the triviality of the first Chern class---see \cref{sec:4DFlux} for examples.\footnote{As mentioned in the Introduction, blowups that preserve the triviality of the first Chern class are referred to as `crepant' \cite{reid1983minimal}. The precise condition that must be satisfied for a blowup to be crepant in the class of constructions we study in this paper is given in \cref{app:crepant}.} A weighted blowup is a generalization of an ordinary blowup in which the center $Z$ is replaced by the weighted projectivization of its normal bundle. One could attempt to work by analogy and define the total transform $Y'$ of a weighted blowup of an algebraic variety $Y$ centered at a complete intersection $Z_1, \cdots, Z_d$ with weights $w_1, \dots, w_d \subset \mathbb Z_{>0}$ to be the subvariety 
	\begin{align}
		\cap_{i,j=1,\dots,d} V(z_i^{p_i} {z'_j}^{p_j} - z_j^{p_j} {z'_i}^{p_i}  )~\subset~ Y \times \mathbb P^{d-1}_{\vec w},
	\end{align}
where the factor $\mathbb P^{d-1}_{\vec w}$ is a weighted projective space with homogeneous coordinates
	\begin{align}
		[ z_1':\cdots : z_d'] \cong [\lambda^{w_1 }  z_1':\cdots : \lambda^{w_d}  z_d']
	\end{align}
and the exponents $p_i$ are defined so that $w_i p_i = w_1 \cdots w_d$. However, this definition is incomplete. A better definition, and one that has traditionally been used in the string theory literature, is as follows: introduce a homogeneous coordinate $e$ whose vanishing locus is the exceptional divisor of the blowup $Y' \rightarrow Y$, and define
	\begin{align}
		z_i  = e^{w_i}  z_i'.
	\end{align}
The $\mathbb C^*$ action on the new homogeneous coordinate is $e \to \lambda^{-1} e$. We again illustrate this operation using an example of a toric variety. Consider the weighted blowup of $Y = \mathbb P^2$ centered at $V(x,y)$, with weights $w_x=2,w_y=3$. In the toric fan, this corresponds to subdividing the cone spanned by $\vec v_x, \vec v_y$ by adding a new lattice vector with nontrivial weights, namely $\vec v_e = 2 \vec v_x + 3 \vec v_y$. The total transform $ Y'$ is a singular projective toric variety with homogeneous coordinates
	\begin{align}
	\label{eqn:homcoordBl23P2}
		\begin{array}{cccc}
			 x'&y'&z&e\\\hline
			 1 & 1 & 1 & 0 \\
			2& 3 & 0& -1
		\end{array}.
	\end{align}
The toric fan of $Y'$ is displayed on the right in \cref{fig:BlP2fan}. The intersection theory of this example is explored in further detail in \cref{sec:P2}. Again, weighted blowups of toric varieties $Y$ restrict cleanly to weighted blowups of complete intersections $X \subset Y$, with the caveat that the total transform $Y'$ is not in general smooth, due to the intrinsic quotient singularities of certain weighted projective spaces.

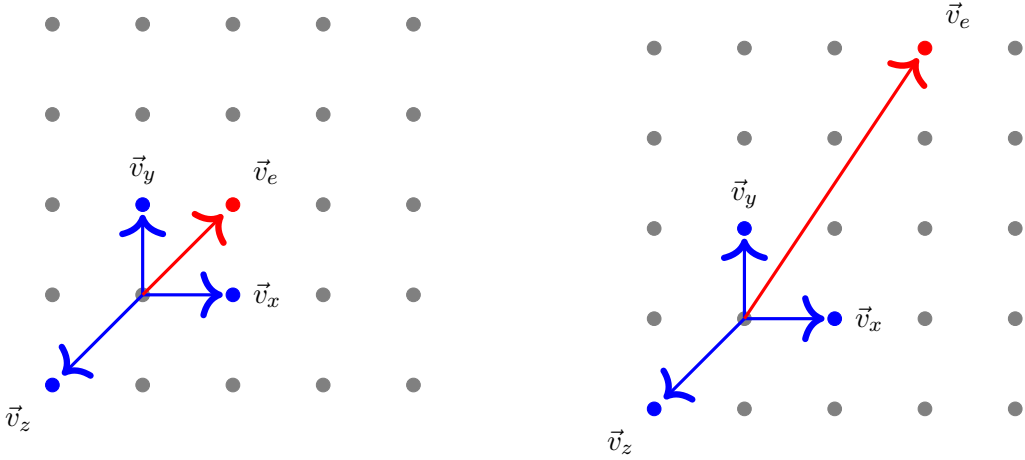
\begin{figure}
\begin{center}
$
\begin{tikzpicture}
\node[](U) at (-4,0) {$
	\begin{tikzpicture}[scale=1.2]
		\node[draw,circle,scale=.5,fill=red,color=gray] at (0,0){};
		\node[draw,circle,scale=.5,fill=red,color=blue] at (1,0){};
		\node[draw,circle,scale=.5,fill=red,color=blue] at (0,1){};
		\node[draw,circle,scale=.5,fill=red,color=gray] at (2,0){};
		\node[draw,circle,scale=.5,fill=red,color=gray] at (0,2){};
		\node[draw,circle,scale=.5,fill=red,color=gray] at (3,0){};
		\node[draw,circle,scale=.5,fill=red,color=gray] at (0,3){};
		\node[draw,circle,scale=.5,fill=yellow,color=gray] at (1,2){};
		\node[draw,circle,scale=.5,fill=red,color=gray] at (2,1){};
		\node[draw,circle,scale=.5,fill=red,color=gray] at (1,3){};
		\node[draw,circle,scale=.5,fill=red,color=gray] at (3,1){};
		\node[draw,circle,scale=.5,fill=red,color=gray] at (2,2){};
		\node[draw,circle,scale=.5,fill=red,color=gray] at (2,3){};
		\node[draw,circle,scale=.5,fill=red,color=gray] at (3,2){};
		\node[draw,circle,scale=.5,fill=red,color=gray] at (3,3){};
		\node[draw,circle,scale=.5,fill=yellow,color=red] at (1,1){};
		\node[draw,circle,scale=.5,fill=red,color=blue] at (-1,-1){};
		\node[draw,circle,scale=.5,fill=red,color=gray] at (-1,0){};
		\node[draw,circle,scale=.5,fill=red,color=gray] at (-1,1){};
		\node[draw,circle,scale=.5,fill=red,color=gray] at (-1,2){};
		\node[draw,circle,scale=.5,fill=red,color=gray] at (-1,3){};
		\node[draw,circle,scale=.5,fill=red,color=gray] at (3,-1){};
		\node[draw,circle,scale=.5,fill=red,color=gray] at (2,-1){};
		\node[draw,circle,scale=.5,fill=red,color=gray] at (1,-1){};
		\node[draw,circle,scale=.5,fill=red,color=gray] at (0,-1){};
		\node[color=blue,label={right:$\vec v_x$}] (x) at (1,0) {};
		\node[color=blue,label={above:$\vec v_y$}] (y) at (0,1) {};
		\node[color=blue,label={225:$\vec v_z$}] (z) at (-1,-1) {};
		\node[color=blue,label={45:$\vec v_e$}] (e) at (1,1) {};
		\draw[big blue arrow,very thick,color=blue] (0,0) -- (x);
		\draw[big blue arrow,very thick,color=blue] (0,0) -- (y);
		\draw[big blue arrow,very thick,color=blue] (0,0) -- (z);
		\draw[big red arrow,very thick,color=red] (0,0) -- (e);
		%\boldsymbol{$\vec v_x$}
	\end{tikzpicture}$};
\node[](W) at (4,0) {$
	\begin{tikzpicture}[scale=1.2]
		\node[draw,circle,scale=.5,fill=red,color=gray] at (0,0){};
		\node[draw,circle,scale=.5,fill=red,color=blue] at (1,0){};
		\node[draw,circle,scale=.5,fill=red,color=blue] at (0,1){};
		\node[draw,circle,scale=.5,fill=red,color=gray] at (2,0){};
		\node[draw,circle,scale=.5,fill=red,color=gray] at (0,2){};
		\node[draw,circle,scale=.5,fill=red,color=gray] at (3,0){};
		\node[draw,circle,scale=.5,fill=red,color=gray] at (0,3){};
		\node[draw,circle,scale=.5,fill=yellow,color=gray] at (1,2){};
		\node[draw,circle,scale=.5,fill=red,color=gray] at (2,1){};
		\node[draw,circle,scale=.5,fill=red,color=gray] at (1,3){};
		\node[draw,circle,scale=.5,fill=red,color=gray] at (3,1){};
		\node[draw,circle,scale=.5,fill=red,color=gray] at (2,2){};
		\node[draw,circle,scale=.5,fill=red,color=red] at (2,3){};
		\node[draw,circle,scale=.5,fill=red,color=gray] at (3,2){};
		\node[draw,circle,scale=.5,fill=red,color=gray] at (3,3){};
		\node[draw,circle,scale=.5,fill=yellow,color=gray] at (1,1){};
		\node[draw,circle,scale=.5,fill=red,color=blue] at (-1,-1){};
		\node[draw,circle,scale=.5,fill=red,color=gray] at (-1,0){};
		\node[draw,circle,scale=.5,fill=red,color=gray] at (-1,1){};
		\node[draw,circle,scale=.5,fill=red,color=gray] at (-1,2){};
		\node[draw,circle,scale=.5,fill=red,color=gray] at (-1,3){};
		\node[draw,circle,scale=.5,fill=red,color=gray] at (3,-1){};
		\node[draw,circle,scale=.5,fill=red,color=gray] at (2,-1){};
		\node[draw,circle,scale=.5,fill=red,color=gray] at (1,-1){};
		\node[draw,circle,scale=.5,fill=red,color=gray] at (0,-1){};
		\node[color=blue,label={right:$\vec v_x$}] (x) at (1,0) {};
		\node[color=blue,label={above:$\vec v_y$}] (y) at (0,1) {};
		\node[color=blue,label={225:$\vec v_z$}] (z) at (-1,-1) {};
		\node[color=blue,label={45:$\vec v_e$}] (e) at (2,3) {};
		\draw[big blue arrow,very thick,color=blue] (0,0) -- (x);
		\draw[big blue arrow,very thick,color=blue] (0,0) -- (y);
		\draw[big blue arrow,very thick,color=blue] (0,0) -- (z);
		\draw[big red arrow,very thick,color=red] (0,0) -- (e);
		%\boldsymbol{$\vec v_x$}
	\end{tikzpicture}$};
\end{tikzpicture}
$
\end{center}
\caption{\textbf{Left:} Toric fan in $\mathbb Z^2$ for the unweighted blowup of $\mathbb P^2$ centered at $x=y=0$. \textbf{Right:} Toric fan for the weighted blowup of $\mathbb P^2$ with weights $(2,3)$. We denote lattice points by pairs of coordinates $\vec v = (v_1,v_2)$, with $\vec v_x = (1,0), \vec v_y = (0,1), \vec v_z = (-1,-1)$ (in blue), and $\vec v_e = (1,1)$ for the unweighted blowup and $ \vec v_e =(2,3)$ for the weighted blowup, both in red. In the unweighted case, the exceptional divisor is a copy of $\mathbb P^1$, whereas in the weighted case, the exceptional divisor is a copy of the weighted projective line $\mathbb P^1_{(2,3)}$. Note that the weighted blowup of $\mathbb P^2$ contains cyclic quotient singularities along the exceptional divisor, and hence its intersection numbers are fractional---see, e.g., \cref{eqn:E2BlP2}.}
\label{fig:BlP2fan}
\end{figure}

\subsection{Some formal definitions}
\label{sec:formal}
In this section, we give a more formal definition of weighted blowups and related concepts, as these definitions will be useful in the proof of \cref{main}.   Readers interested primarily in applications of \cref{main} should feel free to skip the following technical discussion and apply \cref{eqn:main,eqn:maincor} to concrete examples. Note that this is the only section of the paper in which we work stack-theoretically, and that in the remainder of the paper, we simply work with the coarse moduli space of weighted blowups.

To begin, consider a complete intersection of smooth hypersurfaces $Z = Z_1 \cap \cdots \cap Z_d$ meeting locally transversally in an algebraic variety $Y$. Analytically (i.e., in an affine open set $U \subset Y$), we may write $Z_i|_{U} = V(z_i)$, where $z_1,\dots, z_d$ are affine coordinates of $U$. 

\begin{definition}[\bf{Associated Rees algebra}]
Let $I$ be the ideal sheaf associated to $Z \subset Y$, and let $I_n \subseteq I$ be the sub-ideal sheaf such that in any affine open set $U \subset Y$, $I_n|_{U}$ is generated by monomials of (weighted) degree at least $n$ in the affine coordinates $z_1,\dots, z_d$ with corresponding weights $w_1,...,w_d \subset \mathbb Z_{>0}$. (Note that we adopt the convention $I_0 := \mathcal O_Y$.) We define the \emph{Rees algebra} $R$ associated to the complete intersection $Z$ and the weights $w_1, \dots, w_d$ to be the graded algebra given by the direct sum of the ideals $I_n$, $R:= \oplus_{n\geq 0} I_n \cdot t^n$. 
\end{definition}

As a simple example, let $Y= \mathbb A^2 = k [x,y]$ (where $k$ is an algebraically-closed field of characteristic zero). Then the Rees algebra associated to $Z  = V(x,y) \subset Y$ with weights $w_x = 2, w_y = 3$ is given by the ideals
	\begin{align*}
		\begin{split}
			I_0= k[x,y], ~~~I_1 = (x,y),~~~ I_2 = (x,y),~~~ I_3 = (x^2,y),~~~I_4 = (x^2,y^2,xy), ~~~I_5 = (x^3,y^2,xy), \ \dots.
		\end{split}
	\end{align*}

\begin{definition}[\bf{Weighted blowup}] \label{def:weighted}
Given a complete intersection $Z \subset Y$ and a choice of weights $w_1, \dots, w_d \subset \mathbb Z_{>0}$, the (stack-theoretic) weighted blowup of $Y$ centered at $Z$ (with weights $w_1 , \dots, w_d$) is defined as  $$\mathscr P\text{roj}_Y(R) \ra Y$$ where $R$ is the associated Rees algebra, and $\mathscr P\text{roj}_Y$ is the stack theoretic quotient $$\text{[Spec}_Y(R) \smallsetminus V(R_+)/\G_m],$$
 where $R_+$ is the irrelevant ideal $\oplus_{n >0} I_n \cdot t^n$ and $\mathbb{G}_{\text{m}}$ is the multiplicative group characterizing the projectivization.

\end{definition}
Note that the stack-theoretic $\mathscr P\text{roj}$ is analogous to the usual scheme-theoretic $\mathscr P\text{roj}$ used to define blowups (see, e.g., \cite[Appendix B]{fulton}).

\begin{remark}
We define a weighted blowup only in the special case of the center being a complete intersection of smooth hypersurfaces meeting locally transversally in an algebraic variety. Although weighted blowups can be defined in much greater generality, \cref{def:weighted} is sufficiently general to cover the types of weighted blowups to which the results of \cref{main} apply. For additional background, including a more general definition and discussion of weighted blowups, see \cite{weighted}. 
\end{remark}

There is a special case of a weighted blowup at a divisor called a \emph{root stack}: 

\begin{definition}[\bf{Root stack}]
    Let $Z$ be a divisor in $Y$. We define the $w$th root stack of $Z$ in $Y$, $Y(\sqrt[w]{Z})$, to be the weighted blowup of $Y$ centered at $Z$ with weight $w$. 
\end{definition}

Intuitively, a root stack may be thought of as formally adjoining a root of the divisor. If one thinks of a weighted blowup as replacing the center with a weighted projective (stack) bundle, formally adjoining a root corresponds to replacing the center with a zero-dimensional weighted projective bundle.

\section{Pushforward Formula for Weighted Blowups}
\label{sec:push}

In this section, we prove the main result of this paper:

\begin{theorem} \label{main}{Let $Z \subset Y$ be the complete intersection of $d$ nonsingular hypersurfaces $Z_1, \dots , Z_d$ meeting transversally in $X$. Let $f :  Y' \rightarrow Y$ be the weighted blowup of $Y$ centered at $Z$, with weights $w_1, \dots, w_d \in \mathbb Z_{> 0}$. Let $E$ be the exceptional divisor and $\boldsymbol {E}$ its class in $A^*(Y')$. Then}
	\begin{align}
	\label{eqn:main}
		f_* \boldsymbol E^n = (-1)^{d+1} h_{n-d} \left( \frac{\boldsymbol Z_1}{w_1}, \dots, \frac{\boldsymbol Z_d}{w_d}\right)  \frac{\boldsymbol Z_1}{w_1} \cdots \frac{\boldsymbol Z_d}{w_d}
	\end{align}
{where $\boldsymbol Z_i $ denotes the class of $Z_i$ in $A^*(Y)$, $h_i(x_1,\dots,x_k)$ is the complete homogeneous symmetric polynomial of degree $i$ in $(x_1, \dots, x_k)$ with the convention that $h_i$ is identically zero for $i<0$ and $h_0 =1$.}
\end{theorem}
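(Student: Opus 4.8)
The plan is to reduce the computation to a local model and then to the well-understood case of an unweighted blowup (or equivalently a product of root stacks) by using the explicit toric description of the exceptional fiber. First I would observe that both sides of \cref{eqn:main} are, by the projection formula and the excess intersection formalism, determined by the behaviour of $\boldsymbol E$ in a formal (or \'etale) neighbourhood of $Z$ in $Y$; more precisely, $f_*\boldsymbol E^n$ is supported on $Z$ and depends only on the normal data $(N_{Z_1/Y},\dots,N_{Z_d/Y})$ together with the weights. This lets me replace $Y$ by the total space of the rank-$d$ bundle $N = N_{Z_1/Y}\oplus\cdots\oplus N_{Z_d/Y}$ over $Z$, with the $Z_i$ replaced by the coordinate hyperplane subbundles, and then by $\mathbb A^d_Z$-bundle language: the weighted blowup $Y'\to Y$ restricts to the weighted blowup of this bundle along the zero section. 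The exceptional divisor $E$ is then a weighted projective space bundle $\mathbb P(w_1,\dots,w_d)\big(N\big)\to Z$, and $\mathcal O_E(E) = \mathcal O(-1)$ on this weighted projective bundle (up to the usual twist).

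Next I would compute $f_*\boldsymbol E^n$ by the standard two-step procedure: first push forward from $Y'$ to $E$ via the self-intersection formula $\boldsymbol E^n = \iota_{E*}\big((\boldsymbol E|_E)^{n-1}\big)$ where $\iota_E: E\hookrightarrow Y'$ and $\boldsymbol E|_E = c_1(\mathcal O_E(-1)) = -\xi$ for $\xi$ the (weighted) hyperplane class; then push forward from $E = \mathbb P(\vec w)(N)$ to $Z$ along $\pi: E\to Z$, and finally include $Z\hookrightarrow Y$. The key input is the Segre-class / Gysin formula for a \emph{weighted} projective bundle: for the stack $\mathbb P(w_1,\dots,w_d)(N)$ one has $\pi_*(\xi^{m}) = $ (a complete homogeneous symmetric expression in the Chern roots of $N$ divided by $w_1\cdots w_d$), generalizing the usual $\pi_*(\xi^{m}) = s_{m-d+1}(N)$ for ordinary projective bundles. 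Concretely, the relation defining $\xi$ is $\prod_i (w_i\xi + \boldsymbol Z_i) = 0$ pulled back to $E$ — here $\boldsymbol Z_i$ plays the role of $c_1(N_{Z_i/Y})$ — which after the substitution $\xi \mapsto \xi$, rescaling each factor by $w_i$, is exactly the defining relation of an ordinary projective bundle in the variables $w_i\xi$. Pushing forward and using the classical formula yields $\pi_*\xi^{m} = \frac{1}{w_1\cdots w_d}\,h_{m-d+1}\!\left(\frac{\boldsymbol Z_1}{w_1},\dots,\frac{\boldsymbol Z_d}{w_d}\right)$ (with the sign bookkeeping from $\boldsymbol E|_E = -\xi$ producing the $(-1)^{d+1}$), and then $\iota_{Z*}$ of this, multiplied by $\boldsymbol Z_1\cdots\boldsymbol Z_d$ coming from the class of $Z$ in $Y$, gives precisely the right-hand side of \cref{eqn:main}.

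I would carry out the steps in this order: (1) reduce to the local/normal-bundle model via deformation to the normal cone and the projection formula; (2) identify $E$ with the weighted projective bundle $\mathbb P(\vec w)(N)$ and identify $\mathcal O_E(E)$; (3) establish the weighted-projective-bundle pushforward formula for powers of the hyperplane class, by relating it to the ordinary case through the substitution $\xi_i := w_i\xi$; (4) assemble via $f_*\boldsymbol E^n = \iota_{Z*}\,\pi_*\big((\boldsymbol E|_E)^{n-1}\big)\cdot(\text{class of }Z)$ and match to $h_{n-d}$ after the index shift. The main obstacle I expect is step (3) together with the stack-theoretic subtleties it entails: making the Gysin/pushforward formula for a weighted projective \emph{stack} bundle precise — in particular getting the factor $1/(w_1\cdots w_d)$ and the degree-shift in $h$ exactly right, and justifying that the Chow-ring computation on the coarse space sees this rational class — is where the real content lies. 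A clean way to sidestep part of the difficulty is to factor the weighted blowup as an ordinary blowup followed by (iterated) root stacks along the exceptional divisor, compute each pushforward separately using the known unweighted formula and the root-stack pushforward ($g_*\boldsymbol E^n$ for a root stack contributes the geometric-series/$h$-factor), and then compose; the combinatorics of composing these is exactly what produces the complete homogeneous symmetric polynomial $h_{n-d}$.
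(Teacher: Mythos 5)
Your architecture is genuinely different from the paper's and is, in outline, a viable classical route. The paper does not localize via deformation to the normal cone; instead it builds a smooth map $\phi:Y\to[\A^d/T]$ out of the sections defining the $Z_i$ (\cref{flower lemma}), so that $\phi^*(\boldsymbol t_i)=\boldsymbol Z_i$, reduces by smooth base change to the weighted blowup of $[\A^d/T]$ at the origin, and then---rather than computing a weighted projective bundle pushforward directly---factors that model blowup through the Cartesian square \cref{eqn:rootcommdiagram}: first take the $w_i$-th root stacks of the coordinate hyperplanes $V(x_i)$, then perform an \emph{ordinary} blowup of the intersection of the rooted divisors, then de-root. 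The known unweighted formula applied to the ordinary blowup, combined with $r_*\boldsymbol s_i=\boldsymbol t_i/w_i$ (\cref{chow ring of the root stack}), gives \cref{eqn:main}. Your route would buy a self-contained Segre-class statement for weighted projective stack bundles; the paper's buys the ability to quote the unweighted result verbatim and to confine all the stackiness to root stacks, where the pushforward is a one-line computation.

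Two concrete gaps remain in your version. First, your step (3) is where the theorem actually lives, and the justification you give does not establish it: knowing that $A^*(E)$ satisfies the relation $\prod_i(w_i\xi+\boldsymbol Z_i)=0$ pins down the ring presentation but not the pushforward $\pi_*$. The same relation is satisfied, after the substitution $\xi_i=w_i\xi$, by an ordinary projective bundle, and yet $\pi_*(\xi^{d-1})$ equals $1/(w_1\cdots w_d)$ for the weighted projective stack bundle while it equals $1$ in the ordinary case; that normalization is precisely the content you need and must be proved separately (for instance via the degree of the map to the coarse space, or via a root-stack argument as in \cref{chow ring of the root stack}). Second, the shortcut you offer at the end---ordinary blowup first, then (iterated) root stacks along the exceptional divisor---only reproduces the weighted blowup when $w_1=\cdots=w_d$: rooting the $\mathbb P^{d-1}$ exceptional divisor of an ordinary blowup can never produce $\mathbb P^{d-1}_{\vec w}$ with distinct weights. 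The correct factorization roots the individual centers $Z_i$ \emph{before} blowing up, which is exactly the order used in \cref{the other lemma}.
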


Our strategy for proving \cref{main} is to reduce the problem of computing the pushforward of products of exceptional divisor classes in the Chow ring of an arbitrary weighted blowup $Y'$ to the special case that $Y'$ is a weighted blowup of the stack-quotient of affine space $\mathbb A^d$ by the algebraic group 
$T = (\mathbb A^*)^d \subset \mathbb A^d$, which we denote by $[\A^d/T]$.

After proving \cref{main} in this special case, we construct a smooth map $\phi:Y \ra [\A^d/T]$ and compute the pullback (defined with respect to $\phi$) of the result to obtain the more general formula asserted in \cref{main}. Before defining the map $\phi : Y \ra [\A^d /T]$, we first provide some relevant background about the Chow ring $A^*([\A^d/T])$. 

Given a stack quotient of a scheme (in particular, affine space) by an algebraic group $G$, it has been shown in \cite[Theorem 4]{EG} that its Chow ring is isomorphic to the $G$-equivariant Chow ring as constructed in \cite[Definition 1]{EG}. Specifically, we consider the Chow ring of the quotient $[\A^d/T]$, given in terms of the $T$-equivariant Chow ring of $\A^d$:
    \begin{align*}
A^*([\mathbb A^d /  T ]) \cong A_T^*(\mathbb A^d) \cong A^*(\A^d)[\boldsymbol{t}_1,\dots, \boldsymbol{t}_d] \cong \mathbb Q [\boldsymbol{t}_1,\dots, \boldsymbol{t}_d].
    \end{align*}
 Intuitively, we may think of the $T$-equivariant geometry of $\A^d$ as the geometry of the quotient $[\A^d/T]$, and we may think of $ t_i$ as the variable keeping track of the action of the $i$th coordinate of $T$ on the $i$th coordinate $x_i$ of $\A^d$. Note that the only coordinate of the torus that acts on $x_i$ is the $i$-th, and therefore the class of $V(x_i)$ in $A^*_T(\A^d)$ is $\sum_j \delta_{ij} \boldsymbol t_j=\boldsymbol t_i$.

\begin{definition} Let $\mathscr{F}:=Isom_T(\oplus \mathcal{O}_Y(Z_i), Y \times \A^d )$ be the sheaf of $T$-equivariant vector bundle isomorphisms from $\oplus \mathcal O_Y(Z_i) \rightarrow Y \times \mathbb A^d$, i.e. $\mathscr F|_U = \{\oplus \mathcal{O}_Y(Z_i)|_U \ra U \times \A^d \}$.   
\end{definition}

Note that the points of $\mathscr F
$ are pairs $(y,t)$ with $y \in Y$ and $t \in T$ where again $T \subset \mathbb A^d $ is the $d$-dimensional torus. In particular, $\mathscr F$ is a principal $T$-bundle over $Y$.

\begin{lemma}\label{flower lemma} There exists a smooth map $\phi: Y \ra [\A^d/T]$ such that $\phi|_Z: Z \ra [0/T]$  and $\phi^*(\boldsymbol t_i)=\boldsymbol Z_i$.
\end{lemma}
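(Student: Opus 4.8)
The plan is to build $\phi$ as the classifying map of a principal $T$-bundle on $Y$. Since $[\A^d/T]$ is the classifying stack-with-a-module, a map $Y \to [\A^d/T]$ is equivalent to the data of a principal $T$-bundle $P \to Y$ together with a $T$-equivariant map $P \to \A^d$ (equivalently, a section of the associated $\A^d$-bundle). Concretely, I would take $P := \mathscr F = Isom_T\!\big(\bigoplus_i \mathcal O_Y(Z_i),\, Y\times \A^d\big)$, the principal $T$-bundle already introduced: locally, a trivialization of each line bundle $\mathcal O_Y(Z_i)$ gives a point of $\mathscr F$, and the transition functions live in $T=(\Gm)^d$ because the bundle is a direct sum of $d$ line bundles. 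The associated $\A^d$-bundle is then $\bigoplus_i \mathcal O_Y(Z_i)^\vee$ (or $\bigoplus_i \mathcal O_Y(Z_i)$, depending on conventions), and the canonical section is the one cutting out $Z = Z_1 \cap \cdots \cap Z_d$: in a local trivialization where $Z_i = V(z_i)$, this section is $y \mapsto (z_1(y),\dots,z_d(y))$. This defines $\phi: Y \to [\A^d/T]$.

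Next I would check the two asserted properties. For $\phi^*(\boldsymbol t_i) = \boldsymbol Z_i$: under the identification $A^*([\A^d/T]) \cong A^*_T(\A^d) \cong \Q[\boldsymbol t_1,\dots,\boldsymbol t_d]$, the class $\boldsymbol t_i$ is (as noted in the excerpt) the class of $V(x_i)$, i.e. the first Chern class of the $i$th tautological line on the stack. Pulling back along $\phi$ replaces the $i$th tautological line by $\mathcal O_Y(Z_i)$ (this is exactly how the classifying map is set up), so $\phi^*(\boldsymbol t_i) = c_1(\mathcal O_Y(Z_i)) = \boldsymbol Z_i$, using that $Z_i$ is an effective Cartier divisor and hence its class equals $c_1(\mathcal O_Y(Z_i))$. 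For $\phi|_Z : Z \to [0/T] = B T$: over a point $y \in Z$ all the $z_i(y)$ vanish, so the canonical section lands in the origin $0 \in \A^d$; hence the restriction of $\phi$ to $Z$ factors through the closed substack $[0/T] \hookrightarrow [\A^d/T]$, which is precisely the statement $\phi|_Z : Z \to [0/T]$.

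Finally, smoothness of $\phi$: the morphism $[\A^d/T] \to BT$ is smooth (it is a vector-bundle stack over $BT$, of relative dimension $d$), and $\phi$ is the composite $Y \to BT \xrightarrow{\text{section-twist}} [\A^d/T]$; more directly, $\phi$ is flat because $\mathscr F \to Y$ is an fppf $T$-torsor and the section is a regular section (the $z_i$ form a regular sequence locally, since $Z$ is a transverse complete intersection of smooth divisors), so $\phi$ is a local complete intersection morphism of the expected relative dimension, hence smooth. I would spell this out by working in an affine open $U$ with $Z_i|_U = V(z_i)$: there $\phi|_U$ is literally the map $U \to [\A^d/T]$ given by the coordinates $(z_1,\dots,z_d): U \to \A^d$ composed with the quotient, and smoothness is the statement that $(z_1,\dots,z_d)$ is a smooth morphism onto its image in the relevant sense — which follows from transversality.

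The main obstacle, and the step deserving the most care, is the smoothness claim: one must argue it globally (not just on the affine charts where the $z_i$ are coordinates) and in the stacky setting, which means being careful that $\mathscr F$ really is a $T$-torsor over all of $Y$ (this is where "meeting locally transversally" and the hypothesis that each $Z_i$ is a smooth Cartier divisor are used) and that the relative dimension count $\dim Y - \dim[\A^d/T] = \dim Y - (\dim \A^d - \dim T) = \dim Y$ is consistent with $\phi$ being smooth of relative dimension $\dim Y$ — i.e. that $\phi$ is genuinely a smooth (indeed flat, lci) morphism rather than merely a morphism of stacks. Everything else (the two equalities) is a direct unwinding of the definition of the classifying map and of $\boldsymbol t_i$ as a tautological Chern class.
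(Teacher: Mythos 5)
Your construction is the same as the paper's: both take the principal $T$-bundle $\mathscr F$, use the sections $\sigma_i$ of $\mathcal O_Y(Z_i)$ to produce a $T$-equivariant map $\Phi:\mathscr F\to\A^d$ (locally $(y,t)\mapsto(t_1z_1(y),\dots,t_dz_d(y))$), descend to $\phi:Y\to[\A^d/T]$, and then read off $\phi|_Z\colon Z\to[0/T]$ and $\phi^*(\boldsymbol t_i)=\boldsymbol Z_i$ exactly as you do. Those two verifications are fine.

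The one step with a genuine gap is the smoothness argument, and it is precisely the step you flagged as delicate. Your ``more direct'' version --- that on an affine chart $\phi|_U$ is $(z_1,\dots,z_d):U\to\A^d$ followed by the quotient, and that smoothness ``follows from transversality'' because this map is smooth onto its image --- does not work as stated: the map $(z_1,\dots,z_d):U\to\A^d$ is a submersion only near $Z$, where transversality forces the $dz_i$ to be independent; away from $Z$ the differentials of the $\sigma_i$ may degenerate, and ``smooth onto its image'' is in any case not the condition you need. Smoothness of $\phi$ is equivalent to smoothness of the base-changed map $\mathscr F|_U\cong U\times T\to\A^d$, $(y,t)\mapsto(t_iz_i(y))_i$, and it is the extra torus directions that rescue surjectivity of the differential at points where some $z_i\neq 0$ (the $\partial/\partial t_i$ derivative is $z_i(y)\,dt_i\neq 0$ there), while transversality of the $Z_i$ handles the points of $Z_i$. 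The paper packages this as: each $\Phi_i$ is smooth in a neighborhood of the fiber over $0\in\A^1$ because $Z_i$ is a smooth divisor, and then smooth everywhere by $\Gm$-equivariance, since $\A^1\smallsetminus\{0\}$ is a single orbit. Your primary argument (``$\phi$ is lci of the expected relative dimension, hence smooth'') also does not close on its own --- lci of the right dimension does not imply smooth; you would still need flatness plus smooth fibers (e.g.\ miracle flatness applied to $\Phi$, whose fibers are all of dimension $\dim Y$) or the direct submersion check on $\mathscr F$ just described. So: right approach, correct statement-level verifications, but the smoothness step needs to be run on the torsor $\mathscr F$ rather than on $Y$ itself.
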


\begin{proof}
Let $\sigma_i$ be the section of $\mathcal{O}_{Y}(Z_i)$ associated to $Z_i$.
We can construct a map $\Phi : \mathscr F \ra \A^d$ given by $(y,t) \mapsto (\sigma_1(y),...,\sigma_d(y))$.\footnote{Note that since we are working with stacks, a point-wise definition of the map $\phi$ is not enough, as we are working with generalized points, i.e. maps from schemes.}
This map is $T$-equivariant and we are now going to prove it is also smooth.

Let $\Phi_i: \mathscr F \ra \A^1$ defined by $(y,t) \mapsto \sigma_i(y)$, we will show that each of the $\Phi_i$ is smooth.  
Since the $Z_i$ are smooth divisors, their pre-images $Z_{i,\mathscr F}$ in $\mathscr F$ are also smooth, as are the restrictions $\Phi_i|_{Z_{i,\mathscr F}}:Z_{i,\mathscr F}\to \{ 0 \}$. Since smoothness is local, $\Phi_i$ is smooth in a neighborhood of 0, and is therefore smooth since all nonzero points are in the same orbit of the torus action. Since the $\Phi_i$ are smooth, so is $\Phi$.

By definition $\Phi$ induces a smooth map on the quotient stacks, $\phi: Y \ra [\A^d/T]$.

Note that the locus $Z_i$ is exactly the intersection of the section $\sigma_i$ with the zero section of $\mathcal{O}_Y(Z_i)$. Now, the locus $V(x_i) \subset \A^d$ pulls back exactly to the points $(Z_i,t)$ for every $t \in T$, therefore the pullback of $\boldsymbol t_i \in A^*([\A^d/T])$ is exactly $\boldsymbol{Z}_i$ as desired. 
\end{proof}

Now that we have the map $\phi$, we turn our attention to proving \cref{main} in the special case of a weighted blowup of $[\A^d/T]$ centered at $[0/T]$. The proof of this special case utilizes the fact that if one first performs a root stack of $\mathbb A^d$, and then performs an ordinary blowup of the resulting root stack, there exists a map to the total transform of the weighted blowup of $\mathbb A^d$, such that the exceptional divisor pushes forward to the exceptional divisor of the weighted blowup; see \cref{eqn:rootcommdiagram}. In order to make use of this fact, we must determine the Chow ring of the root stack of $\mathbb A^d$, as well as the action of the pushforward map defined with respect to the root stack.

\begin{lemma} \label{chow ring of the root stack}
     Let $ r:\sqrt[\vec w]{\mathbb A^d}  \ra \A^d$ be the composition of the $d$ root stacks of the divisors $V(x_i)$ with weight $w_i \in \mathbb Z_{>0}$, where we have introduced the shorthand $\sqrt[\vec w]{\mathbb A^d}$ $:=$ $\A^d(\sqrt[w_1]{V(x_1)},...,\sqrt[w_d]{V(x_d)})$. Then
    \begin{enumerate}
        \item $A^*_T(\sqrt[\vec w]{\mathbb A^d})=\mathbb Q[\boldsymbol s_1,...,\boldsymbol s_d]$
        \item $r_*: A^*_T(\sqrt[\vec w]{\mathbb A^d}) \ra A^*_T(A^d)$ maps the generators $\boldsymbol s_i \mapsto \frac{\boldsymbol t_i}{w_i}$.
    \end{enumerate}
\end{lemma}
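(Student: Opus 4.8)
The plan is to realise $\sqrt[\vec w]{\mathbb A^d}$ explicitly as a quotient stack and then reduce both assertions to elementary computations with $T$-equivariant Chow rings. Since the $w_i$-th root stack of the coordinate hyperplane $V(x_i)$ only affects the $i$-th copy of $\A^1$, composing the $d$ root stacks gives $\sqrt[\vec w]{\mathbb A^d}=[\A^d_u/K]$, where $u_i^{w_i}=x_i$ and $K=\prod_i\mu_{w_i}$ is embedded in $\tilde T:=\Gm^d$ through the $w_i$-th roots of unity in the $i$-th factor. Here $\tilde T$ acts on $\A^d_u$ with weight one on each $u_i$, and it covers the standard $T$-action on $\A^d_x$ via the homomorphism $\tilde T\to T$ that is the $w_i$-th power on the $i$-th factor and has kernel $K$; thus the residual $\tilde T/K\cong T$ action on $[\A^d_u/K]$ is precisely the one implicit in the statement, and $r$ is induced by $\A^d_u\to\A^d_x,\ u\mapsto(u_1^{w_1},\dots,u_d^{w_d})$.

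Part (1) is then immediate: iterating quotients gives $A^*_T\big([\A^d_u/K]\big)=A^*\big([\A^d_u/\tilde T]\big)=A^*_{\tilde T}(\A^d_u)$, and since $\A^d_u$ is a linear representation of $\tilde T$ this equals $A^*_{\tilde T}(\mathrm{pt})=\mathbb Q[\boldsymbol s_1,\dots,\boldsymbol s_d]$, where I take $\boldsymbol s_i$ to be the $\tilde T$-equivariant class of $V(u_i)$, equivalently $c_1$ of the tautological root line bundle $M_i=\mathcal O(\mathcal E_i)$ on $\sqrt[\vec w]{\mathbb A^d}$.

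For part (2) I would compute $r_*$ on generators by means of the projection formula. Pulling the character $\boldsymbol t_i$ of $T$ back along $\tilde T\to T$ produces $w_i\boldsymbol s_i$, since the $i$-th factor of $\tilde T$ maps to the $i$-th factor of $T$ by the $w_i$-th power; geometrically this is just the defining relation $r^*\mathcal O(V(x_i))=M_i^{\otimes w_i}$ of the root stack, so $r^*\boldsymbol t_i=w_i\boldsymbol s_i$. On the other hand $r$ is proper and restricts to an isomorphism over the open torus $T_x\subset\A^d_x$ (there $u\mapsto u^{\vec w}$ is a $K$-torsor, hence $[T_u/K]\xrightarrow{\sim}T_x$), so $r_*(1)=[\A^d_x]=1$. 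The projection formula then yields $w_i\,r_*\boldsymbol s_i=r_*\big(r^*\boldsymbol t_i\big)=\boldsymbol t_i$, i.e. $r_*\boldsymbol s_i=\boldsymbol t_i/w_i$, as claimed.

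The step needing the most care is the stack-theoretic bookkeeping in part (2): one must use the conventions under which $r_*(1)=1$ (and not $1/\prod_i w_i$) even though $\sqrt[\vec w]{\mathbb A^d}$ is genuinely stacky, while the factor $1/w_i$ emerges as the ``orbifold degree'' of the $\mu_{w_i}$-gerbe $\mathcal E_i\to V(x_i)$. As an independent check one can instead compute $r_*\boldsymbol s_i=r_*[\mathcal E_i]$ directly: $r|_{\mathcal E_i}$ is generically a $\mu_{w_i}$-gerbe over $V(x_i)$, hence of degree $1/w_i$, giving $r_*[\mathcal E_i]=\tfrac1{w_i}[V(x_i)]=\boldsymbol t_i/w_i$. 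The remaining ingredients are standard facts about equivariant Chow groups of quotient stacks: invariance under replacing an action of $G$ on $X$ by the action of $G/N$ on $[X/N]$, and homotopy invariance, which identifies the equivariant Chow ring of a representation with that of a point.
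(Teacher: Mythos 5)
Your proof is correct, and it is somewhat more self-contained than the one in the paper, which disposes of the lemma in two sentences by citation. For part (1) the paper argues that $[\sqrt[\vec w]{\mathbb A^d}/T]$ and $[\A^d/T]$ have the same coarse moduli space and invokes Vistoli's Proposition 6.1 to conclude that the rational Chow rings agree; you instead exhibit $\sqrt[\vec w]{\mathbb A^d}$ as $[\A^d_u/K]$ with $K=\prod_i\mu_{w_i}\subset\tilde T$ and compute $A^*_{\tilde T}(\A^d_u)\cong A^*_{\tilde T}(\mathrm{pt})=\mathbb Q[\boldsymbol s_1,\dots,\boldsymbol s_d]$ directly from the quotient presentation. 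For part (2) the paper simply applies Vistoli's definition of proper pushforward, which amounts to the degree computation you relegate to your ``independent check'' (the residual $\mu_{w_i}$-gerbe along $\mathcal E_i\to V(x_i)$ contributes the factor $1/w_i$); your primary argument via the projection formula, using $r^*\boldsymbol t_i=w_i\boldsymbol s_i$ (from $x_i=u_i^{w_i}$) together with $r_*(1)=1$ (since $r$ is an isomorphism over the open torus), is a clean alternative that makes the source of the $1/w_i$ transparent without unwinding the stacky degree convention. Both routes are valid; yours makes explicit the quotient-stack bookkeeping that the paper leaves implicit, which is also the structure exploited in the subsequent \cref{the other lemma}.
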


\begin{proof}
    This follows easily from the identification $A^*_T(\sqrt[\vec w]{\mathbb A^d})\cong A^*([\sqrt[\vec w]{\mathbb A^d}/T])$. In particular, $(1)$ follows from the fact that $[\sqrt[\vec w]{\mathbb A^d}/T]$ and $[\A^d/T]$ have the same coarse moduli space, which by \cite[Proposition 6.1]{Vistoli:1989aa} implies that the Chow rings are isomorphic. Moreover $(2)$ is a straightforward application of \cite[Definition 3.6]{Vistoli:1989aa}.
 \end{proof}

\begin{lemma} \label{the other lemma}
    Let $f: \text{Bl}^{\vec{w}}_0\A^d \ra \A^d$ be the weighted blowup of $\A^d$ at the origin with weights $w_1,...,w_d$. Let $\boldsymbol E$ be the class of the exceptional divisor $E$ in the $T$-equivariant Chow ring $A_T^*(\text{Bl}^{\vec{w}}_0\A^d)$.  Then $$f_*\boldsymbol E^n=(-1)^{d+1}h_{n-d}\left(\frac {\boldsymbol t_1} {w_1},...,\frac {\boldsymbol t_d} {w_d} \right)\frac {\boldsymbol t_1} {w_1}... \frac {\boldsymbol t_d} {w_d}.$$
\end{lemma}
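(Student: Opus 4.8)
\emph{Strategy.} The plan is to deduce the lemma from the classical pushforward formula for \emph{ordinary} blowups along a transverse complete intersection --- the $w_i = 1$ case of \cref{main}, see \cite{fulton,Fullwood:2011zb,Fullwood:2012kj,Esole:2017kyr} --- by factoring the weighted blowup through a root stack. First I would introduce three morphisms: the root-stack structure map $r : \sqrt[\vec w]{\A^d} \ra \A^d$ (notation as in \cref{chow ring of the root stack}), the \emph{ordinary} blowup $\beta : W \ra \sqrt[\vec w]{\A^d}$ along the transverse complete intersection $\bigcap_i \sqrt[w_i]{V(x_i)}$ of the $d$ root divisors, with exceptional divisor $\mathcal E$ of class $\boldsymbol{\mathcal E} \in A_T^*(W)$, and the morphism $g : W \ra \text{Bl}^{\vec w}_0 \A^d$ induced on the corresponding $\mathscr P\text{roj}$'s of Rees algebras, which fits into the commutative square $f \circ g = r \circ \beta$ of \cref{eqn:rootcommdiagram} and in the standard blowup charts is given by $e \mapsto e$, $x_i' \mapsto (s_i')^{w_i}$. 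From this chart description one checks that $g$ is proper and representable of degree one --- it is a root-stack structure map in each $x_i'$-direction --- so $g_* 1 = 1$, and that the exceptional divisor $E = V(e)$ of $f$ pulls back to $\mathcal E = V(e)$ with multiplicity one, so $g^* \boldsymbol E = \boldsymbol{\mathcal E}$. The projection formula then yields
\[
g_* \boldsymbol{\mathcal E}^n = g_*\!\left( (g^* \boldsymbol E)^n \right) = \boldsymbol E^n\, g_* 1 = \boldsymbol E^n, \qquad\text{hence}\qquad f_* \boldsymbol E^n = f_* g_* \boldsymbol{\mathcal E}^n = r_*\!\left( \beta_* \boldsymbol{\mathcal E}^n \right),
\]
so it remains to evaluate the two pushforwards on the right-hand side.

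\emph{Evaluating $\beta_*$ and $r_*$.} For $\beta_*$ I would invoke \cref{chow ring of the root stack}(1), which gives $A_T^*(\sqrt[\vec w]{\A^d}) = \mathbb Q[\boldsymbol s_1, \dots, \boldsymbol s_d]$ with $\boldsymbol s_i$ the class of $\sqrt[w_i]{V(x_i)}$; thus, $T$-equivariantly, the pair $\bigl(\sqrt[\vec w]{\A^d}, \{\sqrt[w_i]{V(x_i)}\}\bigr)$ carries exactly the same Chow-theoretic data as $\bigl(\A^d, \{V(x_i)\}\bigr)$, so the classical formula for the pushforward of powers of the exceptional class of an ordinary blowup along a transverse complete intersection of $d$ divisors of classes $\boldsymbol s_1, \dots, \boldsymbol s_d$ --- whose proof uses only that $\mathcal E$ is the projectivized normal bundle of the center together with the Grothendieck relation, both unaffected in this smooth Deligne--Mumford setting --- applies and gives $\beta_* \boldsymbol{\mathcal E}^n = (-1)^{d+1} h_{n-d}(\boldsymbol s_1, \dots, \boldsymbol s_d)\, \boldsymbol s_1 \cdots \boldsymbol s_d$. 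For $r_*$, note that $r^* : A_T^*(\A^d) \to A_T^*(\sqrt[\vec w]{\A^d})$ sends $\boldsymbol t_i \mapsto w_i \boldsymbol s_i$ and is therefore a surjective (indeed bijective) ring map; writing any class as $r^* \tilde\alpha$ and combining the projection formula with $r_* 1 = 1$ forces $r_* = (r^*)^{-1}$, so $r_*$ is the \emph{ring} homomorphism $\boldsymbol s_i \mapsto \boldsymbol t_i / w_i$ of \cref{chow ring of the root stack}(2). Applying it to the expression for $\beta_* \boldsymbol{\mathcal E}^n$ gives
\[
f_* \boldsymbol E^n = r_*\!\left( \beta_* \boldsymbol{\mathcal E}^n \right) = (-1)^{d+1} h_{n-d}\!\left( \frac{\boldsymbol t_1}{w_1}, \dots, \frac{\boldsymbol t_d}{w_d} \right) \frac{\boldsymbol t_1}{w_1} \cdots \frac{\boldsymbol t_d}{w_d},
\]
which is the assertion.

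\emph{The main obstacle.} The genuinely delicate step is the first one: constructing $g$ rigorously from the $\mathscr P\text{roj}$'s of the relevant Rees algebras, fitting it into the square of \cref{eqn:rootcommdiagram}, and verifying that it is representable of degree one with $g^* \boldsymbol E = \boldsymbol{\mathcal E}$ --- i.e.\ that the two exceptional divisors are matched with multiplicity one rather than differing by a power of the exceptional coordinate, and that the stack structures on source and target are compatible. Everything after that is a mechanical application of the projection formula together with the two parts of \cref{chow ring of the root stack}.
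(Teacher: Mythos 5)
Your proposal is correct and follows essentially the same route as the paper: the same square $f\circ r' = r\circ g$ from \cref{eqn:rootcommdiagram} (your $g$ and $\beta$ are the paper's $r'$ and $g$), the same observation that the de-rooting map does not root the exceptional divisor and hence pushes $\boldsymbol{\mathcal E}^n$ forward to $\boldsymbol E^n$, the same appeal to the ordinary-blowup pushforward formula of \cite{Esole:2017kyr} on the root stack, and the same use of \cref{chow ring of the root stack} to evaluate $r_*$. The only cosmetic difference is that you justify $r'_*\boldsymbol{\mathcal E}^n=\boldsymbol E^n$ via the projection formula and degree-one-ness rather than by the paper's one-line remark, which amounts to the same thing.
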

\begin{proof}

%Bl_{\sqrt[w_1]{V(x_1)} \cap \cdots \cap \sqrt[w_d]{V(x_d)}}(\A^d(\sqrt[w_1]{V(x_1)},...,\sqrt[w_d]{V(x_d)}) )
%\A^d(\sqrt[w_1]{V(x_1)},...,\sqrt[w_d]{V(x_d)})
    As Queck and Rydh discuss in \cite{weighted}, the weighted blowup of $0$ in $\A^d$ can be thought as an ordinary blowup of the root stacks $\sqrt[\vec w]{\mathbb A^d}$ followed by a sequence of ``de-rooting''; this is to say that the weighted blowup sits into the Cartesian diagram
%    $$\begin{tikzcd}
%Bl_{\sqrt[\vec{w}]{0}}(\sqrt[\vec{w}]{\mathbb A^d}) \arrow[r,"r'"] \arrow[d,"g"] & Bl_0^{\vec w} \mathbb A^d \arrow[d,"f"] \\
%\sqrt[\vec w]{\mathbb A^d} \arrow[r,"r"] & \A^d.
%\end{tikzcd}$$
\begin{align}
\begin{array}{c}
\label{eqn:rootcommdiagram}
	\begin{tikzpicture}
		\node[](1) at (0,0) {$\text{Bl}_{\sqrt[\vec{w}]{0}}(\sqrt[\vec{w}]{\mathbb A^d})$};
		\node[](2) at (3,0) {$\text{Bl}_0^{\vec w} \mathbb A^d $};
		\node[](3) at (0,-2) {$\sqrt[\vec w]{\mathbb A^d}$};
		\node[](4) at (3,-2) {$\A^d$};
		\draw[big arrow] (1) -- node[above,midway]{\small $r'$} (2);
		\draw[big arrow] (2) -- node[right,midway]{\small $f$} (4);
		\draw[big arrow] (1) -- node[left,midway]{\small $g$} (3);
		\draw[big arrow] (3) -- node[above,midway]{\small $r$} (4);
	\end{tikzpicture}
\end{array}
\end{align}
where $r,r'$ are sequences of root stacks of the divisors (respectively) $V(x_i), \widetilde{V(x_i)}$ with weights $w_i$, $g$ is the ordinary blowup of $\sqrt[\vec w]{\mathbb A^d}$ centered at $\sqrt[\vec{w}]{0}=\sqrt[w_1]{V(x_1)} \cap \cdots \cap \sqrt[w_d]{V(x_d)}$, and $f$ is the weighted blowup of $\A^d$ at the origin with weights $w_1,...,w_d$. 

Let $F$ be the exceptional divisor of $g$. 
Since in the sequence of root stacks in $r'$ we do not take the root stack of the exceptional divisor $E$, it follows that $r'_*(\boldsymbol F^n)=\boldsymbol E^n$ and $f_*(r'_*(\boldsymbol F^n))=f_*(\boldsymbol E^n)$. 
On the other hand, by \cite[Lemma 3.4]{Esole:2017kyr} $r_*(g_* (\boldsymbol F^n))=r_*((-1)^{d+1}h_{n-d}(\boldsymbol s_1,\dots ,\boldsymbol s_d)\boldsymbol s_1 \cdots \boldsymbol s_d)$.\footnote{Technically, the proof of \cite[Lemma 3.4]{Esole:2017kyr} is given for schemes. Fortunately, the same proof works for stacks, as well.} 
By commutativity of pushforward and composition, and by \cref{chow ring of the root stack}, we obtain the desired formula. 
\end{proof}

(See \cref{fig:stackyfan} for a toric example of the Cartesian diagram in \cref{eqn:rootcommdiagram}.) We are finally ready to prove \cref{main}:
\begin{proof}
Blowups commute with smooth base change, hence we have the following Cartesian diagram
\begin{align*}
%Y' \arrow[r] \arrow[d] & \left[Bl^{\vec{w}}_0\A^d/T \right]  \arrow[d] \\
%Y \arrow[r, "\phi"] & \left[\A^d/T\right]
\begin{array}{c}
	\begin{tikzpicture}
		\node[](1) at (0,0) {$Y'$};
		\node[](2) at (3,0) {$ \left[\text{Bl}^{\vec{w}}_0\A^d/T \right]$};
		\node[](3) at (0,-2) {$Y$};
		\node[](4) at (3,-2) {$ \left[\A^d/T\right]$};
		\draw[big arrow] (1) -- node[above,midway]{\small $$} (2);
		\draw[big arrow] (2) -- node[right,midway]{\small $$} (4);
		\draw[big arrow] (1) -- node[left,midway]{\small $$} (3);
		\draw[big arrow] (3) -- node[above,midway]{\small $\phi$} (4);
	\end{tikzpicture}
\end{array}
\end{align*}
where the exceptional divisor on the right pulls back to the exceptional divisor $E \in Y'$.
By applying \cref{the other lemma} and pulling back via $\phi^*$ we obtain the desired equality.
\end{proof}
\begin{figure}
\begin{center}
$
\begin{tikzpicture}[scale=1.2]
\node[](U) at (-3,6) {$
\begin{tikzpicture}[scale=1.1]
		\node[draw,circle,scale=.5,fill=red,color=gray] at (0,0){};
		\node[draw,circle,scale=.5,fill=red,color=gray] at (1,0){};
		\node[draw,circle,scale=.5,fill=red,color=gray] at (0,1){};
		\node[draw,circle,scale=.5,fill=red,color=blue] at (2,0){};
		\node[draw,circle,scale=.5,fill=red,color=gray] at (0,2){};
		\node[draw,circle,scale=.5,fill=red,color=gray] at (3,0){};
		\node[draw,circle,scale=.5,fill=red,color=blue] at (0,3){};
		\node[draw,circle,scale=.5,fill=yellow,color=gray] at (1,2){};
		\node[draw,circle,scale=.5,fill=red,color=gray] at (2,1){};
		\node[draw,circle,scale=.5,fill=red,color=gray] at (1,3){};
		\node[draw,circle,scale=.5,fill=red,color=gray] at (3,1){};
		\node[draw,circle,scale=.5,fill=red,color=gray] at (2,2){};
		\node[draw,circle,scale=.5,fill=red,color=red] at (2,3){};
		\node[draw,circle,scale=.5,fill=red,color=gray] at (3,2){};
		\node[draw,circle,scale=.5,fill=red,color=gray] at (3,3){};
		\node[draw,circle,scale=.5,fill=yellow,color=gray] at (1,1){};
	%	\node[draw,circle,scale=.5,fill=red,color=blue] at (-1,-1){};
	%	\node[draw,circle,scale=.5,fill=red,color=gray] at (-1,0){};
	%	\node[draw,circle,scale=.5,fill=red,color=gray] at (-1,1){};
	%	\node[draw,circle,scale=.5,fill=red,color=gray] at (-1,2){};
	%	\node[draw,circle,scale=.5,fill=red,color=gray] at (-1,3){};
	%	\node[draw,circle,scale=.5,fill=red,color=gray] at (3,-1){};
	%	\node[draw,circle,scale=.5,fill=red,color=gray] at (2,-1){};
	%	\node[draw,circle,scale=.5,fill=red,color=gray] at (1,-1){};
	%	\node[draw,circle,scale=.5,fill=red,color=gray] at (0,-1){};
		\node[color=blue,label={right:$\vec v_{\sqrt{x}}$}] (x) at (2,0) {};
		\node[color=blue,label={above:$\vec v_{\sqrt[3]{y}}$}] (y) at (0,3) {};
		%\node[color=blue,label={225:$\vec v_z$}] (z) at (-1,-1) {};
		\node[color=blue,label={45:$\vec v_e$}] (e) at (2,3) {};
		\draw[big blue arrow,very thick,color=blue] (0,0) -- (x);
		\draw[big blue arrow,very thick,color=blue] (0,0) -- (y);
		%\draw[big blue arrow,very thick,color=blue] (0,0) -- (z);
		\draw[big red arrow,very thick,color=red] (0,0) -- (e);
		%\boldsymbol{$\vec v_x$}
	\end{tikzpicture}$};
\node[](W) at (3,6) {$
	\begin{tikzpicture}[scale=1.1]
		\node[draw,circle,scale=.5,fill=red,color=gray] at (0,0){};
		\node[draw,circle,scale=.5,fill=red,color=blue] at (1,0){};
		\node[draw,circle,scale=.5,fill=red,color=blue] at (0,1){};
		\node[draw,circle,scale=.5,fill=red,color=gray] at (2,0){};
		\node[draw,circle,scale=.5,fill=red,color=gray] at (0,2){};
		\node[draw,circle,scale=.5,fill=red,color=gray] at (3,0){};
		\node[draw,circle,scale=.5,fill=red,color=gray] at (0,3){};
		\node[draw,circle,scale=.5,fill=yellow,color=gray] at (1,2){};
		\node[draw,circle,scale=.5,fill=red,color=gray] at (2,1){};
		\node[draw,circle,scale=.5,fill=red,color=gray] at (1,3){};
		\node[draw,circle,scale=.5,fill=red,color=gray] at (3,1){};
		\node[draw,circle,scale=.5,fill=red,color=gray] at (2,2){};
		\node[draw,circle,scale=.5,fill=red,color=red] at (2,3){};
		\node[draw,circle,scale=.5,fill=red,color=gray] at (3,2){};
		\node[draw,circle,scale=.5,fill=red,color=gray] at (3,3){};
		\node[draw,circle,scale=.5,fill=yellow,color=gray] at (1,1){};
	%	\node[draw,circle,scale=.5,fill=red,color=blue] at (-1,-1){};
	%	\node[draw,circle,scale=.5,fill=red,color=gray] at (-1,0){};
	%	\node[draw,circle,scale=.5,fill=red,color=gray] at (-1,1){};
	%	\node[draw,circle,scale=.5,fill=red,color=gray] at (-1,2){};
	%	\node[draw,circle,scale=.5,fill=red,color=gray] at (-1,3){};
	%	\node[draw,circle,scale=.5,fill=red,color=gray] at (3,-1){};
	%	\node[draw,circle,scale=.5,fill=red,color=gray] at (2,-1){};
	%	\node[draw,circle,scale=.5,fill=red,color=gray] at (1,-1){};
	%	\node[draw,circle,scale=.5,fill=red,color=gray] at (0,-1){};
		\node[color=blue,label={right:$\vec v_{x}$}] (x) at (1,0) {};
		\node[color=blue,label={above:$\vec v_{y}$}] (y) at (0,1) {};
		%\node[color=blue,label={225:$\vec v_z$}] (z) at (-1,-1) {};
		\node[color=blue,label={45:$\vec v_e$}] (e) at (2,3) {};
		\draw[big blue arrow,very thick,color=blue] (0,0) -- (x);
		\draw[big blue arrow,very thick,color=blue] (0,0) -- (y);
		%\draw[big blue arrow,very thick,color=blue] (0,0) -- (z);
		\draw[big red arrow,very thick,color=red] (0,0) -- (e);
		%\boldsymbol{$\vec v_x$}
	\end{tikzpicture}$};
	\node[](BD) at (-3,0) {$
	\begin{tikzpicture}[scale=1.1]
		\node[draw,circle,scale=.5,fill=red,color=gray] at (0,0){};
		\node[draw,circle,scale=.5,fill=red,color=gray] at (1,0){};
		\node[draw,circle,scale=.5,fill=red,color=gray] at (0,1){};
		\node[draw,circle,scale=.5,fill=red,color=blue] at (2,0){};
		\node[draw,circle,scale=.5,fill=red,color=gray] at (0,2){};
		\node[draw,circle,scale=.5,fill=red,color=gray] at (3,0){};
		\node[draw,circle,scale=.5,fill=red,color=blue] at (0,3){};
		\node[draw,circle,scale=.5,fill=yellow,color=gray] at (1,2){};
		\node[draw,circle,scale=.5,fill=red,color=gray] at (2,1){};
		\node[draw,circle,scale=.5,fill=red,color=gray] at (1,3){};
		\node[draw,circle,scale=.5,fill=red,color=gray] at (3,1){};
		\node[draw,circle,scale=.5,fill=red,color=gray] at (2,2){};
		\node[draw,circle,scale=.5,fill=red,color=gray] at (2,3){};
		\node[draw,circle,scale=.5,fill=red,color=gray] at (3,2){};
		\node[draw,circle,scale=.5,fill=red,color=gray] at (3,3){};
		\node[draw,circle,scale=.5,fill=yellow,color=gray] at (1,1){};
	%	\node[draw,circle,scale=.5,fill=red,color=blue] at (-1,-1){};
	%	\node[draw,circle,scale=.5,fill=red,color=gray] at (-1,0){};
	%	\node[draw,circle,scale=.5,fill=red,color=gray] at (-1,1){};
	%	\node[draw,circle,scale=.5,fill=red,color=gray] at (-1,2){};
	%	\node[draw,circle,scale=.5,fill=red,color=gray] at (-1,3){};
	%	\node[draw,circle,scale=.5,fill=red,color=gray] at (3,-1){};
	%	\node[draw,circle,scale=.5,fill=red,color=gray] at (2,-1){};
	%	\node[draw,circle,scale=.5,fill=red,color=gray] at (1,-1){};
	%	\node[draw,circle,scale=.5,fill=red,color=gray] at (0,-1){};
		\node[color=blue,label={right:$\vec v_{\sqrt{x}}$}] (x) at (2,0) {};
		\node[color=blue,label={above:$\vec v_{\sqrt[3]{y}}$}] (y) at (0,3) {};
		%\node[color=blue,label={225:$\vec v_z$}] (z) at (-1,-1) {};
		%\node[color=blue,label={45:$\vec v_e$}] (e) at (2,3) {};
		\draw[big blue arrow,very thick,color=blue] (0,0) -- (x);
		\draw[big blue arrow,very thick,color=blue] (0,0) -- (y);
		%\draw[big blue arrow,very thick,color=blue] (0,0) -- (z);
		%\draw[big red arrow,very thick,color=red] (0,0) -- (e);
		%\boldsymbol{$\vec v_x$}
	\end{tikzpicture}$};
	\node[](rBD) at (3,5.76-6) {$
	\begin{tikzpicture}[scale=1.1]
		\node[draw,circle,scale=.5,fill=red,color=gray] at (0,0){};
		\node[draw,circle,scale=.5,fill=red,color=blue] at (1,0){};
		\node[draw,circle,scale=.5,fill=red,color=blue] at (0,1){};
		\node[draw,circle,scale=.5,fill=red,color=gray] at (2,0){};
		\node[draw,circle,scale=.5,fill=red,color=gray] at (0,2){};
		\node[draw,circle,scale=.5,fill=red,color=gray] at (3,0){};
		\node[draw,circle,scale=.5,fill=red,color=gray] at (0,3){};
		\node[draw,circle,scale=.5,fill=yellow,color=gray] at (1,2){};
		\node[draw,circle,scale=.5,fill=red,color=gray] at (2,1){};
		\node[draw,circle,scale=.5,fill=red,color=gray] at (1,3){};
		\node[draw,circle,scale=.5,fill=red,color=gray] at (3,1){};
		\node[draw,circle,scale=.5,fill=red,color=gray] at (2,2){};
		\node[draw,circle,scale=.5,fill=red,color=gray] at (2,3){};
		\node[draw,circle,scale=.5,fill=red,color=gray] at (3,2){};
		\node[draw,circle,scale=.5,fill=red,color=gray] at (3,3){};
		\node[draw,circle,scale=.5,fill=yellow,color=gray] at (1,1){};
	%	\node[draw,circle,scale=.5,fill=red,color=blue] at (-1,-1){};
	%	\node[draw,circle,scale=.5,fill=red,color=gray] at (-1,0){};
	%	\node[draw,circle,scale=.5,fill=red,color=gray] at (-1,1){};
	%	\node[draw,circle,scale=.5,fill=red,color=gray] at (-1,2){};
	%	\node[draw,circle,scale=.5,fill=red,color=gray] at (-1,3){};
	%	\node[draw,circle,scale=.5,fill=red,color=gray] at (3,-1){};
	%	\node[draw,circle,scale=.5,fill=red,color=gray] at (2,-1){};
	%	\node[draw,circle,scale=.5,fill=red,color=gray] at (1,-1){};
	%	\node[draw,circle,scale=.5,fill=red,color=gray] at (0,-1){};
		\node[color=blue,label={right:$\vec v_{x}$}] (x) at (1,0) {};
		\node[color=blue,label={above:$\vec v_{y}$}] (y) at (0,1) {};
		%\node[color=blue,label={225:$\vec v_z$}] (z) at (-1,-1) {};
		%\node[color=blue,label={45:$\vec v_e$}] (e) at (2,3) {};
		\draw[big blue arrow,very thick,color=blue] (0,0) -- (x);
		\draw[big blue arrow,very thick,color=blue] (0,0) -- (y);
		%\draw[big blue arrow,very thick,color=blue] (0,0) -- (z);
		%\draw[big red arrow,very thick,color=red] (0,0) -- (e);
		%\boldsymbol{$\vec v_x$}
	\end{tikzpicture}$};
	\draw[big arrow] (BD) -- node[above,midway,scale=.8]{$r$}++ (4.2,0); 
	\draw[big arrow] (U) --node[above,midway,scale=.8]{$r'$} ++ (4.2,0); 
	\draw[big arrow] (W) --node[right,pos=.58,scale=.8]{$f$} (rBD); 
	\draw[big arrow] (U) --node[left,midway,scale=.8]{$g$} (BD); 
\end{tikzpicture}
$
\end{center}
\caption{The proof of \cref{main} exploits the fact that a weighted blowup of a toric space can be viewed as composition of a root stack, an ordinary blowup, and a ``de-rooting'', as depicted in \cref{eqn:rootcommdiagram}. We illustrate this composition of maps using (stacky) toric fans in the lattice $\mathbb Z^2$ with coordinates $\vec v =(v_1,v_2)$, as illustrated on p. 43 of \cite{weighted}. \textbf{Lower right:} Ordinary fan for $\mathbb C^2$. \textbf{Lower left:} Stacky fan for $\sqrt[\vec w]{\mathbb C^2}$, where $w_x = 2, w_y = 3$. \textbf{Upper left:} Ordinary blowup of $\sqrt[\vec w]{\mathbb C^2}$, centered at the origin $V(\sqrt{x},\sqrt[3]{y})$. \textbf{Upper right:} Weighted blowup of $\mathbb C^2$ centered at the origin. Notice that the weighted blowup, which can be implemented by the substitution $x \to e^2 x, y \to e^3 y$ is formally equivalent to the substitution $\sqrt{x} \to e \sqrt{x} , \sqrt[3]{y} \to e \sqrt[3]{y}$, which is consistent with the root stack $r'$.}
\label{fig:stackyfan}
\end{figure}
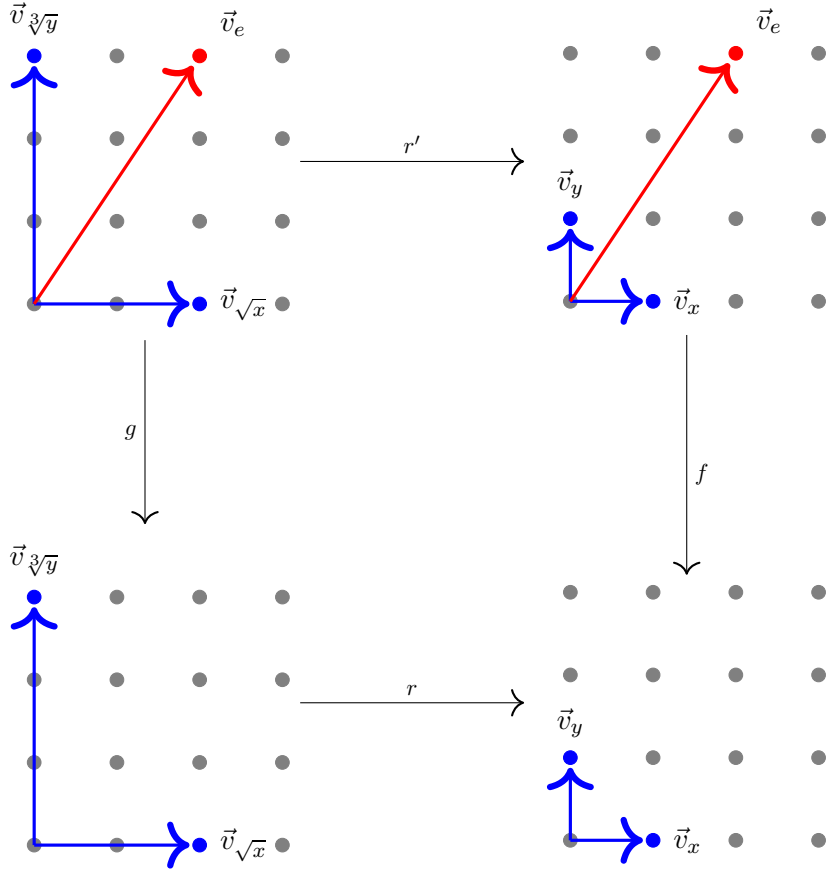

\begin{corollary} \label{maincor} Let $Z \subset Y$ be the complete intersection of $d$ smooth hypersurfaces $Z_1, \dots , Z_d$ meeting locally transversally in $Y$. Let $f :  Y' \rightarrow Y$ be the weighted blowup of $Y$ centered at $Z$, with weights $w_1, \dots, w_d \in \mathbb Z_{>0}$. We denote the class of the exceptional divisor of $f$ by $\boldsymbol E$, and the classes of the hypersurfaces by $\boldsymbol Z_1, \dots, \boldsymbol Z_d$. Let $  Q'(t) =  \sum_a f^*  Q_a t^a$ be a formal power series with $ Q_a \in A^*(Y)$. We define the associated formal power series $ Q(t) =\sum_a  Q_a t^a$ whose coefficients pull back to the coefficients of $ Q'(t)$. Then the pushforward $f_*  Q'(\boldsymbol E)$ is
	\begin{align}
	\label{eqn:maincor}
		f_*  Q'(\boldsymbol E) = \sum_{n=1}^d Q\left(\frac{\boldsymbol Z_n}{w_n}\right)  M_n, ~~~~ M_n = \prod_{\substack{m=1\\m\ne n}}^d \frac{\frac{\boldsymbol Z_m}{w_m}}{\frac{\boldsymbol Z_m}{w_m} - \frac{\boldsymbol Z_n}{w_n}}.
	\end{align}	
\end{corollary}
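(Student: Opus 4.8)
The plan is to reduce \cref{maincor} to \cref{main} via the projection formula, after which only a classical identity for complete homogeneous symmetric polynomials remains. First, note that $A^*(Y')$ vanishes above degree $\dim Y'$, so $\boldsymbol E$ is nilpotent and $Q'(\boldsymbol E)=\sum_a f^*Q_a\,\boldsymbol E^a$ is a finite sum; the projection formula $f_*(f^*Q_a\cdot\boldsymbol E^a)=Q_a\,f_*\boldsymbol E^a$ then gives $f_*Q'(\boldsymbol E)=\sum_a Q_a\,f_*\boldsymbol E^a$. The $a=0$ term equals $Q_0\,f_*\boldsymbol E^0=Q_0\,f_*[Y']=Q_0$ since $f$ is birational, and for $a\geq 1$ we insert \cref{eqn:main}. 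Abbreviating $u_i:=\boldsymbol Z_i/w_i$, this produces
\begin{align*}
 f_*Q'(\boldsymbol E)=Q_0+(-1)^{d+1}\,u_1\cdots u_d\sum_{a\geq 1}Q_a\,h_{a-d}(u_1,\dots,u_d).
\end{align*}

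It then remains to identify the right-hand side of \cref{eqn:maincor} with this expression. Both sides depend $\mathbb Q$-linearly (through finite sums) on the coefficient sequence $(Q_a)$, so it suffices to treat the monomials $Q(t)=t^a$. For $a=0$ the claim is $\sum_{n=1}^d M_n=1$, which holds because $M_n=\ell_n(0)$ for the Lagrange basis $\ell_n(x)=\prod_{m\neq n}\tfrac{x-u_m}{u_n-u_m}$ attached to the nodes $u_1,\dots,u_d$ and $\sum_n\ell_n\equiv 1$; this matches $f_*\boldsymbol E^0=1$. For $a\geq 1$ the claim is
\begin{align*}
 \sum_{n=1}^d u_n^a\,M_n=(-1)^{d+1}\,u_1\cdots u_d\,h_{a-d}(u_1,\dots,u_d),
\end{align*}
which is precisely $f_*\boldsymbol E^a$ from \cref{main}. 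To prove it, rewrite $M_n=(-1)^{d+1}u_1\cdots u_d\,/\,\bigl(u_n\prod_{m\neq n}(u_n-u_m)\bigr)$, so that the claim becomes the classical symmetric-function identity $\sum_n u_n^{a-1}/\prod_{m\neq n}(u_n-u_m)=h_{a-d}(u_1,\dots,u_d)$. A self-contained derivation is to partial-fraction the rational function $\prod_{n=1}^d\tfrac{u_n t}{1-u_n t}$ as $\sum_{n=1}^d\tfrac{(-1)^{d+1}u_nM_n\,t}{1-u_n t}$ and compare coefficients of $t^a$, using the generating function $\prod_n(1-u_n t)^{-1}=\sum_{m\geq 0}h_m(u_1,\dots,u_d)\,t^m$ on the left-hand side. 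Assembling the monomial cases gives $\sum_n Q(u_n)M_n=Q_0\sum_n M_n+\sum_{a\geq 1}Q_a\sum_n u_n^a M_n=Q_0+\sum_{a\geq 1}Q_a\,f_*\boldsymbol E^a=f_*Q'(\boldsymbol E)$, as required.

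The main (mild) subtlety — and the only real obstacle — is that $M_n$ involves division by $u_m-u_n=\boldsymbol Z_m/w_m-\boldsymbol Z_n/w_n$, which need not be invertible in $A^*(Y)$. The resolution is that the full combination $\sum_n Q(u_n)M_n$ is symmetric in $u_1,\dots,u_d$ and, by the identity just established, equals $Q_0+(-1)^{d+1}u_1\cdots u_d\sum_{m\geq 0}Q_{m+d}h_m(u_1,\dots,u_d)$, a genuine element of $A^*(Y)$ with no denominators. Hence \cref{eqn:maincor} should be read as an identity of rational functions in formal variables $u_1,\dots,u_d$, which after cancellation of denominators specializes unambiguously under $u_i\mapsto\boldsymbol Z_i/w_i$; equivalently, one may simply take the denominator-free monomial identities above as the content of the corollary.
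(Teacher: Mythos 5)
Your proof is correct and follows essentially the same route the paper takes: the paper's proof of \cref{maincor} is a one-line citation of the argument for \cite[Theorem 1.8]{Esole:2017kyr}, and what you have written out — projection formula to reduce to $f_*\boldsymbol E^a$, separate treatment of the $a=0$ term via $f_*[Y']=[Y]$, and the Lagrange/partial-fraction identity $\sum_n u_n^a M_n=(-1)^{d+1}u_1\cdots u_d\,h_{a-d}(u_1,\dots,u_d)$ — is precisely that argument adapted to the weighted classes $u_i=\boldsymbol Z_i/w_i$. Your closing remark on reading \cref{eqn:maincor} as a denominator-free symmetric identity is the right way to handle the formal inverses and matches the standard caveat accompanying these pushforward formulae.
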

\begin{proof}
This is completely analogous to the proof of \cite[Theorem 1.8]{Esole:2017kyr}.
\end{proof}

\begin{remark}[\bf{``Fractional'' weights}] \label{rem:frac}
As discussed in the Introduction, the motivating physics application of \cref{eqn:main,eqn:maincor} is computing pushforwards of intersection numbers of weighted crepant blowups of complete intersection CY subvarieties $X= X_1 \cap \cdots \cap X_p \subset Y$, where $Y$ is an algebraic variety, often projective toric. It is sometimes the case that a weighted blowup centered at a complete intersection of hyperplanes $Z = Z_1 \cap \cdots \cap Z_d$ must necessarily be performed with ``fractional weights'' 
	\begin{align}
		w_i \in \mathbb Q_{>0}
	\end{align}
in order to satisfy \cref{eqn:crepantcond} and thus preserve the CY condition. In practice, such fractional weighted blowups $Y' \rightarrow Y$ lead to complete intersections $X' \subset Y''$ where the proper transforms $X_i' \subset Y''$ are the vanishing loci of well-defined sections (i.e., sections of integer tensor powers of well-defined line bundles) and $Y'' \cong Y'$ is related by elementary isomorphisms such as $\mathbb P_{1,w,\dots,w}^d \cong \mathbb P^d,\mathbb P^d_{w,w, \dots, w} \cong \mathbb P^d$, with $w\in \mathbb Z_{>0}$---see, e.g., \cite[Section 5.5]{Esole:2014hya} (this example is explored extensively in \cref{sec:SU5}). It is tempting to regard fractional weighted blowups satisfying these conditions as a formal trick that leads to a well-defined subvariety $X' \subset Y''$, where $Y''$ and $Y'$ are related by a sequence of root stacks, hence $A^*(Y'') = A^*(Y')$, and moreover $Y''$ can be regarded as an (integer) weighted blowup of an algebraic stack related similarly to $Y$ that restricts to a crepant weighted blowup of the image of $X$. In the examples discussed in this paper, we simply observe that this trick appears to lead to sensible spaces, and we leave a more careful and systematic treatment of fractional weights to future work.

\end{remark}

\section{Cross Check: Weighted Blowup of $\mathbb P^2$}

\label{sec:P2}

Here, we test the pushforward formulae in \cref{eqn:main,eqn:maincor} on a simple example, namely a weighted blowup of $\mathbb P^2$, for which the intersection numbers can be computed independently using well-known methods in toric geometry. This will serve as our first cross check of these generalized pushforward formulae.

Let $Y = \mathbb P^2$ with homogeneous coordinates $[x:y:z]$, and let
%	\begin{align}
%		\begin{array}{c|ccc}
%			& x & y & z \\\hline
%			\lambda & 1&1&1 
%		\end{array}
%	\end{align}
$\boldsymbol{H}	= c_1(\mathcal O_{\mathbb P^2}(1)) \in A^*(\mathbb P^2)$ denote the class of $V(x)$, satisfying
	\begin{align}
		\boldsymbol{H}^2 = 1. 
	\end{align}
We blow up $\mathbb P^2$ with weights $w_x = 2, w_y = 3$ at the point $[0:0:1]$:
	\begin{align}
		f : Y'  \rightarrow Y,~~~~ Y' = \text{Bl}_{[0:0:1]}^{\vec w} \mathbb P^2,
	\end{align}
In practice, it is convenient to implement the above blowup by abusing notation and making the replacements
	\begin{align}	
		x \rightarrow e^2 x, ~~~~ y \rightarrow e^3 y. 
	\end{align}
The variety $Y'$ has homogeneous coordinates given in \cref{eqn:homcoordBl23P2}.
%	\begin{align}
%	\label{eqn:BlP2GLSM}
%		\begin{array}{cccc}
%			x& y & z & e \\\hline  1 &1 & 1 & 0 \\  2 & 3& 0 & -1 
%		\end{array}.
%	\end{align}
We denote by $\boldsymbol{E}$ the class of the exceptional divisor $E = V(e)$.

Our task is to compute the intersection numbers of $Y'$ using \cref{eqn:main}. In this example, the only nontrivial intersection number is $\boldsymbol{E}^2$. Let $\boldsymbol{x}, \boldsymbol{y} \in A^*(\mathbb P^2)$ denote (resp.) the classes of $V(x),V(y)$. Then, the pushforward is given by 
	\begin{align}
		f_{*} \boldsymbol E^2 = \left( \frac{\boldsymbol{x}}{2} \right)^2  \frac{\frac{\boldsymbol{y}}{3}}{\frac{\boldsymbol{y}}{3} - \frac{\boldsymbol{x}}{2}} +  \left( \frac{\boldsymbol{y}}{3} \right)^2  \frac{\frac{\boldsymbol{x}}{2} }{\frac{\boldsymbol{x}}{2} - \frac{\boldsymbol{y}}{3} } = - \frac{1}{6} \boldsymbol{x}  \boldsymbol{y}.
	\end{align}
Using the fact that $\boldsymbol{x} = \boldsymbol{y} = \boldsymbol{H}$, the above equation implies
	\begin{align}
	\label{eqn:E2BlP2}
		\boldsymbol{E}^2 =- \frac{1}{6} 
	\end{align}
We check the above result by comparing our answer to the answer we obtain using known methods for computing intersection numbers of weighted blowups of projective toric varieties. First, we construct the toric fan for $Y'$ using the toric description in \cref{eqn:homcoordBl23P2}---see \cref{fig:BlP2fan}. Denoting the toric divisors of $Y'$ by $\boldsymbol{D}_{x}, \boldsymbol D_y, \boldsymbol D_z, \boldsymbol D_e$, we find the following standard linear relations $\sum_I \boldsymbol{D}_I \vec v_{I} =0$ among the divisor classes of $Y'$:
	\begin{align}
	\begin{split}
		0&=\boldsymbol{D}_x -\boldsymbol{D}_z + 2 \boldsymbol{D}_e\\
		0 &= \boldsymbol{D}_y - \boldsymbol{D}_z + 3 \boldsymbol{D}_e.
	\end{split}
	\end{align}
Combining the above linear relations with the intersection numbers
	\begin{align}
	\begin{split}
		\boldsymbol{D}_x  \boldsymbol{D}_e &= \text{det}\begin{pmatrix} \vec v_x & \vec v_e\end{pmatrix}^{-1} = \frac{1}{3}\\
		\boldsymbol{D}_e  \boldsymbol{D}_y &= \text{det}\begin{pmatrix} \vec v_e & \vec v_y\end{pmatrix}^{-1} = \frac{1}{2}\\
		\boldsymbol{D}_y  \boldsymbol{D}_z &= \text{det}\begin{pmatrix} \vec v_y & \vec v_z\end{pmatrix}^{-1} = 1 \\
		\boldsymbol{D}_z \boldsymbol{D}_x &= \text{det}\begin{pmatrix} \vec v_z & \vec v_x\end{pmatrix}^{-1} = 1,
	\end{split}
	\end{align}
and the Stanley-Reisner ideal
	\begin{align}
	\begin{split}
		\boldsymbol D_z  \boldsymbol D_e= \boldsymbol D_y  \boldsymbol D_x=0,
	\end{split}
	\end{align}
	we infer that
	\begin{align}
		\boldsymbol{D}_e^2 =\frac{1}{2} \boldsymbol{D}_e  ( \boldsymbol D_z - \boldsymbol D_x) = - \frac{1}{6} = \boldsymbol{E}^2
	\end{align}
as expected. The full identification between divisor classes in the toric and algebraic constructions is:
	\begin{align}
		\boldsymbol{D}_z \leftrightarrow \boldsymbol{H}, ~~~~ \boldsymbol{D}_e \leftrightarrow \boldsymbol{E}. 
	\end{align}

\section{Stress Test: SU(5) model}
\label{sec:SU5}

We next turn our attention to a much more non-trivial example, namely a resolution of the singular SU(5) model (defined below). The SU(5) model is our primary ``stress test'' of the new pushforward formulae presented in \cref{sec:push}, as this model describes a broad class of elliptically-fibered CY varieties whose geometric and physical properties have been studied thoroughly in the context of both global F-theory compactifications (i.e., strings propagating on spacetimes with compact internal dimensions, corresponding to supergravity theories at low energies) and local F-theory compactifications (i.e., strings propagating on spacetimes with non-compact internal dimensions, corresponding to non-gravitational supersymmetric quantum field theories at low energies) of various dimensions--- see, e.g., \cite{Beasley:2008dc,Donagi:2008kj,Esole:2011sm,Grimm:2011fx,Hayashi:2013lra}. Specifically, for both CY 3-fold and 4-fold compactifications, a subset of the topological intersection numbers of all resolutions of the SU(5) model have been computed independently using properties of supersymmetric Yang-Mills theories with SU(5) gauge symmetry. Note that although the full set of resolutions\footnote{Note that these resolutions resolve singular fibers through codimension-two components of the discriminant locus, for generic choice of complex structure (i.e. generic choices of parameters in the defining Weierstrass model.} of the SU(5) model were determined in \cite{Esole:2014hya}, the intersection numbers have only been computed geometrically for a subset of these resolutions \cite{Esole:2015xfa,Jefferson:2021bid}. To the authors' knowledge, the results we present in this section are the first general computation of pushforwards of intersection numbers for a resolution of the SU(5) model comprising weighted blowups.

\subsection{Singular model and resolution}

The SU(5) model is a special example of a singular Weierstrass model. Consider a singular elliptic fibration over a base $B$, in which the fiber of each point in $B$ is defined by the vanishing locus of a cubic in $\mathbb P^2$,
\begin{align}
\label{eqn:Tateform}
		V(y^2 z + a_1 xy z + a_3 y z^2 - (x^3 + a_2 x^2 z+ a_4 x z^2 + a_6 z^3 )).
	\end{align}
Above, $[x:y:z]$ are homogeneous coordinates of $\mathbb P^2$, and the coefficients $a_n$ can be regarded (locally) as functions on $B$, which we assume to be a smooth projective variety of arbitrary dimension. The elliptic fiber becomes singular along the vanishing locus of the discriminant of the cubic in \cref{eqn:Tateform},
	\begin{align}
		V(\Delta) \subset B,
	\end{align}
where
	\begin{align}
	\begin{split}
			f &= -\frac{1}{3} \left(\frac{a_1^2}{4}+a_2\right){}^2+\frac{a_1 a_3}{2}+a_4\\
			g&= \frac{2}{27} \left(\frac{a_1^2}{4}+a_2\right){}^3-\frac{1}{3} \left(\frac{a_1 a_3}{2}+a_4\right) \left(\frac{a_1^2}{4}+a_2\right)+\frac{a_3^2}{4}+a_6\\
			\Delta &=4 f^3 + 27 g^2.
	\end{split}	
	\end{align}
Given a choice of line bundle $\mathcal O(L) \rightarrow B$, we can regard this elliptic fibration as a hypersurface
	\begin{align}
		X_0 \subset Y_0,~~~~
		Y_0 = \mathbb P_B( \mathcal O(2L) \oplus \mathcal O(3L) \oplus \mathcal O ), ~~~~
	\end{align} 
where $Y_0$ is equipped with the projection map 
	\begin{align}
		\varpi : Y_0 \rightarrow B.
	\end{align}
In this construction, the homogeneous coordinates $x,y,z$ and the coefficients $a_n$ are sections of the following line bundles:
	\begin{align}
	\begin{split}
	z &\in \Gamma(\mathcal O(1))\\
	x &\in \Gamma(\mathcal O(1) \otimes \varpi^* \mathcal  O(2L) )\\
	y &\in \Gamma (\mathcal O(1) \otimes \varpi^* \mathcal O(3L))\\
	a_n &\in \Gamma (\varpi^* \mathcal O(nL)).
	\end{split}
	\end{align}	
For $X_0$ to define a vacuum solution of F-theory, we must impose sufficient conditions for $X_0$ to be a CY variety, namely we must require that the first Chern class of $X_0$ vanishes. It is a straightforward calculation to show that the CY condition can be imposed by setting the defining line bundle of the Weierstrass model to be equal to the anticanonical bundle of $B$, 
	\begin{align}
		L = -K_B \implies c_1(X_0) = 0. 
	\end{align}

The SU(5) model is a special case of the above construction that describes a family of singular elliptic fibrations $X_0$. This model is realized explicitly by imposing the following conditions on the sections $a_n$:
	\begin{align}
	\label{eqn:SU5tuning}
		a_1 = a_{1,0},~~~~ a_2 = a_{2,1} s,~~~~ a_3 = a_{3,3} s^2,~~~~ a_4 = a_{4,4} s^3,~~~~ a_6 = a_{6,5} s^5,
	\end{align}
where 
	\begin{align}
		S=V( s ) \subset B
	\end{align}
is a smooth divisor. (Note that by assumption $a_{m,n}$ does not vanish at a generic point of $S$.) The conditions in \cref{eqn:SU5tuning} introduce a Kodaira singularity of type I$_5^{\text{split}}$\footnote{According to a celebrated result due to Kodaira and N\'eron, the possible singularity types of the elliptic fibers over codimension-one components $S$ of the discriminant locus $V(\Delta)$ are classified (roughly) by simple Lie algebras \cite{KodairaII,KodairaIII,NeronClassification}. In F-theory, singular elliptic fibers signal the presence of seven-branes wrapping the discriminant locus $V(\Delta)$ and the singularity types of the elliptic fibers determine various kinematic properties of the seven-branes---for example, the singularity types of a fiber over a generic point in $S$ indicates the type of gauge bundle that ``lives'' on the worldvolume of the configuration of seven-brane wrapping $S$. \label{Kodaira}} in the elliptic fibers over a generic point in $S$, as can be seen by studying the sections $f,g$ and the discriminant $\Delta$ \cite{Bershadsky:1996nh,Katz:2011qp}\footnote{See also \cite{Huang:2018gpl} for a more recent update.}:
	\begin{align}
	 \Delta = s^5 (\cdots).
	\end{align}

We consider a particular resolution\footnote{This may or may not be a partial resolution, depending on the dimension of $B$ and the genericity of the choice of sections $a_n$.} of the SU(5) model, which comprises the following sequence of weighted blowups of the ambient projective bundle $Y_0$:
	\begin{align}
	\label{eqn:SU5resolution}
		f_1 \circ f_2 \circ f_3 \circ f_4 : Y_4 \rightarrow Y_0,
	\end{align}
where
	\begin{align}
		\begin{split}
		\label{eqn:SU5resolution}
			 f_1 = (x,y,s|e_1)_{2,2,1},~~f_2 = (y,e_1|e_2)_{1,1},~~ f_3=(y,s,e_2|e_3)_{\frac{1}{2},\frac{1}{2},\frac{1}{2}} ,~~ f_4 =  (s,e_1|e_4)_{\frac{1}{2},\frac{1}{2}}.
		\end{split}
	\end{align}
The geometry of the proper transform of $X_0$, namely $X_4 \subset Y_4$, is described in detail in Section 5.5 of \cite{Esole:2014hya}. 

\subsection{Intersection theory}
\label{sec:SU5int}
Our goal in this subsection is to compute the intersection numbers of divisors of the resolved elliptic fibration $X_4$. We evaluate these intersection products explicitly by computing their pushforward to the Chow ring of $B$, which we assume to be known. 

Let us begin by choosing a convenient basis of divisors. According to the Shioda-Tate-Wazir theorem \cite{wazir2004arithmetic}, the resolved elliptic fibration $X_4$ has a basis of divisors 
	\begin{align}
	\label{eqn:Xbasis}
		\hat D_I = \hat D_0, \hat D_\alpha, \hat D_i
	\end{align}
where $\hat D_0$ is the class of the holomorphic zero section $V(x , z )$, $\hat D_\alpha$ are the pullbacks of divisors in $B$, and $\hat D_i$ are called `Cartan divisors'. The ambient projective bundle $Y_4$ likewise has a complete basis of divisors, whose classes in $A^*(Y_4)$ we denote by
	\begin{align}
		 \boldsymbol{H}, \boldsymbol{D}_\alpha, \boldsymbol{E}_i,
	\end{align}
where $\boldsymbol{H} = c_1(\mathcal O_{Y_0}(1))$, $\boldsymbol{D}_\alpha$ is the class of the pullback of a divisor $D_\alpha 
\subset B$, and $\boldsymbol{E}_i$ is the exceptional divisor of the $i$th blowup. Since $X_4$ is a hypersurface of $Y_4$, it is possible to express the divisors $\hat D_I \subset X_4$ as complete intersections in $Y_4$: 
	\begin{align}
	\begin{split}
		\hat D_0 ={D}_0\cap  {X}_4,~~~~\hat D_\alpha &= {D}_\alpha \cap {X}_4,~~~~ \hat D_i = {D}_i \cap {X}_4,
	\end{split}
	\end{align}
where we have introduced the definitions
	\begin{align}
	\begin{split}	
	\label{eqn:SU5Dbasis}	
		\boldsymbol{D}_0 &= \frac{1}{3} \boldsymbol{H} \\
		\boldsymbol{D}_1 &=  \boldsymbol{E}_4 \\
			\boldsymbol{D}_2 &=\boldsymbol{E}_1 - \boldsymbol{E}_2 -\frac{1}{2} \boldsymbol{E}_4 \\
			\boldsymbol{D}_3&= \boldsymbol{E}_2 - \frac{1}{2} \boldsymbol{E}_3    \\
		\boldsymbol{D}_4& = \boldsymbol{E}_3,
	\end{split}
	\end{align}
and
	\begin{align}
	\label{eqn:X4class}
		\boldsymbol{X}_4 &= 3 \boldsymbol{H} + 6 \boldsymbol{L} - 4 \boldsymbol{E}_1 - \boldsymbol{E}_2 - \frac{1}{2} \boldsymbol{E}_3. 
	\end{align}

Now that we have a complete basis of divisors $\hat D_I \subset X_4$, along with an expression for each divisor $\hat D_I$ as a complete intersection in $Y_4$, we can state our aim more precisely. Let the dimension of the base $B$ be $d-1$, and consider the intersection numbers
	\begin{align}
	\label{eqn:X4intersection}
		\hat D_{I_1} \cdots \hat D_{I_d},
		\end{align}
which lift to the following intersection products in the Chow ring of $Y_4$:
	\begin{align} 
		\label{eqn:Y4intersection}
		\boldsymbol{D}_{I_1} \cdots \boldsymbol{D}_{I_d}  \boldsymbol{X}_4.
	\end{align}
Our aim is to use the results of \cref{sec:push} to push the above intersection products down to the Chow ring of $B$, where the pushforward is defined with respect to the projection
	\begin{align}
		\varpi \circ f_1 \circ f_2 \circ f_3 \circ f_4 : Y_4 \rightarrow B.
	\end{align}
	 The pushforward map can be expressed, analogously, as a composition
	\begin{align}
	\label{eqn:SU5pushmap}
		\varpi_* \circ f_{1*} \circ \cdots \circ f_{4*}. 
	\end{align}
We apply \cref{eqn:maincor} using the following Chow ring classes of the hypersurfaces at which the blowups $f_1,f_2,f_3,f_4$ are centered:
	\begin{align}
	\begin{split}
	\label{eqn:generators}
		\frac{\boldsymbol{Z}_{1,1}}{w_{1,1}} &= \frac{1}{2} (\boldsymbol{H} + 2 \boldsymbol{L}),~~~~ \frac{\boldsymbol{Z}_{1,2}}{w_{1,2}} = \frac{1}{2} (\boldsymbol{H} + 2 \boldsymbol{L}),~~~~ \frac{\boldsymbol{Z}_{1,3}}{w_{1,3}}= \boldsymbol{S}\\
		\frac{\boldsymbol{Z}_{2,1}}{w_{2,1}} &= \boldsymbol{H}+ 3 \boldsymbol{L} - 2 \boldsymbol{E}_1,~~~~ \frac{\boldsymbol{Z}_{2,2}}{w_{2,2}} = \boldsymbol{E}_1 \\
		\frac{\boldsymbol{Z}_{3,1}}{w_{3,1}} &= 2 \left(\boldsymbol{H}+3 \boldsymbol{L}-2 \boldsymbol{E}_1-\boldsymbol{E}_2\right),~~~~  \frac{\boldsymbol{Z}_{3,2}}{w_{3,2}} =2 (\boldsymbol{S}-\boldsymbol{E}_1),~~~~\frac{\boldsymbol{Z}_{3,3}}{w_{3,3}} = 2 \boldsymbol{E}_2 \\
		\frac{\boldsymbol{Z}_{4,1}}{w_{4,1}}  &=2 \left(\boldsymbol{S}-\boldsymbol{E}_1-\frac{1}{2} \boldsymbol{E}_3\right),~~~~ \frac{\boldsymbol{Z}_{4,2}}{w_{4,2}} = 2 (\boldsymbol{E}_1 - \boldsymbol{E}_2).
	\end{split}
	\end{align}
The above data is sufficient to compute the pushforwards of the intersection products in \cref{eqn:Y4intersection}. Following the approach of \cite{Jefferson:2022xft}, we organize the intersection numbers in \cref{eqn:X4intersection} into the generating function
	\begin{align}
	\label{eqn:genfun1}
		\exp\bigl({ \sum \phi_I \hat D_I}\bigr) &= \exp\bigl({ \sum \phi_I \boldsymbol{D}_I}\bigr)  \boldsymbol{X}_4 
	\end{align}
where $\phi_I$ in the above expression are formal parameters. Using the algorithm and accompanying \emph{Mathematica} package of \cite{Jefferson:2022xft}, we easily compute the pushforward of the generating function of the intersection numbers in \cref{eqn:genfun1} with respect to the composition of maps in \cref{eqn:SU5pushmap}:
%	\begin{align}
%		Z := \varpi_* \circ f_{1*} \circ f_{2*} \circ f_{3*} \circ f_{4*} \exp\bigl({ \phi_I \boldsymbol{D}_I}\bigr)\cap  \boldsymbol{X}_4,
%	\end{align}
% is given by 
\begin{align}
\begin{split}
\label{eqn:genfun}
%	Z &= -\frac{e^{\phi_\alpha D_\alpha} }{6 L (L-S)^2} \biggl\{6 (L-S)^2 e^{-L \phi _0}+L e^{S \phi _2} \left((5 S-6 L) e^{S \left(\phi _3-\phi _2\right)}+S e^{\left(\phi _3-\phi _2\right) (3 L-2 S)}\right)\\
%	&+3 S e^{L \phi _2} \biggl(L \left(e^{\left(2 \phi _4-\phi _3\right) (S-L)}+e^{\left(2 \phi _1-\phi _2\right) (S-L)}\right)\\
%	&+L \left(-\phi _3+2 \left(\frac{1}{2} \left(2 \phi _4-\phi _3\right)+\frac{1}{2} \left(2 \phi _1-\phi _2\right)\right)+\phi _2\right) (L-S)-2 S\biggr) \biggr\}.
Z&=\frac{e^{\sum \phi_\alpha D_\alpha}}{6 L  (L-S)^2}  \biggl(e^{- \phi _0 L}  (-L S (e^{(\phi _3-\phi _2) (3 L-2 S)+\phi _0 L + \phi _2 S})+6 L^2  e^{\phi _0 L +S (\phi _3-\phi _2)+ \phi _2 S}\\
&-6 \phi _1L^2  S  e^{(\phi _0  +\phi _3) L }-6 \phi _4 L^2  S  e^{(\phi _0+ \phi _3)L}+6\phi _3 L^2  S   e^{(\phi _0 + \phi _3)L}-6 L^2\\
&-6\phi _3 L S^2  e^{( \phi _0+ \phi _3)L}+6 \phi _1 L S^2  e^{(\phi _0+ \phi _3)L}+6 \phi _4 L  S^2  e^{( \phi _0+ \phi _3)L}+6 S^2  e^{( \phi _0+ \phi _3)L}\\
&-5 L  S e^{\phi _0 L + (\phi _3-\phi _2)S+ \phi _2S}-3 L  S  e^{(2 \phi _1-\phi _2) (S-L)+( \phi _0+ \phi _3)L}\\
&-3 L  S  e^{(2 \phi _4-\phi _3)  (S-L)+(\phi _0  +\phi _3)L }+12 L  S-6 S^2)\biggr).
\end{split}
\end{align}
Observe that $Z$ only depends non-trivially on the classes of the divisors $L = L_\alpha D_\alpha, S = S_\alpha D_\alpha \subset B$. The intersection numbers of divisors can be recovered from the above generating function by taking derivatives:
	\begin{align}
	\label{eqn:intnum}
		\hat D_{I_1} \cdots \hat D_{I_p} =  \left. \frac{\partial^p}{\partial \phi_{I_1} \cdots \partial \phi_{I_p}} Z\right|_{\phi_I =0}.
	\end{align}

\subsection{String theory cross checks}
\label{sec:crosscheck}

In this subsection, we perform a number of cross checks of \cref{eqn:genfun} in order to verify the validity of \cref{eqn:main,eqn:maincor}. These cross checks are motivated largely, although not entirely, by the physical interpretation of the intersection numbers of divisors $\hat D_I$ of CY spaces $X$ as coupling constants for topological terms in the low energy effective actions describing F/M theory compactifications on $X$. At the heart of this physical interpretation is the thoroughly tested conjecture that F-theory compactified on $X \times S^1$, where $X$ is an elliptically-fibered CY variety, is dual to M-theory compactified on $X$. In the M-theory duality frame, gauge fields in various dimensions can be obtained by dimensionally-reducing the three-form of eleven-dimensional supergravity on $(1,1)$-forms Poincar\'e dual to divisors of $X$, 
	\begin{align}
	\label{eqn:dimred}
		C_3 = \sum A^I \wedge \text{PD}(\hat D_I) + \cdots 
	\end{align}
We focus mostly on compactifications on CY 3-folds and 4-folds, and thus the cross checks we perform are based on the physical interpretation of, specifically, triple and quadruple intersection numbers. Some of the physical interpretations we mention below are discussed in further detail in the introduction of \cite{Jefferson:2022xft} (see also Appendix B of \cite{Jefferson:2021bid} and references therein).\newline

\noindent \textbf{Check \#0: Inverse Killing form from double intersections of Cartan divisors.} Before proceeding to more involved computations, one of the simplest cross checks entails using the intersection numbers to recover the inverse Killing form of the Lie algebra of SU(5).\footnote{This cross check actually applies for all F-theory compactifications on resolved SU(5) models defined by elliptically-fibered CY $n$-folds with $n\geq 2$.} Recall (see \cref{Kodaira}) that the irreducible components (i.e. rational curves) of the singular elliptic fibers over a generic point in $S$ intersect in the pattern of an affine SU(5) Dynkin diagram, where the nodes and edges of the Dynkin diagram correspond to the irreducible components and their points of pairwise intersection, respectively. Recall also that the intersection structure of a Dynkin diagram admits an equivalent presentation as a Cartan matrix, which is in turn related to the inverse Killing form by multiplication by a diagonal matrix.\footnote{Note that for simply-laced Lie algebras, the inverse Killing form is equal to the Cartan matrix, i.e. the diagonal matrix relating the two is simply the identity matrix.} It follows that the inverse Killing form matrix is encoded in the geometry of singular Kodaira fibers. Moreover, since the Cartan divisors $\hat D_i$ are topologically $\mathbb P^1$ fibrations over $S$, it is perhaps unsurprising that inverse Killing form is also visible in the intersection structure of Cartan divisors. In particular, one can verify that
	\begin{align}
		 \hat D_{i}  \hat D_j=  \left.\frac{\partial}{\partial \phi_i} \frac{\partial}{\partial \phi_j} Z\right|_{\phi_I=0} =- S\left(
\begin{array}{cccc}
 2 & -1 & 0 & 0 \\
 -1 & 2 & -1 & 0 \\
 0 & -1 & 2 & -1 \\
 0 & 0 & -1 & 2 \\
\end{array}
\right),
	\end{align}
precisely as expected for any resolution of the SU(5) model. This expectation is consistent with the classification results of Kodaira and N\'eron referenced in \cref{Kodaira}, as well as the physical interpretation of SU(5) as the gauge group of the low energy effective theory describing the dynamics of seven branes wrapping $S$ in F-theory vacua defined by the singular SU(5) model.\newline

\noindent \textbf{Check \#1: Five-dimensional Chern-Simons couplings from triple intersections.} The next cross check utilizes two results that have been discussed extensively in the physics literature. 

The first of these two results specifies the precise manner in which triple intersections of divisors of a CY 3-fold change under a flop transition. This can be stated informally as follows: Suppose $X$ and $X'$ are two smooth CY 3-folds related by a flop transition\footnote{By ``flop transition'', we more generally mean a topology-changing transition that consists of contracting a finite set of isolated exceptional curves in $X$, each with normal bundle $\mathcal O(-1) \oplus \mathcal O(-1)$, and blowing up a similar set of curves in $X'$. Note that the Kähler cones of $X,X'$ share a common wall in the extended Kähler cone.} characterized by the contraction of a curve $C \subset X$ and the blowup of a curve $C' \subset X'$. Then, the triple intersection numbers and intersection between divisors and second Chern class satisfy \cite{wilson1999flops,mcduff1994j}
	\begin{align}
	\begin{split}
	\label{eqn:flop}
	 \hat D_{I_1}'  \hat D_{I_2}'  \hat D_{I_3}' &=	\hat D_{I_1}  \hat D_{I_2}  \hat D_{I_3}-  (\hat D_{I_1}  C )( \hat D_{I_2}  C )( \hat D_{I_3}  C).
	\end{split}
	\end{align}
The formula in \cref{eqn:flop} implies
	\begin{align}
	\label{eqn:prepflop}
		\text{vol}(X')=  \text{vol}(X)- \frac{1}{3!} \text{vol}(C)^3,
	\end{align}
where
\begin{align}
	\label{eqn:prepotential}
		\text{vol}(X) := \frac{1}{3!} \sum \phi_I \phi_J \phi_K \hat D_I  \hat D_J  \hat D_K,~~~~ \text{vol}(C): =\sum \phi_I \hat D_I  C
	\end{align}
are the calibrated volumes of $X,C$. 

The second result concerns the correspondence between the kinematics of five-dimensional supergravity and the geometry of smooth CY 3-folds. M-theory compactified on a smooth CY 3-fold $X$ is expected to be described at low energies by a five-dimensional minimally-supersymmetric (i.e. eight supercharges) supergravity describing the dynamics of a collection of vector multiplets and hypermultiplets \cite{cadavid199511,Antoniadis:1995vz,Ferrara:1996hh,Witten:1996qb}. The couplings characterizing the massless two derivative action are encoded entirely in a one-loop exact homogeneous polynomial $F$ called the prepotential \cite{Seiberg:1996bd}, which corresponds to the polynomial of triple intersection numbers of $X$ (see, e.g., \cite{Intriligator:1997pq}):
	\begin{align}
		 F \longleftrightarrow \text{vol}(X).
	\end{align}
Although generically the low energy effective theory describing M-theory on a smooth 3-fold $X$ only consists of abelian vector multiplets, for certain theories the CY 3-fold background admits singular limits in which the abelian gauge symmetry enhances to a nonabelian gauge symmetry, 
	\begin{align}
		\text{U}(1)^{\text{rank}(\textbf G)} \to \textbf G.
	\end{align}
 In such cases, the abelian gauge theory corresponding to M-theory on smooth $X$ is in the Coulomb phase, and the kinetic and cubic couplings of the theory are understood to incorporate one-loop corrections generated by integrating out massive fermions on the Coulomb branch. 
 
 It is illuminating to compare this ``top down'' perspective with general results applicable to five-dimensional supersymmetric gauge theories. Given a five-dimensional minimally-supersymmetric gauge theory with gauge symmetry $\textbf G$ and collection of hypermultiplets transforming in the representation $\textbf R$, the pure gauge contributions $ F_{\text{Cartan}}$ to the one-loop exact prepotential 
 	\begin{align}
		 F = \cdots +  F_{\text{Cartan}}
	\end{align}
take the following form on the Coulomb branch \cite{Witten:1996qb,Intriligator:1997pq,Grimm:2013oga,Bonetti:2013ela}:
	\begin{align}
	\begin{split}
	\label{eqn:FCartan}
		 F_{\text{Cartan}} =\frac{1}{3!}k_{ijk} \phi_i \phi_j \phi_k = -\frac{1}{3!} \frac{1}{2} \sum_{\textbf R} N_{\textbf R} \sum_{w \in \textbf R} \text{sign}(Z_w) Z_w^3,~~~~ Z_w := \phi_i w_i,
	\end{split}
	\end{align}
where the $k_{ijk}$ are Chern-Simons coupling constants that arise in the M-theory vacuum as triple intersection numbers,
	\begin{align}
	\label{eqn:CScouplings}
		k_{ijk} \longleftrightarrow \hat D_i  \hat D_j  \hat D_k,
	\end{align}	
 $N_{\textbf R}$ is the number of hypermultiplets transforming in the representation $\textbf R$, and $w_i$ are the Dynkin coefficients of the weight $w \in \textbf R$.\footnote{Not to be confused with the weights of weighted projective spaces!} Thus, reversing the correspondence between geometry and physics, we can use \cref{eqn:FCartan} and related properties of five-dimensional supersymmetric gauge theories to ``predict'' topological triple intersection numbers of Cartan divisors of some as-yet-undetermined CY 3-fold.

We now turn to the task of checking \cref{eqn:genfun} using the two results described above. Let $X_4$ be a smooth CY 3-fold (i.e. an elliptic fibration over a 2-fold $B$) and consider the triple intersections of Cartan divisors,
	\begin{align}
		\hat D_i  \hat D_j  \hat D_k = \left.  \frac{\partial}{\partial \phi_i} \frac{\partial}{\partial \phi_j} \frac{\partial }{\partial \phi_k} Z \right|_{\phi_I =0}.
	\end{align}
The fact that $X_4$ is in particular an elliptically-fibered CY 3-fold admitting a singular ``F-theory'' limit means that much of the kinematic structure of the gauge sector of the five-dimensional theory is inherited from the tensor branch description of the six-dimensional lift describing the F-theory limit at low energies (see, e.g., \cite{Esole:2015xfa}). We use properties of the gauge sector of the low energy five-dimensional theory to independently compute the triple intersection numbers of $X_4$ in terms of the intersection numbers of the resolution $\mathscr B_{1,3}$, which have been computed geometrically using the standard pushforward formulae of \cite{Esole:2017kyr} (as opposed to the weighted blowup generalization described in this paper) and checked extensively against known physical properties of the corresponding low energy theory \cite{Esole:2014hya}.  We accomplish this task by combining \cref{eqn:flop} and \cref{eqn:FCartan} in a manner that we now describe in detail.

To begin, note that our resolution $X_4$ (see \cref{eqn:SU5resolution}) corresponds to $\mathscr B_{2,3}^1$ in Figure 5 of  \cite{Esole:2014hya}.  Note also that in Figure 5 of \cite{Esole:2014hya}, the two resolutions $\mathscr B_{2,3}^{1}$ and $\mathscr B_{1,3}$ are separated by a third resolution $\mathscr B_{2,3}$, whose triple intersection numbers we need not compute geometrically. Assuming that the stated correspondence between five-dimensional supersymmetric gauge theory and CY 3-fold geometry holds, we can use \cref{eqn:FCartan} to compute the triple intersection numbers of Cartan divisors $\hat D_i \subset \mathscr B_{2,3}^1$. Although \cref{eqn:FCartan} cannot be used to recover all of the triple intersection numbers of $\mathscr B_{2,3}^1$, for our purposes the differences $ F_{\mathscr B_{2,3}^1} -  F_{\mathscr B_{2,3}}$ and $ F_{\mathscr B_{2,3}} -  F_{\mathscr B_{1,3}}$ will suffice, and these differences only depend on triple intersection numbers of Cartan divisors. 

Let us elaborate on why the differences $ F_{\mathscr B_{2,3}^1} -  F_{\mathscr B_{2,3}}$ and $ F_{\mathscr B_{2,3}} -  F_{\mathscr B_{1,3}}$ only depend on Cartan divisors. The reason for this is that a flop transition $X \to X'$ can be viewed as a transition in which the contracting curve $C$ shrinks to zero volume and formally continues to negative volume, 
	\begin{align}
	\label{eqn:flopvol}
		\text{vol}(C) \rightarrow - \text{vol}(C),~~~~ \text{vol}(C) =\sum \phi_I \hat D_I  C.
	\end{align}
On the other hand, we note that $Z_w$ in \cref{eqn:FCartan} is the Coulomb branch central charge of a BPS particle corresponding to an M2 branes wrapping a holomorphic or antiholomorphic curve $C_w$.\footnote{That is, a curve $C_w$ such that the Dynkin coefficients $w_i$ of the weight $w \in \textbf R$ satisfy $w_i =- \hat D_i  C_w$.} In conventional six-dimensional F-theory compactifications (further reduced on a circle to five dimensions and dualized to M-theory), such curves only intersect the Cartan divisors and thus we are free to make the identification \cite{Intriligator:1997pq}
	\begin{align}
		Z_w \longleftrightarrow -\text{vol}(C_{w}).
	\end{align} 
The above property is true for the resolutions we are currently considering. Thus, the formal transition in \cref{eqn:flopvol} from positive to negative volume has a low energy manifestation as a singular phase transition characterized by a particle becoming massless. At negative volume, it appears as though the central charge has flipped sign, so we can simply flip the signs on some of the central charges on the right-hand side of \cref{eqn:FCartan} to determine the triple intersection numbers in a resolution related by a composition of flops. 

In our specific example, we would like to track the triple intersections of Cartan divisors in $\mathscr B_{1,3}$ through the following composition of flops:
	\begin{align}
		\mathscr B_{1,3} \overset{C_{w_6^{\textbf{10}}}}{\rightarrow} \mathscr B_{2,3} \overset{C_{w_3^{\textbf{10}}}}{\rightarrow} \mathscr B_{2,3}^1 = X_4.
	\end{align}
The relevant curves were identified in \cite{Esole:2017kyr} as corresponding to the following weights of the two-index antisymmetric representation $\textbf{10}$ of SU(5):
	\begin{align}
		w_6^{\textbf{10}} =( -1,1,-1,1),~~~~ w_3^{\textbf{10}} = (1,0,-1,1). 
	\end{align}
The upshot of this discussion is that the five-dimensional physical interpretation of the triple intersection numbers in \cref{eqn:FCartan,eqn:CScouplings} implies 
	\begin{align}
	\begin{split}
	\label{eqn:diff}
		 \text{vol}(X_4) -  \text{vol}(\mathscr B_{1,3}) &=(\text{vol}(\mathscr B_{2,3}^1) - \text{vol}(\mathscr B_{2,3}) )+ (\text{vol}(\mathscr B_{2,3})- \text{vol}(\mathscr B_{1,3}))\\
		 &=  \frac{1}{3!} N_{\textbf{10}}((Z_{w_{3}^{\textbf{10}}})^3 +  (Z_{w_6^{\textbf{10}}})^3)\\
		 &=  \frac{1}{3!} N_{\textbf{10}} (( - \phi_1 + \phi_2 - \phi_3 + \phi_4)^3 + (\phi_1 - \phi_3 + \phi_4)^3)
	\end{split}
	\end{align}
where $N_{\textbf{10}} = L  S$ and in going from the first line to the second line above, we have implicitly made use of \cref{eqn:FCartan}. Indeed, defining
	\begin{align}
	\begin{split}
		6\text{vol}(X_4) &= \left. \frac{\partial}{\partial \phi_I} \frac{\partial}{\partial \phi_J} \frac{\partial}{\partial \phi_K} Z \right|_{\phi_M =0} \phi_I \phi_J \phi_K
%	&=3 D_\alpha  D_\beta \phi _0 \phi _{\alpha } \phi_\beta -3 D_\alpha  L \phi _0^2 \phi _{\alpha }-6 D_\alpha  S\phi _1^2 \phi _{\alpha }-6 D_\alpha  S \phi _2^2 \phi _{\alpha }\\
%	&-6 D_\alpha  S \phi _3^2 \phi _{\alpha }-6 D_\alpha  S \phi _4^2 \phi _{\alpha }+6 D_\alpha  S \phi _1 \phi _2 \phi _{\alpha }+6 D_\alpha  S \phi _2 \phi _3 \phi _{\alpha }\\
%	&+6 D_\alpha  S \phi _3 \phi _4 \phi _{\alpha }
%	+L^2 \phi _0^3+4 L  S \phi _1^3-6 L  S \phi _2^3+4 L  S \phi _3^3+4 L  S \phi _4^3\\
%	&+9 L  S \phi _1 \phi _2^2-15 L  S \phi _2 \phi _3^2 -6 L  S \phi _2 \phi _4^2-6 L  S \phi _3 \phi _4^2-12 L  S \phi _1^2 \phi _2\\
%	&+12 L  S \phi _2^2 \phi _3+3 L  S \phi _3^2 \phi _4+6 L  S \phi _2 \phi _3 \phi _4-4 S^2 \phi _1^3+S^2 \phi _2^3-4 S^2 \phi _3^3-4 S^2 \phi _4^3\\
%	&-3 S^2 \phi _1 \phi _2^2+9 S^2 \phi _2 \phi _3^2+6 S^2 \phi _3 \phi _4^2+6 S^2 \phi _1^2 \phi _2-6 S^2 \phi _2^2 \phi _3-3 S^2 \phi _3^2 \phi _4
	\end{split}
	\end{align}
and using the analogous expression for $6 \text{vol}(\mathscr B_{1,3}) $ calculated using the methods of \cite{Jefferson:2022xft},
%	\begin{align}
%	\begin{split}
%	6 \text{vol}(\mathscr B_{1,3}) &= 3D_{\alpha }  D_\beta  \phi _\alpha \phi_\beta \phi _0 -3D_{\alpha } L \phi _\alpha \phi _0^2 -6 D_{\alpha }  S \phi_\alpha \phi _3^2 -6D_{\alpha }  S \phi_\alpha \phi _4^2 \\
%	 &-6D_{\alpha }  S \phi_\alpha \phi _5^2 -6 S \phi_\alpha \phi _6^2 D_{\alpha }+6 D_{\alpha }  S \phi_\alpha \phi _3 \phi _4 +6 D_{\alpha }  S \phi_\alpha \phi _4 \phi _5 \\
%	 &+6 D_{\alpha } S \phi_\alpha \phi _5 \phi _6 +L^2 \phi _0^3+4 L  S \phi _3^3+3 L  S \phi _4^3-4 L  S \phi _5^3+2 L  S \phi _6^3\\
%	 &+6 L  S \phi _3 \phi _4^2+9 L  S \phi _4 \phi _5^2-3 L  S \phi _4 \phi _6^2-6 L  S \phi _5 \phi _6^2-9 L  S \phi _3^2 \phi _4\\
%	 &-12 L  S \phi _4^2 \phi _5-6 L  S \phi _3^2 \phi _6-3 L  S \phi _4^2 \phi _6+3 L  S \phi _5^2 \phi _6+6 L  S \phi _3 \phi _4 \phi _6\\
%	 &+6 L  S \phi _4 \phi _5 \phi _6-4 S^2 \phi _3^3-4 S^2 \phi _4^3+S^2 \phi _5^3-4 S^2 \phi _6^3-3 S^2 \phi _3 \phi _4^2\\
%	 &-6 S^2 \phi _4 \phi _5^2+6 S^2 \phi _5 \phi _6^2+6 S^2 \phi _3^2 \phi _4+9 S^2 \phi _4^2 \phi _5-3 S^2 \phi _5^2 \phi _6,
%	\end{split}
%	\end{align} 
we find that the difference $\text{vol}(X_4) - \text{vol}(\mathscr B_{1,3})$ precisely matches \cref{eqn:diff}, as expected. \newline

\noindent \textbf{Check \#2: Three-dimensional Chern-Simons couplings from quadruple intersections.} Our final physics cross check of the intersection numbers \cref{eqn:genfun} is similar to the previous cross check, with the main difference being that we now take $X_4$ to be a CY 4-fold, i.e. an elliptic fibration over a smooth 3-fold base $B$. The low energy effective description of M-theory compactified on a smooth CY 4-fold $X$ is a three-dimensional supergravity theory with four supercharges---see \cite{Jefferson:2021bid} and references therein for detailed background relevant to the following discussion. 

When the M-theory vacuum solution in addition includes an appropriately quantized\footnote{The precise quantization condition was argued in \cite{Witten:1996md} to be $G_4 + \frac{1}{2} c_2 \in H^4(X,\mathbb Z)$. In this paper, we ignore any subtleties that could occur when $c_2$ is not an even class.} background profile for field strength of the M-theory three form,
	\begin{align}
	\label{eqn:G4}
		G_4 = \mathrm{d}C_3,
	\end{align}
 with $G_4$ belonging to the ``vertical'' subgroup of middle cohomology \cite{Greene:1993vm}, $H_{\text{vert}}^{2,2}(X,\mathbb R) \cap H^4(X,\mathbb Z)$,  the low energy effective three-dimensional theory contains Chern-Simons terms with couplings \cite{Grimm:2011sk,Cvetic:2012xn}
 	\begin{align}
	\label{eqn:3DCS}
		\Theta_{IJ} \longleftrightarrow- \int_X G_4 \wedge \text{PD}(\hat D_I) \wedge \text{PD}(\hat D_J),~~~~ \text{PD}(\hat D_I) \in H^{1,1}(X).
	\end{align}
(Note that in the above equation ``PD'' denotes the Poincar\'e dual.) The relationship between the Chern-Simons couplings $\Theta^{\text{3D}}_{IJ}$ and the intersection numbers of $X$ can be seen as follows. We first expand $G_4$ in a basis of vertical $(2,2)$-forms,
	\begin{align}
	\label{eqn:verticalG}
		G_4 = \sum G_{4,IJ} \text{PD}(\hat D_I) \wedge \text{PD}(\hat D_J),
	\end{align}
Then, roughly speaking, the right-hand side of \cref{eqn:3DCS} can then be expressed as a sum over quadruple intersection numbers:
	\begin{align}
	\begin{split}
	\label{eqn:fluxintegral}
		\int_X G_4 \wedge \text{PD}(\hat D_I) \wedge \text{PD}(\hat D_J)&= \sum G_{4,KL} \hat D_I  \hat D_J  \hat D_{K}  \hat D_L. 
	\end{split}
	\end{align}
As in the previous cross check, in order to make contact with the geometry of our resolution $X_4$ of the SU(5) model, we again narrow our discussion to special cases in which $X$ is an elliptically-fibered CY 4-fold admitting a singular F-theory limit. Let us also assume that the low energy four-dimensional minimally-supersymmetric (i.e. four supercharges) description of the F-theory limit is characterized by a gauge symmetry $\textbf G$ and some numbers $n_{\textbf{R}}, n_{\textbf{R}^*}$ of vector-like pairs of chiral multiplets transforming in the complex representations $\textbf R, \textbf R^*$. In order to preserve four-dimensional Poincar\'e symmetry and gauge symmetry $\textbf G$, it is necessary to restrict the background magnetic flux profile to specific values that are compatible with the F-theory lift \cite{Dasgupta:1999ss,Donagi:2008ca},
	\begin{align}
	\label{eqn:Flift}
	 G_4 \rightarrow \tilde G_4,~~~~ \int_{X} \tilde G_4 \wedge \text{PD}(\hat D_I) \wedge \text{PD}(\hat D_\alpha) = 0.
	\end{align} 
Such a flux profile is known to generically produce an excess of chiral multiplets in the four-dimensional spectrum \cite{Donagi:2008ca,Braun:2011zm,Marsano:2011hv,Krause:2011xj,Grimm:2011fx}, 
	\begin{align}
	\label{eqn:chiralindex}
		 n_{\textbf R} - n_{\textbf R^*} =: \chi_{\textbf R} = \sum\chi_{\textbf R,ij} \Theta_{ij}.
	\end{align}
The chiral excess $\chi_{\textbf R}$, called the `chiral index', is the index of a certain Dirac operator characterizing the dynamics of the seven brane worldvolume, and is hence topological. This is consistent with the dimensional reduction in \cref{eqn:dimred}, which suggests that the seven brane world-volume effective action is modified by the presence of a nontrivial magnetic gauge flux background along the internal compact directions. 

To make the connection with CY geometry, we observe that the three-dimensional one-loop Chern-Simons couplings can be computed directly in the low energy three-dimensional supersymmetric field theory \cite{Cvetic:2012xn} (see also \cite{Jefferson:2021bid}, Eq. (5.4)):
	\begin{align}
	\label{eqn:3DCScoupling}
		\Theta_{ij}= -\frac{1}{2} n_{\textbf R}\sum_{w \in \{ \textbf R, \textbf R^*\}} \text{sign}(Z_w) w_i w_j, ~~~~Z_w = \phi_i w_i. 
	\end{align}
In well-behaved constructions, combining \cref{eqn:chiralindex,eqn:3DCS} enables us to express the chiral index as a linear combination of fluxes
	\begin{align}
	\label{eqn:3DCSmatch}
	\begin{split}
		\chi_{\textbf R} =\sum \chi_{\textbf{R},{ij}} \tilde G_{4,KL} \hat D_i  \hat D_j  \hat D_K  \hat D_L.
	\end{split}
	\end{align}	

Let us carry out the above computation for the resolution $X_4$ of the SU(5) model obtained by the sequence of blowups in \cref{eqn:SU5resolution}. We first compute the quadruple intersection numbers, which we use as a basis for vertical fluxes in \cref{eqn:fluxintegral},
	\begin{align}
		G_{4,KL}\left. \frac{\partial}{\partial \phi_I}  \frac{\partial}{\partial \phi_J}  \frac{\partial}{\partial \phi_K}  \frac{\partial}{\partial \phi_L} Z \right|_{\phi_M = 0}.
	\end{align}
Next, we impose the symmetry-preserving restrictions \cref{eqn:Flift} on $G_4$ to obtain a geometric expression for the three-dimensional Chern-Simons couplings, which we then compare with the low corresponding field-theoretic expressions computed using \cref{eqn:3DCScoupling}. Matching the two sets of expressions, we find 
	\begin{align}
		\begin{split}
		\label{eqn:3DCSsys}
			\Theta_{11} =\Theta_{22}&=\chi_{\textbf{5}}- \chi_{\textbf{10}} =- 2 c_{ij} \tilde G_4^{ij} L  S  (6L - 5 S) \\
			\Theta_{12}=\Theta_{23}&=\frac{1}{2} (\chi_{\textbf{10} } -\chi_{\textbf{5}}) = c_{ij} \tilde G_4^{ij} L  S  (6L - 5 S) \\
			\Theta_{33}& = - \chi_{\textbf{10}} = - 2 c_{ij} \tilde G_4^{ij} L  S  (6L - 5 S) \\
			\Theta_{34} = -\frac{1}{2} \Theta_{44} &= \frac{1}{2} (\chi_{\textbf 5} + 3 \chi_{\textbf{10}}) =c_{ij} \tilde G_4^{ij} L  S  (6L - 5 S)
				\end{split}
	\end{align}
where 
	\begin{align}
		\begin{split}
		%\chi_{\textbf{10}} &= 2 c_{ij} \tilde G_4^{ij} L  S  (6L - 5 S)\\
		c_{ij} \tilde G_{4,ij} &=\frac{1}{5}(2 \tilde G_{4,44} - \tilde G_{4,34} + \tilde G_{4,33} - \tilde G_{4,23} + 2 \tilde G_{4,22} - \tilde G_{4,12} + 2 \tilde G_{4,11}).
		\end{split}
	\end{align}
The linear system in \cref{eqn:3DCSsys} cannot admit a nontrivial solution unless the chiral excesses satisfy the constraint
	\begin{align}
		\chi_{\textbf{5}} + \chi_{\textbf{10}} = 0,
	\end{align}
which is precisely the condition necessary to eliminate gauge anomalies from the four-dimensional effective theory. Assuming that string theory compactifications are automatically self-consistent, it is thus to be expected that the spectrum of the low energy four-dimensional theory is anomaly-free, and therefore the above computation serves as yet another cross check of the intersection numbers in \cref{eqn:genfun}. 

\section{Four-Dimensional F-theory Flux Vacua}
\label{sec:4DFlux}
\subsection{Vertical intersection pairing}
Emboldened by the success of our cross checks of \cref{eqn:main,eqn:maincor} in \cref{sec:P2,sec:SU5}, we next set our sights on novel applications relevant to four-dimensional string compactifications. 

One of the major physics motivations for generalizing the pushforward formulae of \cite{Esole:2017kyr} to be able to accommodate weighted blowups originates in the study of four-dimensional F-theory vacua with flux, which represent an attractive class of constructions in which supersymmetric extensions of the Standard Model with chiral spectra can be concretely realized \cite{Donagi:2008ca,Beasley:2008dc,Beasley:2008kw}. (The four-dimensional SU(5) model flux compactification in the previous section can be regarded as a candidate stringy construction of an SU(5) Grand Unified Theory.)

Compared to six-dimensional F-theory vacua, four-dimensional F-theory vacua are significantly more complicated. One reason for this is geometric in nature: as the dimension of the base increases, it becomes possible for the elliptic fibers to develop richer singularity structure over higher codimension components of the discriminant locus $V(\Delta)$, and the calculations required to study the geometry of these singularities can quickly become intractable. Moreover, in contrast to the classification of Kodaira singular fibers over codimension-one components of the discriminant locus \cite{KodairaII,KodairaIII,NeronClassification}, there is no rigorous classification for collisions of Kodaira fibers appearing over higher-codimension loci.\footnote{See \cite{Hayashi:2014kca,Esole:2015sta,Esole:2016npg} for physics-inspired classifications of codimension-two singular fibers, based on the low energy gauge theory interpretation of singular elliptically-fibered CY $n$-folds.}

For certain physics questions, the comparative lack of results on higher-codimension singular fibers is not a major obstruction. For example, questions related to the gauge symmetries and localized matter typically only require one to analyze the geometry of singular fibers over loci of codimension one and two, respectively. However, the Yukawa interactions and chiral spectrum in general depend sensitively on the  geometry of codimension-three singular fibers, and thus for these types of physics questions higher-codimension singularities cannot be ignored. This is where computations can become difficult, as analyzing their geometric properties often necessitates performing additional crepant blowups of the CY variety centered at subloci of relatively high codimension. Unfortunately, as the number and intricacy of the blowups required to read off the desired physics increases, the set of possible (globally) crepant blowups rapidly shrinks. Thus, in many circumstances, the generalization to weighted crepant blowups appears to be a necessity for maintaining access to a suitably broad set of possible blowups. 

Another reason that four-dimensional F-theory vacua are more complicated than their six-dimensional cousins is that four-dimensional F-theory vacua preserve half as many supercharges.  Consequently, much less of the physics is encoded simply in the geometry of the underlying CY variety, and additional mathematical structures must be considered in order to study certain physical properties. For example, as discussed in \cref{sec:crosscheck}, four-dimensional F-theory vacua with chiral spectra can be defined by a choice of elliptically-fibered CY variety and vertical flux background corresponding to the F-theory lift of the M-theory four-form field strength---see \cref{eqn:G4} and below. In order to compute the chiral spectrum, one needs to study integrals of the form \cref{eqn:fluxintegral}, which highlights the vertical intersection pairing restricted to the vertical cohomology,
	\begin{align}
		Q_{\text{vert}} := \left. Q \right|_{H^{2,2}_{\text{vert}}(X,\mathbb R) \cap H^4(X,\mathbb Z)},~~~~Q ~:~ H^{4}(X,\mathbb Z) \times H^{4}(X,\mathbb Z) \rightarrow \mathbb Z,
	\end{align}
as an additional mathematical structure of interest. In practice, it is possible to compute the matrix elements of $Q_{\text{vert}}$ in terms of the quadruple intersection numbers of divisors, which we can see appear on the right-hand side of \cref{eqn:fluxintegral} ---the key insight is that the vertical intersection pairing $Q_{\text{vert}}$ is (at least for an integral sublattice of $H^{2,2}_{\text{vert}}(X,\mathbb R) \cap H^4(X,\mathbb Z)$) the non-degenerate part of the generically singular matrix of quadruple intersection numbers \cite{Jefferson:2021bid}
	\begin{align}
	\label{eqn:Mmatrix}
		M_{(IJ)(KL)} := 	(\hat D_I \hat D_J) (\hat D_K \hat D_L).
	\end{align}
This is where the results of \cref{sec:push} are indispensable, as they provide a convenient means to compute the intersection numbers of weighted blowups of CY varieties, and hence to analyze the chiral matter multiplicities encoded in the pairing $Q_{\text{vert}}$. 

The goal of this section is to initiate the study of the vertical intersection pairing in weighted blowups of elliptic CY 4-folds defining four-dimensional F-theory vacua with flux, using \cref{eqn:main,eqn:maincor}. Although the full scope of physical phenomena (in particular, chiral physics) associated to codimension-three singular fibers in singular elliptic CY 4-folds is not fully understood at present, in \cite{Jefferson:2021bid}, it was emphasized that resolutions of F-theory models with $(4,6)$ loci\footnote{A $(4,6)$ locus is a sublocus of the discriminant locus $\Delta = 0$ in $B$ over which the singular elliptic fibers are characterized by $(f,g,\Delta)$ vanishing to respective orders greater than or equal to $(4,6,12)$ \cite{Candelas:2000nc}.} in the base can lead to non-flat fibrations\footnote{In this context, a `non-flat fibration' is a resolution of a singular elliptic fibration in which the dimension of some of the irreducible components of certain singular fibers increases over the $(4,6)$ locus in $V(\Delta) \subset B$. Although resolutions of singular four-dimensional F-theory models leading to non-flat fibrations have been known about for quite some time \cite{Candelas:2000nc,Hayashi:2009ge,achmed2018note}, their physics has remained unclear.}  in which irreducible surface components of the fibers can be threaded non-trivially by vertical background flux profiles. This in turn points to the intriguing possibility that chiral degrees of freedom in the four-dimensional spectrum may reside at $(4,6)$ loci in some models.\footnote{The only compact simple Lie groups admitting complex (and hence, chiral) matter representations are $\text{SU}(N), \text{SO}(4k+2), \text{E}_6$ with $N>4$ and $k>1$.} However, the analysis of \cite{Jefferson:2021bid} did not work out the additional blowups necessary to fully explore the chiral physics of $(4,6)$ loci in certain models with residual codimension-three singularities, among them being the F$_4$ and Sp(6) models. This omission was partly due to the lack of available mathematical techniques; we take steps to fill this gap in the following discussion, deferring more thorough physical analyses to future work \cite{JLT}.

\subsection{F$_4$ model}

The F$_4$ model, characterized by a type IV$^{*\text{ns}}$ fiber over a divisor $S  = V(s) \subset B$, is defined by imposing the following vanishing behavior on the sections $a_i$ in the Weierstrass equation in \cref{eqn:Tateform} \cite{Esole:2017kyr,Esole:2017rgz}:
	\begin{align}
		a_1 = 0,~~~~ a_2=0,~~~~ a_3 = 0,~~~~ a_4 = s^{3+n} a_{4,3+n} ,~~~~ a_6 =s^4 a_{6,4},
	\end{align}
for some choice of $n \in \mathbb Z_{>0}$. For simplicity, we restrict to the case $n = 0$. The sections $f,g,\Delta$ take the following form:
	\begin{align}	
		\begin{split}
		\label{eqn:F4fgD}
		f = a_{4,3} s^3,~~~~g= a_{6,4} s^4,~~~~\Delta = ( 4 a_{4,3}^3 s+ 27 a_{6,4}^2 ) s^8. 
		\end{split}
	\end{align}
This model can be resolved (through singular elliptic fibers appearing over generic points in codimension-two components of the discriminant locus) by a sequence of four ordinary blowups,
	\begin{align}
		\label{eqn:F4resolutioncodim2}
			 f_1 = (x,y,s|e_1),~~f_2 = (y,e_1|e_2),~~ f_3=(x,e_2|e_3) ,~~ f_4 =  (e_2,e_3|e_4).
	\end{align}
%The total transform $X_4 \subset Y_4$ is given by 
%	\begin{align}
%		e_2 y^2 z -( e_1 e_2^2 e_4 x^3 + a_{4,3} e_1^2 e_2 e_3 e_4 s^3 x z^2 + a_{6,4} e_1^2 e_2 s^4 z^3)=0.
%	\end{align} 
%We denote the homogeneous coordinates of the fibers of $Y_4$ by the shorthand
%	\begin{align}
%		[ e_1 e_2 e_3^2 e_4^3 x : e_1 e_2^2 e_3^2 e_4^4 y : z] ~,~ [e_3 e_4 x : e_2 e_3 e_4^2 y : s] ~,~ [y : e_1 ] ~,~ [x : e_2 e_4] ~,~ [e_2 :e_3],
%	\end{align}
%where the toric $\mathbb C^*$-action on the fibers of $Y_4$ can be inferred from \cref{eqn:F4resolutioncodim2}. 
From \cref{eqn:F4fgD}, we see that we can further adjust the parameters of this model to exhibit $(4,6)$ loci along
\begin{align}
	\label{eqn:F446}
		V(s , a_{4,3} , a_{6,4}) \subset B.
	\end{align}
 by demanding $(a_{4,3}, a_{6,4})$ to vanish to orders $(1,2)$ along additional loci that intersect $V(s) \subset B$. 
Note that there are still residual singular fibers over the above locus in the F$_4$ model after the four blowups in \cref{eqn:F4resolutioncodim2}. Although we do not attempt to classify all residual singularities here, we note that the cases in which $(a_{4,3}, a_{6,4})$ vanish to orders $(0,1),(1,0),(1,1)$---which are not $(4,6)$ loci---do not lead to additional singular elliptic fibers. Therefore, we focus our attention on the cases described immediately below \cref{eqn:F446}.

\subsubsection{$(4,6)$ curve}
First, let us consider the case of a codimension-two $(4,6)$ locus, given by constraining the orders of vanishing of $a_{4,3}, a_{6,4}$ in the vicinity of an independent divisor $T= V(t) \subset B$:
	\begin{align}
		a_{4,3} =t \alpha,~~~~~ a_{6,4} = t^2\beta.
	\end{align}
We now have a $(4,6)$ curve $S \cap T \subset B$. With this additional condition, the F$_4$ model after the four blowups (\ref{eqn:F4resolutioncodim2}) has a non-empty singular locus, $V(t,e_3, y)$. In this case, we can perform an additional weighted crepant blowup of the ambient space centered precisely at the singular locus,
	\begin{align}
	\label{eqn:fifthblowup}
		f_5 : Y_5 \rightarrow Y_4,~~~~f_5 = (t,e_3,y|e_5). 
	\end{align}
%The total transform of the ambient space is a projective bundle whose fibers have homogeneous coordinates
%	\begin{align}
%	\begin{split}
%			&[ e_1 e_2 e_3^2 e_5^2 e_4^3 x : e_1 e_2^2 e_3^2  e_4^4  e_5^3 y : z] ~,~ [e_3 e_4 e_5 x : e_2 e_3 e_4^2 e_5^2 y : s] \\
%			& [e_5 y : e_1 ] ~,~ [x : e_2 e_4] ~,~ [e_2 :e_3e_5]~,~ [t:e_3:y].
%	\end{split}
%	\end{align}
The exceptional divisor $E_5 \subset Y_5$ is a 4-fold with the topology of a 3-fold fibration, $ \mathbb P( \mathcal O_{\mathbb P^1} \oplus \mathcal O_{\mathbb P^1} \oplus \mathcal O_{\mathbb P^1} (-2) ) \rightarrow  S \cap T$, with fiber homogeneous coordinates
	\begin{align}
	\label{eqn:E5coord}
		{E}_5~:~ [ t: \tilde e_3: \tilde y]~,~ [\tilde x : \tilde e_4],%~~~~ \tilde x := \frac{x}{sz},~~~~ \tilde e_3 :=\frac{e_3}{e_1 e_2 s},~~~~ \tilde y := \frac{y}{e_1 s^2 z}.
	\end{align}
Thus, the elliptic CY 4-fold has acquired a new divisor $\hat D_5 = {E}_5 \cap {X}_5$, 
whose fibers are surface components of a non-flat fibration where each surface component has the topology of a conic fibration over $\mathbb P^1$; here, the conic fibers correspond to the following vanishing locus in the $\mathbb P^2$ subspace of the fibers of $Y_5$, with homogeneous coordinates $[t:\tilde e_3: \tilde y]$:
	\begin{align}
		\hat D_5 ~:~\begin{bmatrix} t \\ \tilde e_3 \\ \tilde y \end{bmatrix}^{\mathrm t} \begin{bmatrix} 2 \beta&\alpha \tilde e_4 \tilde x & 0 \\ \alpha \tilde e_4 \tilde x & 2 \tilde e_4 \tilde x^3 & 0 \\ 0 & 0 &-2 \end{bmatrix}  \begin{bmatrix} t \\ \tilde e_3 \\ \tilde y \end{bmatrix} = 0.
	\end{align}

What can this geometry teach us about the physics of four-dimensional flux compactifications on the F$_4$ model? To answer this question, it is most convenient to work in the basis of divisors
basis of divisor classes
	\begin{align}
		\begin{split}
		\label{eqn:F4Dbasiscurve}
			\boldsymbol D_0 &= \frac{1}{3} \boldsymbol H \\
			\boldsymbol  D_1 &= \boldsymbol E_1 - \boldsymbol E_2 \\
			\boldsymbol D_2 &= -\boldsymbol E_1 + 2 \boldsymbol E_2 - \boldsymbol E_3 - \boldsymbol E_4\\
			\boldsymbol D_3 &= \boldsymbol E_1 - 2 \boldsymbol E_2 + \boldsymbol E_3 + 2 \boldsymbol E_4 \\
			\boldsymbol D_4 &= \boldsymbol E_3 - \boldsymbol E_4 -\boldsymbol E_5 \\
			\boldsymbol D_5 & = \boldsymbol E_5,
		\end{split}
	\end{align}
where we note 
	\begin{align}
	 \boldsymbol{X}_5 = 3\boldsymbol{H} + 6\boldsymbol{L}- \boldsymbol{E}_1 - \boldsymbol{E}_5 - \sum_{i=1}^5 \boldsymbol{E}_i.
	\end{align}
Notice that the exceptional divisor $\hat D_5$ intersects the Cartan divisors $\hat D_3, \hat D_4$ in new vertical surfaces, as can be seen by intersecting $\hat D_5$ with the hyperplanes (resp.) $V(\tilde e_4), V(\tilde e_3)$:
	\begin{align}
	\begin{split}
	\label{eqn:codim2newsurface}
		 \hat D_3 \cap \hat D_5 ~&:~ V( \tilde y^2 - \beta \tilde t^2 )~\subset ~ {E}_5 \cap {V(\tilde e_4)}\\
		%\subset \mathbb P^2_{[\tilde t : \tilde{ \tilde {e}}_3 : \tilde y]},~~~~ \tilde{\tilde{e}}_3 := \tilde x^2 \tilde e_3\\
		\hat D_4 \cap \hat D_5 ~&:~V( \tilde y^2 -\beta \tilde t^2 )~\subset ~{E}_5 \cap {V(\tilde e_3 )},
		%\subset \mathbb P^1_{[\tilde x: \tilde e_4] } \times \mathbb P^1_{[\tilde t: \tilde y]}.
	\end{split}
	\end{align}
where the coordinates of the fibers of ${E}_5$ are given in \cref{eqn:E5coord}. 

In order to understand in detail how the presence of these vertical surfaces affects the lattice of vertical fluxes, we need to compute the intersection pairing restricted to the vertical subgroup of middle cohomology. Our first task is to compute the intersection numbers. Following the same approach as in \cref{sec:SU5int}, we compute the pushforward of the generating function, 
$$\varpi_* \circ f_{1*} \circ f_{2*} \circ f_{3*} \circ f_{4*} \circ f_{5*}\exp(\sum \phi_I \boldsymbol D_I) \boldsymbol X_5,$$
 in terms of the basis of divisors in \cref{eqn:F4Dbasiscurve}, using the standard version of the pushforward formulae in \cref{sec:push} (i.e. with trivial weights, $w_i =1$)---this computation can be carried out easily using the algorithm and accompanying software package developed in \cite{Jefferson:2022xft}. For the purpose of studying vertical flux backgrounds in M-theory compactifications on $X_5$, i.e. flux backgrounds $G_4$ of the form displayed in \cref{eqn:verticalG}, we then organize the quadruple intersection numbers of $X_5$ into the matrix in \cref{eqn:Mmatrix} and determine its nondegenerate part in a reduced basis,
	\begin{align}
	\label{eqn:F4m}
		\left[ \begin{array}{cc} m &\vec v\\
		\vec v^{\mathrm t} & 4 S T (L-S)
		\end{array}\right],~~~~IJ = \alpha 0, \alpha \beta, \alpha i , 35,
	\end{align}
where
	\begin{align}
	\begin{split}
	\label{eqn:F4M}
		m &=  D_\alpha D_{\alpha'}\left(  \begin{bmatrix} - L & D_\beta \\
		 D_{\beta'} &0 \end{bmatrix} \oplus (-S)\begin{bmatrix} 2 & -1 & 0& 0&0 \\ -1 & 2 & -2 &0&0 \\0 & -2 &4 &-2 &0\\ 0 &0& -2 &4&0\\0&0&0&0&0\end{bmatrix}  \right),~~~~
		 \vec v = 2 D_\alpha S T \begin{bmatrix}0 \\0\\ 0\\0\\ 0 \\ 1 \\ -1\end{bmatrix}.
	\end{split}
	\end{align}
While the submatrix $m$ was already implicitly computed from the four blowup resolution of the F$_4$ model in \cite{Jefferson:2021bid}, the gluing vector $\vec v$ and lower right matrix element $4ST (L-S)$ are new elements of the reduced vertical pairing matrix introduced by $f_5$, and correspond to intersection pairings involving the new vertical surfaces in \cref{eqn:codim2newsurface}. It is evident from the above intersection pairing matrix for vertical four-form fluxes that the additional basis element $\hat D_3 \hat D_5$ introduced by the fifth blowup $f_5$ vanishes when the intersection between the base divisors trivializes, $S T =0$, since this condition forces the lower-right matrix element $4 ST (L-S)$ and lattice gluing vector $\vec v$ to vanish. If we apply the usual constraints necessary for the flux background $G_4$ to lift to a Poincar\'e invariant four-dimensional F-theory vacuum with unbroken $\text{F}_4$ gauge symmetry, we find that restricted matrix of quadruple intersections in \cref{eqn:F4m} takes the following form:
	\begin{align}
		 2 ST \begin{bmatrix} 0 & - D_\alpha  \\ - D_{\alpha'}  & 2 (L-S)\end{bmatrix},~~~~IJ = \alpha 5, 35
	\end{align}
Evidently, the fifth blowup resolving the codimension-two $(4,6)$ locus has introduced new, independent vertical surfaces to the sublattice of symmetry-preserving vertical flux backgrounds, with homological representatives
	\begin{align}
		[\hat D_\alpha \cap \hat D_5], ~~~~[ \hat D_3 \cap \hat D_5].
	\end{align} 
We defer a more thorough analysis of the physics associated to the new divisor $\hat D_5$ to future work.

\subsubsection{$(4,6)$ point}
So far, we have not required the generalization of the pushforward formulae described in \cref{sec:push}. However, in this next case, we find that \cref{eqn:main,eqn:maincor} are necessary to accommodate weighted blowups of the F$_4$ model. Consider the case of a codimension-three $(4,6)$ locus, obtained by imposing the following orders of vanishing on $a_{4,3}, a_{6,4}$ in the vicinity of a pair of divisors $V(t) =T, V(u) = U \subset B$:
	\begin{align}
		a_{4,3} = t \alpha , ~~~~ a_{6,4} = t u \beta.
	\end{align}
Our model now has a codimension-three $(4,6)$ point $S\cap T \cap U \subset B$. This particular realization of the F$_4$ model after the four blowups in \cref{eqn:F4resolutioncodim2} still has a non-empty singular locus $V( t ,u ,e_3, y )$. In contrast to the previous example, this time we perform a \emph{weighted} blowup of the ambient space $Y_4$ restricting to a crepant blowup of $X_4$, centered at the singular locus,
	\begin{align}
	\label{eqn:fifthF4codim3}
		f_5 : Y_5 \rightarrow Y_4, ~~~~f_5 = (t,u,e_3,y|e_5)_{\frac{1}{2},\frac{1}{2},\frac{1}{2}, \frac{1}{2}}.
	\end{align}
The new ``Cartan'' divisor $\hat D_5 = {E}_5 \cap  Y_5$ of $X_5 \subset Y_5$ is again a 3-fold, given by the vanishing locus
	\begin{align}
	V(	\tilde y^2 - ( \tilde e_3^2 \tilde e_4 \tilde x^3 + \alpha \tilde e_3 \tilde e_4 t \tilde x + \beta t u) ) \subset E_5,
	\end{align}
 which has the topology of $\mathbb P( \mathcal O_{\mathbb P^1} \oplus \mathcal O_{\mathbb P^1}  \oplus \mathcal O_{\mathbb P^1}  \oplus \mathcal O_{\mathbb P^1} (-2) )  \rightarrow S\cap T\cap U$. The fibers of the exceptional divisor have homogeneous coordinates
	\begin{align}
		{E}_5~:~[ t:u:\tilde y: \tilde e_3]~,~ [ \tilde x: \tilde e_4]. 	
	\end{align}
The intersections of the new divisor $\hat D_5$ with the Cartan divisors $\hat D_3, \hat D_4$ (respectively, the hyperplanes $V(\tilde e_4 ), V(\tilde e_3) \subset Y_5$) are new, distinct vertical surfaces:
	\begin{align}
		\begin{split}
			\hat D_3 \cap \hat D_5 ~&:~ V(\tilde y^2 - \beta tu )~ \subset~ {E}_5 \cap {V(\tilde e_4)}\\
			% \mathbb P^3_{[t:u:\tilde y : \tilde{\tilde{e}}_3]},~~~~ \tilde{\tilde{e}}_3 := \tilde x^2 \tilde e_3\\
			\hat D_4 \cap \hat D_5~&:~ V(\tilde y^2 - \beta t u)~ \subset ~ {E}_5 \cap {V(\tilde e_3)}.
			%\mathbb P^2_{[t:u:\tilde y]} \times \mathbb P^1_{[\tilde x: \tilde e_4 ]}.
		\end{split}	
	\end{align}
Notice that the surface $\hat D_3  \cap \hat D_5$ has a rational double point singularity along $V(t , u ,\tilde y )$, meaning that our elliptic CY 4-fold $X_5$ still contains residual singularities even after five blowups. In this paper, we limit our analysis to the proper transform $X_5$, as this residual singularity should not affect the qualitative results we wish to illustrate---we leave a complete resolution of this model to future work.

To better understand the significance of the surfaces $\hat D_3 \cap \hat D_4,\hat D_4 \cap \hat D_5$ to four-dimensional flux vacua, we repeat the analysis of the previous subsection. An important distinction in this case is that we are required to use the generalized pushforward formulae of \cref{sec:push} to compute the intersection numbers, due to the fact that the fifth blowup in \cref{eqn:fifthF4codim3} is a weighted blowup. In terms of the basis of divisor classes
	\begin{align}
		\begin{split}
		\label{eqn:F4Dbasis}
			\boldsymbol D_0 &= \frac{1}{3} \boldsymbol H \\
			\boldsymbol  D_1 &= \boldsymbol E_1 - \boldsymbol E_2 \\
			\boldsymbol D_2 &= -\boldsymbol E_1 + 2 \boldsymbol E_2 - \boldsymbol E_3 - \boldsymbol E_4\\
			\boldsymbol D_3 &= \boldsymbol E_1 - 2 \boldsymbol E_2 + \boldsymbol E_3 + 2 \boldsymbol E_4 \\
			\boldsymbol D_4 &=\frac{1}{2} (2 \boldsymbol E_3 -2 \boldsymbol E_4 -\boldsymbol E_5) \\
			\boldsymbol D_5 & = \boldsymbol E_5,
		\end{split}
	\end{align}
with
	\begin{align}
		\boldsymbol{X}_5 = 3\boldsymbol{H} + 6\boldsymbol{L}- \boldsymbol{E}_1 - \sum_{i=1}^5 \boldsymbol{E}_i,
	\end{align}
we find that the nondegenerate part of the matrix of quadruple intersections in \cref{eqn:Mmatrix} can be written in the reduced basis 
\begin{align}
IJ = \alpha 0, \alpha \beta ,\alpha i, 34,44.
\end{align}
%	\begin{align}
%		\begin{bmatrix} m & \vec v_{34} & \vec v_{44}\\
%		\vec v_{34}^{\mathrm t} &-4 S (4 L-3 S) (L-S)& 2 S \left(16 L^2-24 L S+9 S^2-T U\right)\\
%		\vec v_{44}^{\mathrm t} &2 S \left(16 L^2-24 L S+9 S^2-T U\right)&-4 S \left(16 L^2-20 L S+7 S^2-3 T U\right)
%		\end{bmatrix}
%	\end{align}
%with $m$ as before (see \cref{eqn:F4m}) and gluing vectors 
%	\begin{align}
%		\vec v_{34} =2 D_\alpha S  \begin{bmatrix} 0 \\ D_{\alpha'} \\ 0 \\ 0 \\ -2 (L-S) \\ 4 L- 3 S \end{bmatrix},~~~~ \vec v_{44} &= 2 D_\alpha S \begin{bmatrix} 0 \\ -2 D_{\alpha'} \\ 0 \\ 0 \\ 4L - 3 S\\ -4(2 L - S) \end{bmatrix}.
%	\end{align}
The matrix of quadruple intersections is similar to the previous case in \cref{eqn:F4m}. Restricting the lattice of vertical flux backgrounds to those that preserve the full F$_4$ gauge symmetry in the F-theory lift, we obtain the following restricted matrix of quadruple intersections
	\begin{align}
	\label{eqn:restrictedm}
		2 S T U \begin{bmatrix}0  & -1 \\-1 & 6 \end{bmatrix},~~~~	IJ = 34,44.
	\end{align}
Thus we see that the case where the F$_4$ model has a $(4,6)$ point along $S\cap T\cap U \subset B$ is qualitatively distinct from the previous case of a $(4,6)$ curve $ST \subset B$: evidently, the fifth blowup introduces just two new vertical surfaces with homology classes\footnote{Note that the vertical surface Chow ring classes $\hat D_3 \hat D_5, \hat D_4 \hat D_5$ are related to $\hat D_3 \hat D_4, \hat D_4 \hat D_4$ by the linear relations encoded in the null vectors of the matrix of quadruple intersections in \cref{eqn:Mmatrix}; this explains the difference in basis between \cref{eqn:restrictedm} and \cref{eqn:restrictedbasis}.}
	\begin{align}
	\label{eqn:restrictedbasis}
		[\hat D_3 \cap \hat D_5], ~~~~ [\hat D_4 \cap \hat D_5]. 
	\end{align}
The physics of similar F$_4$ models with $(4,6)$ points will be explored more extensively in forthcoming work \cite{JKT}.

\subsection{Sp(6) model}

The Sp(6) model, characterized by a type I$_{6}^{\text{ns}}$ Kodaira fiber over a divisor $S \subset B$, is defined by imposing the following vanishing behavior on the sections $a_i$ in the Weierstrass equation in \cref{eqn:Tateform} \cite{Esole:2017kyr}:
	\begin{align}
		a_1 = 0,~~~~ a_2 = a_2,~~~~ a_3 =0,~~~~ a_4 = s^6 a_{4,6},~~~~ a_6 = s^{12} a_{6,12}.
	\end{align}
The sections $f,g, \Delta$ take the following form:
	\begin{align}
	\label{eqn:Sp6fgD}
		 f = - \frac{1}{3} a_2,~~~~ g= \frac{2}{27} a_2^3,~~~~ \Delta = a_2^2 s^{12} ( -a_{4,6}^2 + 4 a_2 a_{6,12} )
	\end{align}
This model can be resolved through codimension two by a sequence of six unweighted blowups, with blowdown maps 
	\begin{align}
		f_1 \circ \cdots \circ f_6 : Y_6 \rightarrow Y_0,
	\end{align}
	 where 
	\begin{align}
	\label{eqn:Sp6sixblowups}
		f_1= (x,y,s|e_1),~~~~f_2 = (x,y,e_1|e_2), ~~~~ f_3 = (x,y,e_2|e_3)  ~~~~\dots~~~~f_6 = (x,y,e_5|e_6). 
	\end{align}
From \cref{eqn:Sp6fgD}, we see that we can further restrict this model to exhibit $(4,6)$ loci by demanding that $a_2$ vanish to order at least two along some locus intersecting $V(s)$.
Again, we do not attempt to classify all residual singularities in the Sp(6) model after the six blowups. Nevertheless, similar to the case of the F$_4$ model, by taking $a_2$ to vanish along an independent divisor in $B$, we find that the Sp(6) model does not develop additional singularities unless $a_2$ vanishes to at least order two, and hence we again focus our attention on $(4,6)$ singularities in the following discussion.

	\subsubsection{$(4,6)$ curve}
	
	Let $T = V(t) \subset B$, and consider the case of a $(4,6)$ curve $S \cap T \subset B$, which we obtain by imposing the following vanishing behavior:
		\begin{align}
			a_2 = \alpha t^2. 
		\end{align}	
	With this additional condition, the Sp(6) model after the six blowups in \cref{eqn:Sp6sixblowups} now exhibits additional singularities along the reducible locus
	\begin{align}
		V(e_2 e_3 e_4 , t ,y ) \subset B.
	\end{align}	
We resolve these singularities by means of a sequence of three additional unweighted blowups
	\begin{align}
\label{eqn:F6codim2blowups}
	 	f_7 \circ f_8 \circ f_9 : Y_9 \rightarrow Y_6,~~~~f_7 = (e_2 , t, y| e_7),~~~~ f_8 = (e_3,t, y|e_8),~~~~ f_9 = (e_4, t,y |e_9).
	\end{align}
There are now three exceptional divisors of the ambient space $Y_9$ to consider, with the following topologies:
	\begin{align}
		\begin{split}
		%	\boldsymbol{E}_7 ~&\cong ~ \mathbb P(\mathcal O_{\mathbb P^1}(-1)\oplus \mathcal O_{\mathbb P^1} (-1) \oplus \mathcal O_{\mathbb P^2} (-1))  \rightarrow \mathbb P(\mathcal O_{\mathbb P^1} \oplus \mathcal O_{\mathbb P^1} \oplus \mathcal O_{\mathbb P^1}(-2) ) \rightarrow \mathbb P^1\\
		{E}_7 ~&\cong ~ \text{Bl}_{V(\tilde e_3 , \tilde t ,\tilde y )} \,\mathbb P(\mathcal O_{\mathbb P^1} \oplus \mathcal O_{\mathbb P^1} \oplus \mathcal O_{\mathbb P^1}(-2) )  \rightarrow S\cap T\\
			{E}_8 ~&\cong  ~\text{Bl}_{V( \tilde e_4 , \tilde t , \tilde y)} \,\mathbb P(\mathcal O_{\mathbb P^1} \oplus \mathcal O_{\mathbb P^1} \oplus \mathcal O_{\mathbb P^1}(-2) ) \rightarrow S\cap T\\
			{E}_9 ~&\cong ~ ~~~~~~~~~~~~\,\, \mathbb P(\mathcal O_{\mathbb P^1} \oplus \mathcal O_{\mathbb P^1} \oplus \mathcal O_{\mathbb P^1}(-2)) \rightarrow S\cap T,
		\end{split}	
	\end{align}
where the fiber coordinates of the above exceptional divisors are
	\begin{align}
		\begin{split}
			{E}_7~&: ~[\tilde t: \tilde y: \tilde e_3 ] ~,~ [ e_8 \tilde t: e_8 \tilde y: \tilde e_2 ] ~,~[  e_8\tilde e_3: \tilde e_1] \\
			{E}_8~&:~  [\tilde t: \tilde y: \tilde e_4 ] ~,~ [ e_9 \tilde t: e_9 \tilde y: \tilde e_3] ~,~[ e_9 \tilde e_4 : \tilde e_7]\\
			{E}_9 ~&:~ [ \tilde t: \tilde y: \tilde e_4] ~,~ [\tilde e_5 :\tilde e_8].
				\end{split}
	\end{align}
	Intersecting the above exceptional divisors with the CY hypersurface $ X_9$ (i.e. the proper transform under the three blowups indicated in \cref{eqn:F6codim2blowups}), we see that the three additional blowups have produced the three divisors
	\begin{align}
		\hat D_7 = {E}_7 \cap {X}_9,~~~~ \hat D_8 = {E}_8 \cap {X}_9,~~~~ \hat D_9 = {E}_9 \cap {X}_9.
	\end{align}
Each of the divisors $\hat D_7, \hat D_8, \hat D_9$ has the topology of a surface fibered over $S\cap T \subset B$, where the fibers are described by the following equations:	
	\begin{align}
	\begin{split}
		\hat D_7 ~&:~V(\tilde y^2 - \alpha \tilde t^2 - \tilde e_1\tilde e_2^2 \tilde e_3^3 e_8 ) ~\subset~{E}_7\\
		\hat D_8 ~&:~ V( \tilde y^2 - \alpha \tilde t^2)~ \subset ~ {E}_8 \\
		\hat D_9 ~&:~V( \tilde y^2 - \alpha \tilde t^2 - a_{4,6} \tilde e_4^2 \tilde e_5 \tilde e_8) ~\subset~ {E}_9.
	\end{split}
	\end{align}
We next study the vertical intersection pairing for $X_9$, by working out the nondegenerate part of the matrix of quadruple intersection numbers. Let us work in the basis of divisors
	\begin{align}
		\begin{split}
					\boldsymbol{D}_1 &= \boldsymbol{E}_1 - \boldsymbol{E}_2 \\
					\boldsymbol{D}_2 &= \boldsymbol{E}_2 - \boldsymbol{E}_3 - \boldsymbol{E}_7\\
					\boldsymbol{D}_3 &= \boldsymbol{E}_3 - \boldsymbol{E}_4 - \boldsymbol{E}_8\\
					\boldsymbol{D}_4 &= \boldsymbol{E}_4 - \boldsymbol{E}_5 -\boldsymbol{E}_9\\
					\boldsymbol{D}_5 &= \boldsymbol{E}_5 - \boldsymbol{E}_6 \\
					\boldsymbol{D}_{i = 6,\dots ,9} &= \boldsymbol{E}_{i = 6,\dots,9},
		\end{split}
	\end{align}
with
	\begin{align}
		\boldsymbol{X}_9 &= 3 \boldsymbol{H} + 6 \boldsymbol{L} - 2\sum_{i=1}^{9} \boldsymbol E_i.
	\end{align}
(Note that although these divisors could themselves still contain additional residual singularities of $X_9$, they do not affect the qualitative nature of the present discussion and hence we need not worry about them.) In terms of the above basis, we observe that a number of new vertical surfaces have been introduced,
	\begin{align}
		\begin{split}
			&\hat D_{1} \cap \hat D_7,~~~~\hat D_{2}\cap \hat D_7,~~~~\hat D_{3}\cap \hat D_7\\
			&\hat D_{3}\cap \hat D_8,~~~~ \hat D_{4}\cap \hat D_8 ,~~~~\hat D_{7} \cap\hat D_8\\
			&\hat D_{4}\cap \hat D_9,~~~~\hat D_{5}\cap \hat D_9 , ~~~~\hat D_8\cap \hat D_9,
		\end{split}
	\end{align}
possibly not all independent in homology. We next compute the pushforward of the generating function $\exp(\sum \phi_I \boldsymbol{D}_I) \cap \boldsymbol X_9$ with respect to the composition of maps $\varpi \circ f_1 \circ \cdots \circ f_9 : Y_9 \rightarrow B$. The  pushforward of this function can again be easily evaluated using the algorithm and \emph{Mathematica} package of \cite{Jefferson:2022xft}. Since the resulting expression is rather cumbersome, we do not include it here. 

With the generating function of intersection numbers in hand, we then proceed to compute a matrix representation of the nondegenerate part of the matrix of quadruple intersection numbers in \cref{eqn:Mmatrix}:
	\begin{align}
		\begin{bmatrix} m & \vec v_{37} & \vec v_{48} & \vec v_{59} \\
		\vec v_{37}^{\mathrm t} & -2 S^2 T & & \\
		\vec v_{48}^{\mathrm t} &&-2 S^2 T& \\
		\vec v_{59}^{\mathrm t} &&& - 4 S T (L- 2 S)
		\end{bmatrix},~~~~IJ = \alpha 0, \alpha \beta ,\alpha i, 37,48,59,
	\end{align}
where 
	\begin{align}
		m =  D_\alpha D_{\alpha'}\left(  \begin{bmatrix} - L & D_\beta \\
		 D_{\beta'} &0 \end{bmatrix} \oplus (-S)\begin{bmatrix}
\begin{array}{ccccccccc}
 4 & -2 & 0 & 0 & 0 & 0 & 0 & 0 & 0 \\
 -2 & 4 & -2 & 0 & 0 & 0 & 0 & 0 & 0 \\
 0 & -2 & 4 & -2 & 0 & 0 & 0 & 0 & 0 \\
 0 & 0 & -2 & 4 & -2 & 0 & 0 & 0 & 0 \\
 0 & 0 & 0 & -2 & 4 & -2 & 0 & 0 & 0 \\
 0 & 0 & 0 & 0 & -2 & 2 & 0 & 0 & 0 \\
 0 & 0 & 0 & 0 & 0 & 0 & 0 & 0 & 0 \\
 0 & 0 & 0 & 0 & 0 & 0 & 0 & 0 & 0 \\
 0 & 0 & 0 & 0 & 0 & 0 & 0 & 0 & 0 \\
\end{array}
\end{bmatrix}  \right)
	\end{align}
and the gluing vectors associated to the lower right $3 \times 3$ diagonal block matrix are
	\begin{align}
		\vec v_{37} = D_\alpha S T \begin{bmatrix} 0\\0\\0\\2\\-2\\0\\0\\0\\-2\\2\\0\end{bmatrix},~~~~\vec v_{48} = D_\alpha S T \begin{bmatrix} 0\\0\\0\\0\\2\\-2\\0\\0\\0\\-2\\2\end{bmatrix},~~~~ \vec v_{59} = D_\alpha S T \begin{bmatrix} 0\\0\\0\\0\\0\\2\\0\\0\\0\\0\\-2\end{bmatrix}.
	\end{align}
We next consider the nondegenerate part of the matrix of quadruple intersections restricted to a sublattice of M-theory vertical flux background that to F-theory flux vacua with unbroken Sp(6) gauge symmetry. We find that the in the restricted basis $IJ = \alpha 7, \alpha 8, \alpha 9,37,48,59$, the matrix of quadruple intersections takes the form
	\begin{align}
	\tilde m =	
			2 S T \begin{bmatrix}0  &\vec u_{37} & \vec u_{48} & \vec u_{59} \\
		\vec u_{37}^{\mathrm t} &T-S & 0&-T \\
		\vec u_{48}^{\mathrm t} &0&T-S&-T \\
		\vec u_{59}^{\mathrm t} &-T&-T& 2 (-L + 2 S + 2 T)
		\end{bmatrix}
	\end{align}
with ``restricted'' gluing vectors
	\begin{align}
		\vec u_{37}=\begin{bmatrix}
			-D_\alpha \\
			D_\alpha \\ 0
		\end{bmatrix},~~~~ \vec u_{48} &=\begin{bmatrix}
		0\\
			-D_\alpha \\
			D_\alpha 
		\end{bmatrix},~~~~\vec u_{59} =\begin{bmatrix}
			0\\
			0 \\ -D_\alpha
		\end{bmatrix}.
	\end{align}
Evidently, the three additional blowups in \cref{eqn:F6codim2blowups} have introduced the vertical cycles
	\begin{align}
	\begin{split}
		&[\hat D_\alpha \cap \hat D_7], ~~~~ [\hat D_3 \cap \hat D_7]\\
		&[\hat D_\alpha \cap \hat D_8], ~~~~ [\hat D_4 \cap \hat D_8]\\
		&[\hat D_\alpha \cap \hat D_9], ~~~~ [\hat D_5 \cap  \hat D_9].
	\end{split}
	\end{align}
\subsubsection{$(4,6)$ point}
We next consider the case of $(4,6)$ point $S\cap T \cap U \subset B$, which we construct by imposing the following vanishing behavior:
	\begin{align}
		a_2 = \alpha  t u,
	\end{align}
where $V(t)=T, V(u )=U \subset B$. The Sp(6) model after the six blowups in \cref{eqn:Sp6fgD} now exhibits residual singularities along the reducible locus
	\begin{align}
	\label{eqn:singX6}
		V(e_2 e_3 e_4 , t ,u ,y) \subset Y_6.
	\end{align}
We attempt to resolve these singular fibers by means of a sequence of weighted blowups, beginning with the following two blowups:
	\begin{align}
	\label{eqn:Sp6extratwo}
		f_7 \circ f_8: Y_8 \rightarrow Y_6,~~~~ f_7 = (t,u,e_2,y|e_7)_{\frac{1}{2}, \frac{1}{2},\frac{1}{2},\frac{1}{2}},~~~~ f_8 = (t,u,e_4,y|e_8)_{\frac{1}{2},\frac{1}{2},\frac{1}{2},\frac{1}{2}}.
	\end{align}
After these two additional blowups, there is still a residual singularity of the proper transform $X_8$ along
	\begin{align}
		V(e_3, t , u , y )\subset Y_8.
	\end{align}	
Unfortunately, there do not seem to exist any weighted global crepant blowups centered at the above locus. To see this, suppose there exists a weighted blowup with blowdown map
	\begin{align}
		f_9 \overset{\text{?}}{=} (t,u,e_3,y|e_9)_{w_1,w_2,w_3,w_4},~~~~ w_{a=1,\dots,4} \in \mathbb Q_{>0}.
	\end{align}
We can explicitly check whether such a set of positive rational weights $w_a$ exists by imposing the following conditions:
	\begin{itemize}
		\item \emph{the blowup is crepant, i.e. the proper transform is obtained by factoring out $\sum_a w_a -1 $ copies of the exceptional locus;}
		\item \emph{and the section defining $X_9$ is well-defined, i.e. comprises monomials constructed from variables with integers powers}. 
	\end{itemize} 
To check these criteria, we use the fact that the total transform of $X_9$ is defined by the vanishing of the following polynomial:
	\begin{align}
		\begin{split}
			&e_9^{w_1 + w_2 + w_3 + w_4 -1} (-a_{4,6} e_1^5 e_2^4 e_3^3 e_4^2 e_5 e_7 s^6 e_9^{1-w_1-w_2-w_3+2w_4}x z^2 \\&-a_{6,12}e_1^{10} e_2^8 e_3^6 e_4^4 e_5^2 e_7^3 e_8  e_9^{1-w_1-w_2-w_3+5w_4}s^{12} z^3 -\alpha e_9^{1-w_1-w_4}  t u x^2 z\\
			&-e_1 e_2^2 e_3^3 e_4^4 e_5^5 e_6^6 e_8  e_9^{1-w_1-w_2-w_3+2w_4}x^3 +e_9^{1+w_1-w_2-w_3-w_4}y^2 z ).
		\end{split}
	\end{align}
This implies that the weights $w_a \in \mathbb Q_{>0}$ must also satisfy
	\begin{align}
		\begin{split}
			&w_1 + w_2 + w_3 + w_4 - 1 ,~~~~
			1 - w_1 - w_2 -w_3 + 2 w_4 , ~~~~1 - w_1 -w_2 -w_3 + 5 w_4 \\
		&1-w_1 - w_4 ,~~~~1+ w_1 - w_2 - w_3 -w_4~~ \in~~ \mathbb Z_{\geq 0}.
		\end{split}
	\end{align}
The first thing to notice is that these conditions imply that
	\begin{align}
		w_1 + w_4 = 1,
	\end{align}
which we can use to simplify the above system of inequalities. Defining $r_2 = -w_2-w_3+3 w_4$, the above system implies
	\begin{align}
		r_2 ,~~~~ 3 w_4 + r_2 ,~~~~ -5 w_4 + r_2 + 2 ~~\in ~~ \mathbb Z_{\geq 0},
	\end{align}
which is clearly impossible if $w_4 < 1$ since $\text{gcd}(3,-5) =1$. 

It is possible that there is a suitable alternative sequence of potentially weighted blowups that resolves the singular locus in \cref{eqn:singX6}, but for now we simply content ourselves to study the qualitative features of this realization of the Sp(6) model after the two additional blowups in \cref{eqn:Sp6extratwo}---namely, the appearance of additional vertical surfaces formed exclusively out of the pairwise intersections of ``Cartan'' divisors $\hat D_{j=1,\dots,8}$. Thus, we proceed with a provisional analysis of the partial resolution (i.e. eight blowups) of the Sp(6) model, keeping in mind that there are residual singularities that could change some details of the analysis once appropriately resolved. 

The two exceptional divisors of the ambient space $Y_8$ have the following topologies:
	\begin{align}
	\begin{split}
		{E}_7 ~&\cong ~ \mathbb P( \mathcal O_{\mathbb P^1} \oplus \mathcal O_{\mathbb P^1} \oplus \mathcal O_{\mathbb P^1} \oplus \mathcal O_{\mathbb P^1}(-2) )\rightarrow S \cap T \cap U\\
		{E}_8 ~ &\cong ~ \mathbb P( \mathcal O_{\mathbb P^1} \oplus \mathcal O_{\mathbb P^1} \oplus \mathcal O_{\mathbb P^1} \oplus \mathcal O_{\mathbb P^1}(-2) ) \rightarrow S\cap T \cap U , 
	\end{split}
	\end{align}
where the fiber homogeneous coordinates of the above exceptional divisors are 
	\begin{align}
		\begin{split}
			{E}_7 ~&:~ [ \tilde t: \tilde u : \tilde y : \tilde e_2] ~,~ [\tilde e_1 : \tilde e_3] \\
			{E}_8 ~&:~ [ \tilde t : \tilde u : \tilde y : \tilde e_4] ~,~ [ \tilde e_3 : \tilde e_5]. 
		\end{split}
	\end{align}
Intersecting the above exceptional divisors with the CY hypersurface $ {X}_8$ (i.e. the proper transform of the partial resolution $X_6$ of the Sp(6) model under the two additional blowups indicated in \cref{eqn:Sp6extratwo}), we see that the two additional blowups have produced the divisors 
	\begin{align}
		\hat D_7 = {E}_7 \cap  X_8,~~~~\hat D_8 = {E}_8 \cap  X_8.
	\end{align}
Both divisors $\hat D_7, \hat D_8$ have the topology of a 3-fold (over the point $S \cap T \cap U \subset B$), where the fiber components of the non-flat fibration are described by the following equations:
	\begin{align}
		\begin{split}
			\hat D_7 ~&:~ V(\tilde y^2 - \tilde e_1 \tilde e_2^2 \tilde e_3^3 -\alpha \tilde t \tilde u ) ~\subset~  E_7\\
			\hat D_8 ~&:~ V(\tilde y^2 - \alpha e_7 t u  - a_{4,6} \tilde e_3^3 \tilde e_4^2 \tilde e_5 ) ~ \subset ~  E_8. 
		\end{split}	
	\end{align}
We next study the vertical intersection pairing for $X_8$, following the same procedure as in the previous subsection, but this time using the basis 
	\begin{align}
				\begin{split}
					\boldsymbol{D}_1 &= \boldsymbol{E}_1 - \boldsymbol{E}_2 \\
					\boldsymbol{D}_2 &= \boldsymbol{E}_2 - \boldsymbol{E}_3 - \frac{1}{2} \boldsymbol{E}_7\\
					\boldsymbol{D}_3 &= \boldsymbol{E}_3 - \boldsymbol{E}_4\\
					\boldsymbol{D}_4 &= \boldsymbol{E}_4 - \boldsymbol{E}_5 - \frac{1}{2} \boldsymbol{E}_8\\
					\boldsymbol{D}_5 &= \boldsymbol{E}_5 - \boldsymbol{E}_6 \\
					\boldsymbol{D}_{i = 6,\dots ,8} &= \boldsymbol{E}_{i = 6,\dots,8},
		\end{split}
	\end{align}
with	
	\begin{align}
		\boldsymbol X_8 = 3 \boldsymbol H + 6 \boldsymbol L - 2 \sum_{i=1}^{6} \boldsymbol E_i - \boldsymbol{E}_7 -\boldsymbol{E}_8. 
	\end{align}
Note that in terms of the above basis, we find that a number of new vertical surfaces have been introduced:
	\begin{align}
		\begin{split}
			&\hat D_1\cap  \hat D_7,~~~~ \hat D_2 \cap \hat D_7,~~~~\hat D_3 \cap \hat D_7\\
			&\hat D_3 \cap  \hat D_8,~~~~\hat D_4 \cap \hat D_8,~~~~\hat D_5 \cap \hat D_8,
		\end{split}	
	\end{align}
where each of the divisors has the topology of a complex surface over the point $S\cap T \cap U \subset B$. We retrieve the quadruple intersection numbers from the pushforward of the generating function $\exp(\sum \phi_I \boldsymbol D_I)  \boldsymbol X_8$ with respect to the map $\varpi \circ f_1 \circ \cdots \circ f_8 :Y_8 \rightarrow B$, again applying the results of \cref{sec:push} to the \emph{Mathematica} package in \cite{Jefferson:2022xft}. We compute a representative of the nondegenerate part of the matrix of quadruple intersection numbers in the basis
	\begin{align}
	IJ = \alpha 0, \alpha \beta, \alpha i, 12,22,44,45.
	\end{align}
Restricting to the sublattice of M-theory flux backgrounds lifting to four-dimensional F-theory flux vacua preserving Sp(6) gauge symmetry, we find that the restricted matrix of quadruple intersection numbers takes the form
	\begin{align}
		2 ST U\left( \begin{bmatrix}  &-1 \\ -1 & 6 \end{bmatrix} \oplus \begin{bmatrix} 6 & -1 \\ -1 & \end{bmatrix} \right).
	\end{align}
Evidently, in this case, the additional two blowups in \cref{eqn:Sp6extratwo} produce new vertical surfaces with the following homological representatives in $H_{2,2}^{\text{vert}}(X_8)$:
	\begin{align}
		[\hat D_1 \cap \hat D_2],~~~~ [\hat D_2 \cap \hat D_2],~~~~ [ \hat D_4 \cap \hat D_4],~~~~[ \hat D_4 \cap \hat D_5]. 
	\end{align}
Although we anticipate that a third additional blowup (i.e. a ninth blowup), centered at $V( t , u ,e_3 , y)$, would introduce even more vertical surfaces; we leave the identification of a suitable blowup centered at this locus to future work.

\section{Discussion and Future Directions}

Motivated by computing topological intersection numbers of crepant resolutions of singular Calabi-Yau spaces relevant for string theory (in particular, F-theory) compactifications to various dimensions, we proved in \cref{main,maincor} a generalization of the pushforward formulae of \cite{Fullwood:2012kj,Esole:2017kyr}. The strategy for the proof of \cref{main} was to first prove the corresponding pushforward formula for a special case, namely a weighted blowup of affine space at the origin, and then show that the results pull back to the general case of a weighted blowup of an algebraic variety centered at a complete intersection of smooth hypersurfaces. The proof of the special case made use of the fact \cite{weighted} that a weighted blowup of the origin in affine space can be regarded, roughly, as an ordinary blowup of a particular sequence of root stacks followed by a sequence of ``de-rootings'' that leaves the exceptional divisor unchanged. 

These generalized pushforward formulae, given in \cref{eqn:main,eqn:maincor}, express the intersection products of the Chow ring class of the exceptional divisor of a weighted blowup of an arbitrary algebraic variety, centered at a complete intersection of smooth hypersurfaces intersecting locally transversally, in terms of intersection products of the classes of the hypersurfaces in the base of the blowup. We have extensively checked these generalized pushforward formulae by using them to compute numerous intersection numbers of weighted blowups and comparing the results to independent computations. In the process, we have performed the first known computation of the generating function of intersection numbers for weighted blowups of the SU(5) model defined over an arbitrary base. We have also performed the first known computation of the generating functions of intersection numbers of weighted blowups of the F$_4$ and Sp(6) models over arbitrary bases, constrained to exhibit so-called (4,6) elliptic fiber singularities over codimension-two and codimension-three components of the discriminant locus. We found that the crepant resolutions of the codimension-two and codimension-three non-minimal singularities in these cases led to nonflat fibrations in which some of the irreducible components of the singular elliptic fibers are divisors (i.e., codimension-one subvarieties in the blown-up Calabi-Yau space).

There are a number of interesting physical and mathematical future research directions facilitated or at least suggested by the results of this paper. We comment on a few of these directions:

\begin{itemize}
 \item \underline{Four-dimensional chiral matter at (4,6) points.} A conservative approach to four-dimensional F-theory model-building is to exclude Calabi-Yau backgrounds with (4,6) singularities. However, recent work has pointed out that in some cases the presence of (4,6) points does not necessarily spoil the desired phenomenology \cite{achmed2018note}. Moreover, it was emphasized in \cite{Jefferson:2021bid} that resolutions of (4,6) singularities lead to non-flat fibrations with elliptic fibers containing additional vertical surfaces supporting vertical four-form fluxes in M-theory, indicating the intriguing possibility that (4,6) points may  signal the presence of (possibly exotic) massless chiral matter in the F-theory limit. Ongoing \cite{JLT} and future work exploring this possibility may benefit substantially from the availability of mathematical techniques for analyzing weighted blowups resolving (4,6) loci.
 
 \item \underline{Exact quantization of M-theory four-form fluxes.} Another strong motivation for developing a better mathematical understanding of vertical four-form fluxes and their relation to (4,6) loci is developing a means to compute the exact (half) integer-quantized lattice of M-theory four-form flux backgrounds $H^{2,2}(X, \mathbb R) \cap H^4(X,\mathbb Z/2)$, which remains unknown for the vast majority of smooth Calabi-Yau fourfolds $X$ potentially relevant for Standard Model-like constructions in string theory. In \cite{Jefferson:2021bid}, it was conjectured that the intersection pairing restricted to the vertical subgroup of middle cohomology, which can be computed in terms of quadruple intersection numbers, is a resolution-invariant quantity that potentially sheds light on this problem. We anticipate that the mathematical technology developed in this paper, which provides a means to explicitly compute the intersection pairing associated to a much broader set of (4,6) singularities (which seem to be rather prevalent among four-dimensional F-theory vacua), will enable useful progress to be made towards this goal in both ongoing \cite{JKT} and future work. 

\item{} \underline{3-fold components of non-flat fibers.} Given the apparent necessity of weighted blowups for global crepant resolutions of higher codimension singularities in Calabi-Yau varieties, the computations performed in \cref{sec:4DFlux} in the context of the F$_4$ and Sp(6) models take a major first step towards a systematic exploration of (4,6) singularities in general four-dimensional F-theory vertical flux vacua. Although there are a plethora of unanswered questions regarding the physics of higher-codimension (4,6) singularities in four dimensional compactifications---both with and without fluxes---one immediate puzzle that has emerged from the analysis in \cref{sec:4DFlux} is the possibility that weighted blowups can lead to nonflat fibrations in which certain singular fibers contain various types of divisors (i.e., complex dimension is three) as irreducible components. It remains to be seen whether or not there is interesting physics directly associated to these divisor components (corresponding, e.g., to the physics of M2-branes, M5-branes, or Euclidean M5-instantons wrapping (resp.) holomorphic 2, 4, or 6-cycles), as well as whether such degrees of freedom lift to massless/tensionless degrees of freedom in the F-theory limit or are alternatively either obstructed by anomalies or lifted by some sort of mass-generation mechanism. 

\item{} \underline{Further generalization of the pushforward formulae in \cref{sec:push}.} The pushforward formulae proved in \cref{main,maincor} are proved for weighted blowups centered at a complete intersection of smooth hypersurfaces meeting locally transversally. It would be interesting to try and prove the vailidity of the resulting formulae for a more general class of blowups. 

\item{} \underline{Fractional weighted blowups.} Finally, although we have extensively used ``fractional'' weighted blowups as a means to identify crepant blowups in a larger class of examples, as mentioned in \cref{rem:frac} the underlying mathematics is perhaps not fully understood (at least to the authors), and may admit a more intrinsic and precise mathematical explanation in terms of root stacks. It would be interesting to determine to what extent there exists a mathematically precise definition of fractional weighted blowups, and if so, what its broader implications would be for the subject of weighted blowups. 

\end{itemize}

We hope to further explore the above topics in future work.

\label{sec:discussion}

\section*{Acknowledgements}
We are indebted to Dan Abramovich for his guidance and encouragement on this project, and we are extremely grateful to Paolo Aluffi for enabling this collaboration. We thank Dan Abramovich, Paolo Aluffi, and Andrew P. Turner for valuable comments on earlier versions of this paper. PJ thanks Manki Kim, Shing Yan Li, Paul-Konstantin Oehlmann, Washington Taylor, and Andrew P. Turner for useful discussions. VA and SO were supported in part by funding from BSF grant 2018193 and NSF grant DMS-2100548. PJ was supported by DOE grant DE-SC00012567. 

\appendix

\section{Crepant Weighted Blowup Condition}
\label{app:crepant}
Let $Z \subset Y$ be the complete intersection of $d$ nonsingular hypersurfaces $Z_1, \dots, Z_d$ meeting transversally in $Y$. Let $ f:  Y' \rightarrow Y$ be the weighted blowup of $Y$ centered at $Z$, with exceptional divisor $ E$ and weights $w_1, \dots,w_d$. The total Chern class of $ Y'$ is then \cite{article}
	\begin{align}
		c(T  Y') = (1+\boldsymbol E) \prod_{i=1}^d \frac{1 + \boldsymbol Z_i - w_i \boldsymbol E}{1 + \boldsymbol Z_i}c(TY),
	\end{align}
which implies 
	\begin{align}
		c_1(T Y') = (1-\sum_{i=1}^d w_i )\boldsymbol E + c_1(TY).
	\end{align}
Now suppose $ X \subset  Y$ is a complete intersection $X_1 \cap \cdots \cap X_p$, and let the proper transform $ X' \subset  Y'$ be the complete intersection $ X' =  X_1' \cap \cdots \cap  X_p'$, where $n_i$ copies of the exceptional divisor have been subtracted from the class of each hypersurface, i.e.,
	\begin{align}
		\boldsymbol{X}'= \prod_{i=1}^p \boldsymbol{X}_i'  = \prod_{i=1}^p (\boldsymbol X - n_i \boldsymbol E).
	\end{align}
	 The total Chern class of $ X'$ is given by
	\begin{align}
		c( X') = c(T X') \cap \boldsymbol{X}'.
	\end{align}	
Assume that $ X'$ is smooth. Using adjunction, we have
	\begin{align}
		c(T  X') = \frac{c(T  Y')}{c(N_{ X'/ Y'} )},~~~~ N_{ X'/ Y'} = \prod_{i=1}^p (1 + \boldsymbol{X}_i'),
	\end{align}
hence
	\begin{align}
		c_1(T  X') = c_1(T  Y') -  \sum_{i=1}^p \boldsymbol{X}_i'
	\end{align}
The first Chern class of $ X'$ vanishes iff the above expression is equal to zero, i.e. if
	\begin{align}
	\begin{split}
	 		%0& = (1 - \sum_{i=1}^{d} w_i) E + c_1(TY) - (\sum_{i=1}^p [X_i] - n_i E) \\
			0&= (\sum_{j=1}^p n_j - (\sum_{i=1}^{d} w_i-1))\boldsymbol{E} + c_1(TY) - \sum_{i=1}^p \boldsymbol{X}_i.
	\end{split}
	\end{align}
	If $X$ is Calabi-Yau, i.e., if $c_1(TY) = \sum_i \boldsymbol{X}_i$, then the above equation implies the blowup is crepant if and only if 
	\begin{align}
	\label{eqn:crepantcond}
		\sum_{i=1}^p n_i = \sum_{i=1}^d w_i -1. 
	\end{align}

\bibliographystyle{elsarticle-num}
\bibliography{references}

\end{document}